\tikzset{default node/.style={
	draw, 
	circle,
	inner sep=0mm,
	minimum size=5mm,
	very thick,
	font=\small,
	black!70,
}}
\newcommand{\cout}[1]{}
\newcommand{\bbeta}{{\bar{\beta}}}
\newcommand{\floor}[1]{{\left\lfloor {#1} \right\rfloor}}
\newcommand{\ceil}[1]{{\left\lceil {#1} \right\rceil }}
\newcommand{\remove}[1]{}
\newcommand{\cA}{{\mathcal{A}}}
\newcommand{\supp}{{\textnormal{supp}}}
\newcommand{\cC}{{\mathcal{C}}}
\newcommand{\bx}{{\bar{x}}}
\newcommand{\bgam}{{\bar{\gamma}}}
\newcommand{\blam}{{\bar{\lambda}}}
\newcommand{\bnu}{{\bar{\nu}}}
\newcommand{\by}{{\bar{y}}}
\newcommand{\bz}{{\bar{z}}}
\newcommand{\bc}{{\bar{c}}}
\newcommand{\ba}{{\bar{a}}}
\newcommand{\OPT}{\textnormal{OPT}}
\newcommand{\ett}{\eta(T)}
\newcommand{\eps}{{\varepsilon}}
\newcommand{\aeps}{{\mu}}
\newcommand{\E}{{\mathbb{E}}}
\newcommand{\ariel}[1]{{\color{red} (Ariel :#1)}}
\def \II   {{\mathcal I}}
\newcommand{\one}{\mathbbm	{1}}
\newcommand{\type}{\textnormal{type}}
\newcommand{\pipage}{\textsf{Pipage}}
\newcommand{\cT}{\mathcal{T}}
\begin{document}

\newtheorem{thm}{Theorem}[section]
\newtheorem{prop}[thm]{Proposition}
\newtheorem{assm}[thm]{Assumption}
\newtheorem{lem}[thm]{Lemma}

\newtheorem{obs}[thm]{Observation}
\newtheorem{cor}[thm]{Corollary}
\newtheorem{lemma}[thm]{Lemma}
\newtheorem{proposition}[thm]{Proposition}
\newtheorem{claim}[thm]{Claim}
\newtheorem{defn}[thm]{Definition}
\newtheorem{definition}[thm]{Definition}

	\title{
		Modular and Submodular Optimization with 
		Multiple Knapsack Constraints via Fractional Grouping}
	\author{
		Yaron Fairstein\thanks{Computer Science Department, Technion, Haifa 3200003,
			Israel. \mbox{E-mail: {\tt yyfairstein@gmail.com}}}
		\and
		Ariel Kulik\thanks{Computer Science Department, Technion, Haifa 3200003,
			Israel. \mbox{E-mail: {\tt kulik@cs.technion.ac.il}}}
		\and
		Hadas Shachnai\thanks{Computer Science Department, Technion, Haifa 3200003,
			Israel. \mbox{E-mail: {\tt hadas@cs.technion.ac.il}}.
		}
	}
	\date{}
	\maketitle
\begin{abstract}
	
\cout{
A multiple knapsack constraint over a set of items is defined by a set of bins of arbitrary capacities, and a weight for each of the items. An assignment for the constraint is an allocation of subsets of items to the bins which 
adheres to bin capacities.
While problems involving a single knapsack constraint are well understood, relatively
little is known about their natural generalizations 
to problems 
involving 
multiple knapsack constraints.
A classic example is Multiple Choice Knapsack, that is 
known to admit a {\em fully polynomial time approximation scheme}, whereas the best known approximation for 
Multiple Choice Multiple Knapsack  is $2$.  Another example is Non-monotone Submodular Knapsack  for which a $(0.385-\eps)$-approximation is known, 
with no known approximation guarantee  for Non-monotone Submodular Multiple Knapsack.

In this paper we mitigate this gap. Our results include, among others, a {\em polynomial-time approximation scheme (PTAS)} for Multiple Choice Multiple Knapsack,  a $(0.385-\eps)$-approximation for Non-monotone Submodular Multiple Knapsack, 
 a $(1-e^{-1}-o(1))$-approximation for  Monotone Submodular Multiple Knapsack  with Uniform Capacities, and a PTAS
for modular optimization with two multiple knapsack constraints. We obtain our results using a novel {\em fractional grouping} technique, 
which partitions items into groups based on a fractional solution for the given instance. Subsequently, items of the same group can be treated as if they have identical weights. 
}

A multiple knapsack constraint over a set of items is defined by a set of bins of arbitrary capacities, and a weight for each of the items. An assignment for the constraint is an allocation of subsets of items to the bins which 
adheres to bin capacities. In this paper we present a unified algorithm
that yields efficient approximations for a wide class of submodular and modular optimization problems involving multiple knapsack constraints.
One notable example is a {\em polynomial time approximation scheme} for Multiple-Choice Multiple Knapsack, improving upon the best known ratio of $2$.
Another example is Non-monotone Submodular Multiple Knapsack, for which we 
obtain a $(0.385-\eps)$-approximation, matching the best known ratio for a single knapsack constraint.  The robustness of our algorithm is achieved by applying
a novel {\em fractional} variant of the classical linear grouping technique, which is of independent interest.

\end{abstract}


\section{Introduction}
\label{sec:intro}

The Knapsack problem is one of the most studied problems in mathematical programming and 
combinatorial optimization, with applications ranging from power management 
and production planning, to blockchain storage allocation and key generation in cryptosystems~\cite{MT90,KPP04,van2010performance,xu2018blockchain}.
In a more general form, knapsack problems require assigning items of various sizes (weights) to a set of bins (knapsacks) of bounded capacities. The bin capacities then constitute
the hard constraint for the problem. 
Formally, a {\em multiple knapsack constraint} (MKC) over a set of items is defined by a collection of bins of varying capacities and a non-negative weight for each item.
A feasible solution for the constraint is an assignment of subsets of items to the bins, 
such that the total weight of items assigned to each bin does not exceed its capacity.
This constraint plays a central role in the classic 
Multiple Knapsack problem \cite{CK05,Ja10, Ja12}.
The input is an MKC and each item also has a profit. The objective is to find a feasible solution for the MKC such that the total profit of assigned items is maximized.

Multiple Knapsack can be viewed as a maximization variant of the Bin Packing problem~\cite{KK82,FL81}. In Bin Packing we are given a set of items,
each associated with non-negative weight. We need to pack the items into a minimum number of identical (unit-size) bins.

A prominent technique for approximating Bin Packing is {\em grouping}, which decreases the number of distinct weights in the input instance.
Informally, a subset of items is partitioned into groups $G_1, \ldots, G_{\tau}$, and all the items within a group are treated as if they have the 
{\em same} weight (e.g., \cite{FL81,KK82}).
By properly forming the groups, the increase in the 
number of bins required for packing the instance
 can be bounded. Classic grouping techniques require knowledge of the items to be packed, and thus cannot be easily applied in the context of  maximization problems, and specifically for a multiple knapsack constraint.

\cout{
While grouping is a powerful technique in the context of Bin Packing, it cannot be trivially applied to problems involving a multiple knapsack constraint, as it assumes knowledge of 
the set of items to be packed $-$ an assumption which
does not hold for
 maximization problems.}

The main technical contribution of this paper is the introduction of {\em fractional grouping}, a variant of linear grouping which can be applied to multiple knapsack constraints. 
Fractional Grouping partitions the items into groups using  an easy to obtain fractional solution, bypassing the requirement to know the items in the solution.

Fractional Grouping proved to be a  robust technique for maximization problems. We use the technique to obtain, among others, 
 a {\em polynomial-time approximation scheme (PTAS)} for the Multiple-Choice Multiple Knapsack Problem,  a $(0.385-\eps)$-approximation for non-monotone submodular maximization with a multiple knapsack constraint, and a $\left(1-e^{-1}-o\left( 1 \right) \right)$-approximation for the Monotone Submodular Multiple Knapsack Problem with Uniform Capacities.
 


\subsection{Problem Definition}
\label{sec:prob_def}
We first define formally key components of the problem studied in this paper. 

A {\em multiple knapsack constraint} (MKC) over a set $I$ of items, denoted by $\mathcal{K}=(w,B,W)$, is defined by a weight function $w:I\rightarrow \mathbb{R}_{\geq 0}$, a set of bins $B$ and bin capacities given by $W:B\rightarrow \mathbb{R}_{\geq 0}$. An {\em assignment} for the constraint is a function $A:B\rightarrow 2^I$ which assigns a subset of items to each bin. An assignment $A$ is {\em feasible} if $\sum_{i\in A(b)} w(i) \leq W(b)$ for all $b\in B$. We say that $A$ is an {\em assignment of $S$} if $S=\bigcup_{b\in B} A(b)$. 


A set function $f:2^I\rightarrow \mathbb{R}$ is {\em submodular}  if for any $S\subseteq T\subseteq I$ and $i\in I\setminus T$ it holds that  $f(S \cup \{i\}) - f(S) \geq f(T \cup \{i\}) - f(T)$.\footnote{Alternatively, for every $S,T\subseteq I$: $f(S)+f(T)\geq f(S\cup T) + f(S\cap T)$.} 
Submodular functions naturally arise in 
numerous settings.
 While many submodular functions, such as coverage \cite{Fe98} and  matroid rank function \cite{CCPV07}, are monotone, 
i.e., for any $S\subseteq T \subseteq I$, $f(S)\leq f(T)$,
this is not always the case (cut functions \cite{FG95} are a classic example).
 A special case of submodular functions is {\em modular} (or, {\em linear}) functions
  in which, for any $S\subseteq T\subseteq I$ and $i\in I\setminus T$, we have $f(S \cup \{i\}) - f(S) = f(T \cup \{i\}) - f(T)$.
  
\cout{
We consider problems involving submodular maximization with a constant number of MKCs and  possibly an  additional
 constraint $S\in \II$ for some $\II\subseteq 2^I$.
The additional constraint relates to
the properties of the objective function $f$. 
A pair $(\II,f)$ is {\em valid} if $\II\subseteq 2^I$, $f:2^I\rightarrow \mathbb{R}_{\geq 0}$ is submodular and one of the following holds:
\begin{enumerate}
	\item 
	$\II=2^I$ (i.e., $\II$ does not represent a constraint).
	\item The function $f$ is  monotone and  $\II$ defines the independent sets of a matroid.\footnote{A formal definition for matroid can be found in~\cite{Sc03}}
	\item The function $f$ is modular and $\II$ defines
	 the independent sets of a matroid,
	 the intersection of the independent sets of two matroids, or a matching.\footnote{$\II$ is a matching if there is a graph $G=(V,I)$, and $S\in \II$ {\em iff} $S$ is a matching in $G$.}

\end{enumerate}
The definition of a valid pair $(\II,f)$ reflects a technical limitation of our algorithm.
}

The problem of 
{\em Submodular Maximization with $d$-Multiple Knapsack Constraints ($d$-MKCP)} is defined as follows. 
The input is $\mathcal{T} = \left(I, \left( \mathcal{K}_t\right)_{t=1}^d, \II, f \right)$, where $I$ is a set of items,  $\mathcal{K}_t$, $1\leq t\leq d$ are  $d$ MKCs over $I$, $\II\subseteq 2^I$ and $f:2^I\rightarrow \mathbb{R}_{\geq 0}$ is a non-negative submodular function.
$\II$ is an additional constraint which can be one of the following:
$(i)$ 
$\II=2^I$, i.e., any subset of items can be selected. $(ii)$ $\II$ is the independent set of a matroid,\footnote{A formal definition for matroid can be found in~\cite{Sc03}.} or $(iii)$ $\II$ is the intersection of independent sets of two matroids, or $(iv)$ $\II$ is a matching.\footnote{$\II$ is a matching if there is a graph $G=(V,I)$, and $S\in \II$ {\em iff} $S$ is a matching in $G$.} A solution for the instance is $S\in \II$ and $(A_t)_{t=1}^{d}$, where $A_t$ is a feasible assignment of $S$ w.r.t $\mathcal{K}_t$ for $1\leq t\leq d$. The value of the solution is $f(S)$, and the objective is to find a solution of maximal value.

We assume the function $f$ is given via a value oracle.
We further assume that the input indicates the type of constraint that  $\II$ represents. 
Finally, $\II$ is given via a
membership oracle, and if $\II$ is a matroid intersection, a 
membership oracle is given for each matroid.

We refer to the special case in which $f$ is monotone (modular) as monotone (modular) $d$-MKCP. Also, we use non-monotone $d$-MKCP when referring to general  $d$-MKCP instances. Similarly, we refer to the special case in which  $\II$ is an independent set of a  matroid (intersection of independent sets of two matroids or a matching) as $d$-MKCP with a matroid (matroid intersection or matching) constraint. If $\II=2^I$ we refer to the problem as $d$-MKCP with no additional constraint. Thus, for example, in instances of modular $1$-MKCP with a matroid constraint the function $f$ is modular and $\II$ is an independent set of a matroid.

Instances of $d$-MKCP naturally arise in various settings (see a detailed example in  Appendix~\ref{app:applications}). 

\cout{
To give some intuition about the class of problems studied in this paper, we describe below an application for
 modular $2$-MKCP. A cloud provider needs to select a collection of programs to host. Each program has a bandwidth and compute requirements as well as expected profit. Each selected program needs to be assigned to a server.
 The assignment must ensure that the overall compute requirements of programs assigned to a server does not exceed the server's compute capability.
Furthermore, the sum of bandwidth requirements of the selected programs cannot exceed the provider's bandwidth capacity. The provider needs to find a set of programs and an allocation of these applications to servers which adheres to the above compute and bandwidth constraints, and maximizes the expected profit. The problem can be cast as  $2$-MKCP in which $\mathcal{K}_1$ represents the compute constraint and $\mathcal{K}_2$ includes a single bin and represents the bandwidth constraint. 
}

\cout{
Many known problems are special cases of 
 $d$-MKCP, for example,  submodular maximization with a $d$-dimensional knapsack constraint \cite{KST13}, the Multiple Knapsack Problem~\cite{CK05,Ja10,Ja12}, the Monotone Submodular Multiple Knapsack Problem \cite{FKNRS20} and Multi-budgeted Matching~\cite{CVZ11}.
}
\subsection{Our Results}
\label{sec:our_results}
Our main results are summarized in the next theorem (see also Table \ref{tab:results}).
\begin{thm}
\label{thm:main}
For any fixed $d\in\mathbb{N}_+$ and $\eps>0$, there is
\begin{enumerate}
\item A randomized PTAS for modular $d$-MKCP $((1-\eps)$-approximation$)$. 
The same holds for this problem with 
a matroid constraint, matroid intersection constraint, or a matching constraint.
\item A polynomial-time random  $(1-e^{-1}-\eps)$-approximation for monotone $d$-MKCP with a matroid constraint.
\item A polynomial-time random $(0.385-\eps)$-approximation for non-monotone $d$-MKCP with no additional constraint. 
\end{enumerate}
\end{thm}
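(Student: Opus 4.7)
The plan is to establish all three claims of Theorem~\ref{thm:main} through a single algorithmic framework based on fractional grouping, and then specialize the rounding step to each setting. The starting point is an appropriate fractional relaxation: for modular $d$-MKCP one solves a natural configuration/assignment LP; for monotone submodular $d$-MKCP with a matroid constraint one runs a continuous-greedy procedure on the multilinear extension over the intersection of the matroid polytope and the MKC polytopes; for the non-monotone case one invokes the aided measured-continuous-greedy of Buchbinder--Feldman, which yields a fractional point of value at least $(0.385-\eps)\OPT$. In all cases the output is a fractional vector $\bx\in[0,1]^I$ whose (expected) value $f(\bx)$ comes within the desired factor of $\OPT$, but whose support may mix items of very different weights across the $d$ knapsacks, so direct rounding fails.

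The core step is then fractional grouping, applied to each constraint $\mathcal{K}_t$ separately. For each $t$ I would sort the items of $\supp(\bx)$ in decreasing order of $w_t$ and carve the list into $O(1/\eps)$ groups $G^t_1,\dots,G^t_{\tau}$ so that the total fractional $t$-mass $\sum_{i\in G^t_j} \bx_i\,w_t(i)$ of each group is at most $\eps\cdot \sum_{b\in B_t} W_t(b)$ (with one exceptional ``heavy'' group per constraint). Intersecting the $d$ groupings yields at most $O(1/\eps)^d=O_\eps(1)$ composite groups, and the main lemma to prove is that within each such group one may replace every weight vector by the maximum weight vector of the group while losing only an $\eps$-fraction of capacity. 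This is a direct consequence of the mass bound: rounding weights up inside a group inflates total consumption by at most an $\eps$-factor of the capacity. The resulting ``grouped'' MKC instance has only $O_\eps(1)$ distinct weight classes, and any near-optimal solution to the original instance induces a near-optimal solution to the grouped one.

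With the grouped instance in hand, each of the three results follows by choosing the matching rounding technique. For modular $d$-MKCP I enumerate, over the constantly many groups, an approximate ``profile'' specifying how much fractional mass is allocated from each group; within a fixed profile the problem reduces to picking items uniformly from each class subject to $\II$ (matroid, matroid intersection, or matching), which can be handled by swap/pipage rounding or by LP rounding with known integrality properties, giving the PTAS. For monotone submodular with a matroid constraint, one applies continuous greedy over the grouped polytope followed by pipage rounding on the matroid, absorbing the MKC constraints into the grouped profile; uniformity within groups is what makes pipage compatible with the knapsack constraints, yielding $1-e^{-1}-\eps$. For the non-monotone case I combine the $0.385$ measured continuous greedy with a contention-resolution scheme tailored to the grouped MKC polytope, which is tractable because there are only constantly many weight classes.

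The hard part, in my view, is twofold. First, one must choose the group boundaries so that a \emph{single} fractional solution can be simultaneously grouped for all $d$ knapsacks while losing only an $\eps$-factor in both feasibility and objective value; this is delicate because the orderings induced by different weight functions $w_1,\dots,w_d$ are unrelated, and the composite grouping must remain of constant size. Second, and more subtly, one must show that the rounding step commutes with fractional grouping: in the submodular settings, one needs a contention-resolution or pipage-type scheme whose guarantees are preserved when restricted to the grouped polytope, so that the continuous-stage ratios $(1-e^{-1}-\eps)$ and $(0.385-\eps)$ survive intact after integer rounding. Establishing the existence of such schemes, and verifying their loss bounds under the combined matroid/matching and grouped-MKC constraints, is the real technical core; the modular PTAS should then drop out essentially by enumeration over group profiles.
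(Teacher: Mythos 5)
Your high-level shape (fractional solution $\to$ grouping $\to$ rounding) matches the paper, but the central lemma you rely on is false as stated, and this is a genuine gap rather than a presentational one. You form groups by bounding the \emph{fractional weight-mass} $\sum_{i\in G}\bx_i\,w_t(i)\le \eps\sum_{b}W_t(b)$ and then claim that rounding every weight in a group up to the group maximum ``inflates total consumption by at most an $\eps$-fraction of capacity.'' The mass bound does not give this: a group can contain one item with $\bx_i=1$ and weight $\eps\sum_b W_t(b)$ together with many items of near-zero weight and near-zero $\bx$-value; its fractional mass is $\approx\eps\sum_b W_t(b)$, yet an integral solution may select all the light items, and rounding their weights up to the group maximum inflates consumption without bound. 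Linear grouping (and the paper's fractional variant) avoids exactly this by controlling the fractional \emph{count} $\sum_{i\in G_k}\by_i\approx\aeps|K|$ per group and then using a \emph{shifting} argument -- items of group $G_k$ are packed into the configuration slots previously occupied by items of the heavier group $G_{k-1}$ -- rather than by rounding weights up within a group. Your proposal also never confronts heterogeneous bin capacities, which is the paper's main obstacle: a total-capacity threshold $\eps\sum_b W_t(b)$ says nothing about whether an item fits in any individual bin, and the paper needs the whole machinery of $N$-leveled partitions, per-block configuration polytopes, and block association precisely to turn a bound on aggregate weight into an actual assignment of items to bins.

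Two further differences are worth noting. The paper does not intersect the $d$ groupings into $O(1/\eps)^d$ composite weight classes; it keeps the constraints separate, requiring $\bx$ to lie in the intersection of the $d$ partition polytopes and then running block association and bin packing independently for each MKC -- feasibility is per-constraint, so no composite grouping is needed. And where you defer the ``real technical core'' to the existence of pipage/CR schemes compatible with the grouped polytope, the paper instead samples $R$ from a slightly scaled-down vector $(1-\delta)^2\bx$ (independent rounding, swap rounding, or dependent rounding according to $\II$), proves via Chernoff-type concentration that $R$ satisfies the grouping conditions in every block with high probability (after a preliminary reduction ensuring each item's marginal value is at most $\OPT/\xi$, which you omit but which is essential for these bounds), and handles non-monotonicity with a purging step and the FKG inequality. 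As written, your proposal would need the grouping lemma replaced by a count-based grouping with a shifting argument, plus the capacity-structuring and concentration machinery, before any of the three claimed approximation ratios follow.
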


\begin{table}
	\centering
	\begin{tabular}{|c|c|c|c|}
		\hline 
		Type of Additional  & Modular    & Monotone & Non-Monotone  \\ 
		Constraint
		&  Maximization  &  Submodular Max.  &  Sub. Max  \\ 
		\hhline{|=|=|=|=|}
		No additional constraint
		 &  PTAS & $1-e^{-1}-\eps$  & $0.385-\eps$ \\  
		\hline 
		Matroid constraint
		&  PTAS   &  $1-e^{-1}-\eps$   & $-$   \\
		\hline
		$2$ matroids or a matching
		&PTAS & $-$ &$-$ \\ 
		\hline 
	\end{tabular} 
	\caption{Results of Theorem~\ref{thm:main} for $d$-MKCP}
	\label{tab:results}
\end{table}
All of the results are obtained using a single algorithm (Algorithm \ref{alg:restricted}). 
The general algorithmic result encapsulates several important special cases. 
The Multiple-Choice Multiple Knapsack Problem
is a variant of the Multiple Knapsack Problem in which the items are partitioned into classes $C_1,\ldots, C_k$,  and at most one item can be selected from each class. Formally, Multiple-Choice Multiple Knapsack is the special case of modular $1$-MKCP where $\II$ describes a  partition matroid.\footnote{ That is, 
	$\II=\{ S\subseteq I ~|~\forall 1\leq j \leq k:~|S\cap C_j|\leq 1 \}$ where $C_1,\ldots, C_k$  is a partition of $I$.}
The problem has  natural applications in network optimization~\cite{CG14,TKL06}. The best known approximation ratio for the problem is  $2$ due to~\cite{CG14}. This approximation ratio is improved by  Theorem~\ref{thm:main}, as stated in the following.
\begin{thm}
\label{thm:mcmk}
There is a randomized PTAS for the Multiple-Choice Multiple Knapsack Problem.
\end{thm}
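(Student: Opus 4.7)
The plan is to derive Theorem~\ref{thm:mcmk} as an immediate corollary of Theorem~\ref{thm:main}, by recasting the Multiple-Choice Multiple Knapsack Problem as an instance of modular $1$-MKCP with a matroid constraint, and then invoking the first item of Theorem~\ref{thm:main} with $d=1$.

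First I would unpack the input. A Multiple-Choice Multiple Knapsack instance specifies an item set $I$ partitioned into classes $C_1,\ldots,C_k$, a profit $p(i)\geq 0$ and a weight $w(i)\geq 0$ for each item, and a collection of bins $B$ with capacities $W:B\rightarrow \mathbb{R}_{\geq 0}$. A feasible solution is a subset $S\subseteq I$ containing at most one item from each class, together with an assignment $A:B\rightarrow 2^I$ of $S$ to bins respecting the capacities; the goal is to maximize $\sum_{i\in S} p(i)$. I would package this as the $1$-MKCP instance $\cT=(I,\mathcal{K}_1,\II,f)$ with $\mathcal{K}_1=(w,B,W)$, $f(S)=\sum_{i\in S}p(i)$, and $\II=\{S\subseteq I \mid |S\cap C_j|\leq 1 \text{ for } 1\leq j\leq k\}$.

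Next I would verify the three properties required to fall under the first bullet of Theorem~\ref{thm:main}. The objective $f$ is clearly modular, since $f(S\cup\{i\})-f(S)=p(i)$ independently of $S$. The family $\II$ is the family of independent sets of the partition matroid induced by the classes $C_1,\ldots,C_k$, which is a standard matroid (its rank function is $\sum_j \min(|S\cap C_j|,1)$), and a membership oracle for $\II$ is trivially available from the class labels. Finally, we have $d=1$ MKC. Thus Theorem~\ref{thm:mcmk} follows directly: Theorem~\ref{thm:main}(1), applied in the matroid-constraint setting with $d=1$, yields a randomized PTAS for this instance.

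I do not anticipate any real obstacle, since the statement is a specialization of an already-established result; the only issue worth flagging is to ensure that the input format of Multiple-Choice Multiple Knapsack can indeed be transformed into the $d$-MKCP input format in polynomial time with the appropriate oracles, which is immediate from the explicit description of $f$ and $\II$ above. Hence the proof reduces to observing the above identifications and citing Theorem~\ref{thm:main}.
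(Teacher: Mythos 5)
Your proposal is correct and follows exactly the paper's own route: the paper derives Theorem~\ref{thm:mcmk} by observing that Multiple-Choice Multiple Knapsack is the special case of modular $1$-MKCP in which $\II$ is the partition matroid induced by the classes, and then invoking the first item of Theorem~\ref{thm:main}. Your verification that $f$ is modular and that $\II$ is a matroid with an efficient membership oracle is precisely the content of that reduction.
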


While the Multiple Knapsack Problem and the Monotone Submodular Multiple Knapsack Problem are well understood 
\cite{CK05,Ja10,Ja12, FKNRS20, SZZ20}, no results were previously known for the Non-Monotone Submodular Multiple Knapsack Problem, the special case of non-montone $1$-MKCP with no additional constraint.
A constant approximation ratio for the problem is obtained as a special case of Theorem~\ref{thm:main}.
\begin{thm}
	\label{thm:NMSKP}
For any $\eps>0$ there is a randomized $(0.385-\eps)$-approximation for the Non-Monotone Submodular Multiple Knapsack Problem.
\end{thm}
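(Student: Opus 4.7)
The key observation is that the Non-Monotone Submodular Multiple Knapsack Problem is precisely the specialization of non-monotone $d$-MKCP obtained by taking $d=1$ and trivial additional constraint $\II=2^I$. My plan is therefore to present Theorem~\ref{thm:NMSKP} as an immediate corollary of item~3 of Theorem~\ref{thm:main}, after verifying that the reduction is value-preserving, and then to indicate which ingredients of the general algorithm are exercised in this setting.

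First I would instantiate the input: given items $I$, bins $B$ with capacities $W$, a weight function $w$, and a non-negative (not necessarily monotone) submodular $f$, I would form $\cT=(I,\mathcal{K}_1,2^I,f)$ with $\mathcal{K}_1=(w,B,W)$. A solution $(S,A_1)$ in the $1$-MKCP sense coincides with a feasible packing of $S$ into the bins of $\mathcal{K}_1$, and $f(S)$ is the same objective value, so the reduction is polynomial-time and value-exact. Plugging $\cT$ into the algorithm promised by item~3 of Theorem~\ref{thm:main} with the desired $\eps>0$ then directly yields the claimed randomized $(0.385-\eps)$-approximation.

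The hard part lies not in Theorem~\ref{thm:NMSKP} itself, which is a specialization, but in the invocation of Theorem~\ref{thm:main}. What I would need to justify there is that passing from a single (scalar) knapsack to a multiple knapsack constraint with arbitrary bin capacities does not erode the $0.385$ factor that is already the best known ratio for a single standard knapsack under a non-monotone submodular objective. At that level I anticipate the plan to be: solve a continuous relaxation (the multilinear extension of $f$ together with the natural LP for $\mathcal{K}_1$), apply fractional grouping driven by this relaxed solution so that items within each group may be treated as if they had identical weights, and then round within each group using a measured-continuous-greedy-style procedure tailored to non-monotone submodular objectives. Combining this with a careful bin-assignment step for the grouped items is what delivers item~3, and restricting to $d=1$ recovers Theorem~\ref{thm:NMSKP}.
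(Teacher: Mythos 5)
Your proposal matches the paper exactly: Theorem~\ref{thm:NMSKP} is obtained as the immediate special case of item~3 of Theorem~\ref{thm:main} with $d=1$ and $\II=2^I$, and your verification that the reduction is value-exact is all that is needed. Your closing sketch of how item~3 itself is proved is broadly on target (continuous relaxation over the instance polytope plus fractional grouping), though the paper's actual rounding uses independent sampling followed by a purging step analyzed via the FKG inequality rather than a per-group measured-continuous-greedy; this does not affect the correctness of your derivation of the stated corollary.
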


A PTAS for Multistage Multiple Knapsack, a multistage version of the Multiple Knapsack Problem, can be obtained via a reduction to modular $d$-MKCP with a matroid constraint.\footnote{See, e.g., \cite{BET19} for the Multistage Knapsack model.}
Here, to obtain a $(1-O(\eps))$-approximation for the multistage problem, the reduction solves instances of  modular $\Theta\left(\frac{1}{\eps}\right)$-MKCP with a matroid constraint (see~\cite{FKNR21} for details). Beyond the rich set of applications, our ability to derive such a general result is an  evidence for the robustness of fractional grouping, the main technical contribution of this paper.

Our result for modular $d$-MKCP, for $d\geq 2$, generalizes the PTAS for the classic $d$-dimensional Knapsack problem ($\II=2^I$ and $|B_t|=1$ for any $1\leq t\leq d$).
Furthermore, a PTAS is the best we can expect as there is no {\em efficient PTAS (EPTAS)} already for $d$-dimensional Knapsack, unless
$\textnormal{W}[1]=\textnormal{FPT}$~\cite{KS10}. While 
there is a well-known PTAS for Multiple Knapsack~\cite{CK05}, existing techniques do not readily enable handling
additional constraints, such as a matroid constraint.

The approximation ratio obtained for monotone $d$-MKCP is nearly optimal, as for any $\eps>0$  there is no $(1-e^{-1}+\eps)$-approximation for monotone submodular maximization with a cardinality constraint in the oracle model \cite{NW78}. The approximation ratio is also tight under $P\neq NP$ due to the special case of coverage functions \cite{Fe98}. Previous works \cite{FKNRS20,SZZ20} obtained the same approximation ratio for the Monotone Submodular Multiple Knapsack Problem (i.e, monotone $1$-MKCP).
However, as in the modular case, existing techniques are limited to handling a single MKC (with no other constraints). 

In the non-montone case, the approximation ratio is in fact $(c-\eps)$ for any $\eps>0$, where $c>0.385$ is the ratio derived in~\cite{BF19}. 
This approximation ratio 
matches the current best known ratio for non-monotone submodular maximization with a single knapsack constraint \cite{BF19}.
A $0.491$ hardness of approximation bound for non-monotone $d$-MKCP follows from 
 \cite{GV11}.

\cout{
While our main result (as stated in Theorem~\ref{thm:main}) makes significant progress in the understanding of 
optimization problems with multiple knapsack constraints, a drawback is the running time of the algorithm. 
 The result guarantees a polynomial running time for any $\eps>0$; yet, the dependence on $\eps$ renders the algorithm impractical. The dependence on $\eps$ may be improved in the modular cases; however, as mentioned earlier, an EPTAS is unlikely to exist. In the submodular case, the running time is comparable to the running times of approximation algorithms 
for submodular maximization with either $d$-dimensional knapsack \cite{KST13} or both a  knapsack and a matroid constraint \cite{CVZ10} which our algorithm generalizes.  
This suggests that improvements in the running times of algorithms for submodular optimization should be achieved first for these special cases.}

\cout{
We note that $d$-MKCP commonly shows up with $d=1$, e.g., in the Multiple-Choice Multiple Knapsack Problem. The robustness of our technique enables to derive tight results for $d$-MKCP, for
any constant $d >1$, with only 
minor additional effort.
}
\cout{
Finally, we note that while our results apply to $d$-MKCP for any constant $d$, it seems that the more interesting instances only use $d=1$. These include the maximization of modular and submodular functions with a  multiple knapsack and a matroid constraint, as well as non-monotone submodular maximization with a multiple knapsack constraint. Our result is the  first constant ratio approximation 
for these special cases as well. From the
technical perspective, there is no overhead for providing the result for general $d$ in comparison to $d=1$. 
}

The {\em  Monotone Submodular Multiple Knapsack Problem with Uniform Capacities} (USMKP) is the special case of $d$-MKCP
in which $\II=2^I$, $d=1$, $f$ is monotone, and furthermore, all the bins in the MKC have the same capacity. 
That is, $\mathcal{K}_1= (w, B, W)$ and $W(b_1 )= W(b_2)$ for any $b_1, b_2\in B$. 
This restricted variant of $d$-MKCP commonly arises in real-life applications (e.g., in file assignment to several identical storage devices). 
The best known approximation ratio for 
USMKP 
is $(1-e^{-1}-\eps)$ for any fixed $\eps>0$~\cite{FKNRS20,SZZ20}. Another contribution of this paper is an improvement of this ratio. 
\begin{thm}
\label{thm:uniform}
There is a polynomial-time random $\left(1-e^{-1}-O\left( \left(\log |B|\right) ^{-\frac{1}{4}} \right) \right)$-approximation for the Monotone Submodular Multiple Knapsack Problem with Uniform Capacities.
\end{thm}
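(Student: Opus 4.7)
Write $m=|B|$ and let $W$ denote the common bin capacity. The plan is to specialize the algorithm behind Theorem~\ref{thm:main} (part~2) so as to exploit the $S_m$-symmetry of the assignment polytope that arises from uniform capacities, allowing the fractional-grouping parameter to shrink with $m$ instead of being held at a constant $\eps$. First, I obtain a fractional solution $\bar{x}$ by running measured continuous greedy on the multilinear extension $F$ of $f$ over the aggregate polytope $P=\{\bar{x}\in[0,1]^I : \sum_i w(i)\bar{x}(i)\le mW\}$, giving $F(\bar{x})\ge (1-e^{-1})\OPT$. Because all bins are identical, any point of $P$ can be turned into a symmetric fractional assignment by distributing each item's mass uniformly among the $m$ bins.

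Next, I apply the fractional-grouping construction with a parameter $\eta$ to be fixed later. Items of weight below $\eta W$ are set aside at a cost of at most $O(\eta)\cdot\OPT$ in $f$-value (by monotonicity and submodularity), while the remainder is partitioned into $O(1/\eta)$ groups $G_1,\ldots,G_\tau$ whose items share a common canonical weight $w_j$ up to a factor $(1\pm\eta)$. Within each group the fractional assignment is rounded to an integral one by a dependent (swap) rounding that preserves the per-bin marginals, so that the resulting random set $S$ satisfies $\E[f(S)]\ge (1-O(\eta))F(\bar{x})$ by the standard pipage/swap-rounding analysis for submodular objectives.

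The heart of the argument is the bin-by-bin feasibility analysis. For a fixed bin $b$ and group $G_j$, the realized load $L_{j,b}$ is a sum of independent weights in $[(1-\eta)w_j,(1+\eta)w_j]$; Bernstein's inequality yields an overload tail of the form $\exp(-\Omega(t^2/(w_j\E[L_{j,b}]+w_j t)))$. Union-bounding over $m$ bins and $O(1/\eta)$ groups and aggregating the group-wise deviations via Cauchy--Schwarz together with the constraint $\sum_j \E[L_{j,b}]\le W$, the total overload per bin is bounded w.h.p.\ by $W\cdot h(\eta,m)$ for a function $h$ that is decreasing in $\eta$. Each bin is then repaired by evicting overflow items, which loses at most an $h(\eta,m)$ fraction of the in-bin value; by submodularity and the near-uniformity of weights within each $G_j$, this translates to the same fractional loss in $f$-value.

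The main obstacle is to pin down the precise form of $h$ and to balance the resulting bound against the grouping loss $O(\eta)$. The rate $O((\log|B|)^{-1/4})$ claimed in the theorem is obtained by solving an equation of the form $\eta \asymp h(\eta,m)$, which requires $(i)$ staying within the quadratic (sub-Gaussian) regime of Bernstein's inequality, using the $\eta W$ lower bound on item weights and the sum constraint $\sum_j\E[L_{j,b}]\le W$ to avoid the linear (worst-case) regime, and $(ii)$ using the uniform-capacity symmetry so that the union bound is only over the $m\cdot O(1/\eta)$ (bin, group) pairs rather than over all (bin, item) pairs. Without uniform capacities, this aggregation breaks down and the analysis reverts to the fixed-$\eps$ bound of Theorem~\ref{thm:main}; the delicate step is to verify the quadratic regime uniformly across all groups and bins so that the final error can be driven to zero as $|B|\to\infty$ at the stated rate.
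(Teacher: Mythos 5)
There is a genuine gap, and it starts at the very first step: you optimize $F$ over the aggregate polytope $P=\{\bx\in[0,1]^I:\sum_i w(i)\bx_i\le mW\}$, but this relaxation has an integrality gap approaching $2$ with respect to multiple-knapsack feasibility. For instance, with $2m$ items of weight $0.6W$, the point $\bx\equiv 5/6$ lies in $P$ and its "symmetric fractional assignment" loads each bin to exactly $W$, yet integrally at most one such item fits per bin, so any rounding must evict roughly $40\%$ of the selected items -- a constant-factor loss that no choice of $\eta$ repairs. This is precisely why the paper works with the configuration-based block polytope $P_B$ (Definition~\ref{def:block_polytope}): membership in $P_B$ supplies a fractional decomposition $\bz$ into feasible configurations, and the feasibility of the rounded set is then proved by a \emph{deterministic} shifting argument (Lemma~\ref{lem:grouping}, placing the items of group $G_k$ into the configuration slots reserved for $G_{k-1}$, plus First-Fit for light items), not by per-bin concentration. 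Your Bernstein-based bin-by-bin analysis fails exactly where the aggregate relaxation is loose: for groups with $w_j$ comparable to $W$ the expected number of items per bin is $O(1)$, so $\E[L_{j,b}]/w_j=O(1)$ and there is no sub-Gaussian regime to stay in; the overload is a constant fraction of $W$ with constant probability per bin.

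Two further steps are unjustified. Discarding all items of weight below $\eta W$ "at a cost of $O(\eta)\cdot\OPT$" is false in general -- light items may carry all of the value, and monotonicity plus submodularity give no such bound; the paper keeps the light items and packs them by First-Fit, controlling only their total weight via Pipage rounding. Finally, the rate $(\log|B|)^{-1/4}$ in the paper does not arise from balancing a grouping loss against a concentration error $h(\eta,m)$; it comes from requiring the additive $4\cdot 4^{\aeps^{-2}}$ extra bins produced by the shifting argument (there are at most $4^{\aeps^{-2}}$ configuration types) to be dominated by the $\aeps|B|$ slack created by scaling $\by$ down to $(1-4\aeps)\by$, i.e., from $4^{\sqrt{\log|B|}}\ll|B|/(\log|B|)^{1/4}$. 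So both the feasibility mechanism and the source of the $(\log|B|)^{-1/4}$ term differ from what you propose, and the proposal as written does not yield the stated guarantee.
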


\subsection{Related Work}
\label{sec:related}

In the classic Multiple Knapsack problem,  the goal is to maximize
a modular set function subject to a single multiple knapsack constraint.
A  PTAS for the problem was first presented by Chekuri and Khanna~\cite{CK05}. The authors also ruled out the existence of a 
{\em fully polynomial time approximation scheme (FPTAS)}. An EPTAS was later developed by Jansen~\cite{Ja10,Ja12}.  

In the  Bin Packing   problem, we are given a set $I$ of items, a weight function $w:I\rightarrow \mathbb{R}_{\geq 0}$ and a capacity $W > 0$. The objective is to partition the set $I$ into a minimal number of sets $S_1,\ldots, S_m$ (i.e., find a {\em packing}) such that $\sum_{i\in S_b} w(i)\leq W$ for all $1\leq b \leq m$. In \cite{KK82} the authors presented a
polynomial-time algorithm which returns a packing using $\OPT + O(\log^2 \OPT)$ bins, where $\OPT$ is the number of bins in a minimal packing. The result was later improved by Rothvo{\ss} \cite {Ro13}.

Research 
work
 on monotone submodular maximization dates back to the late 1970's. 
In \cite{NW78} Nemhauser  and  Wolsey
presented a greedy-based tight $(1-e^{-1})$-approximation for  maximizing a monotone submodular function subject to a cardinality constraint, along with a matching lower bound in the oracle model. The greedy algorithm of \cite{NW78} was later generalized to monotone submodular maximization subject to a knapsack constraint \cite{khuller1999budgeted,sviridenko2004note}.

A major breakthrough in the field of submodular optimization resulted from the 
introduction of algorithms for optimizing the {\em multilinear extension} of a submodular function (\cite{CCPV07,lee2010maximizing, CCPV11, vondrak2013symmetry,feldman2011unified,buchbinder2014submodular}).  For $\bx\in [0,1]^I$, we say that a random set $S\subseteq I$ is distributed by $\bx$ (i.e., $S\sim\bx$) if $\Pr(i\in S)=\bx_i$, and the events $(i\in S)_{i\in I}$ are independent. 
Given a function $f:2^I\rightarrow\mathbb{R}_{\geq 0}$, its  {\em multilinear extension} is $F:[0,1]^I\rightarrow\mathbb{R}_{\geq 0}$ defined as  $F(\bx)=\E_{S\sim \bx}[f(S)]$.

The input for the {\em multilinear optimization problem} is an oracle for a submodular function $f:2^I \rightarrow \mathbb{R}_{\geq 0}$ and a downward closed
solvable
polytope $P$.\footnote{
	A polytope $P\in [0,1]^I$  is {\em downward closed} if for any $\bx\in P$ and $\by\in[0,1]^I$ such that $\by\leq \bx$ (that is, $\by_i\leq \bx_i$ for every $i\in I$) it holds that $\by \in P$.
	A polytope $P\in [0,1]^I$  is {\em solvable} if, for any $\bar{\lambda} \in \mathbb{R}^I$, a point $\bx\in P$ such that $\bar{\lambda} \cdot \bx  = \max_{\by\in P }\bar{\lambda} \cdot \by$ can be computed in polynomial time, where $\bar{\lambda} \cdot \bx$ is the dot product of $\bar{\lambda}$ and $\bx$.}
The objective is to find $\bx\in P$ such that $F(\bx)$ is maximized, where $F$ is the multilinear extension of $f$. 
The problem admits 
 a $(1-e^{-1}-o(1))$-approximation 
 in the monotone case and a  $(0.385+\delta)$-approximation in the non-monotone case (for some small constant $\delta>0$) due to \cite{CCPV11} and \cite{BF19}.

Several techniques were developed for {\em rounding} a (fractional) solution for the multilinear optimization problem to an integral solution. These include Pipage 
Rounding~\cite{AS04}, Randomized Swap Rrounding~\cite{CVZ10}, and Contention Resolution Schemes~\cite{CVZ14}. These techniques led to the state of art results for many problems (e.g.,~\cite{KST13,CCPV11,AS04,CVZ10}). 


A random $(1-e^{-1}-\eps)$-approximation for the Monotone Submodular Multiple Knapsack problem was presented in~\cite{FKNRS20}. The technique 
 in \cite{FKNRS20} modifies the objective function and its domain. This modification
does not preserve submodularity of a non-montone function and the combinatorial properties of additional constraints. Thus, it does not generalize to $d$-MKCP.

A deterministic $(1-e^{-1}-\eps)$-approximation for Monotone Submodular Multiple Knapsack was later obtained by Sun et al.~\cite{SZZ20}.
Their algorithm relies on a variant of the submodular greedy of \cite{sviridenko2004note} which cannot be extended to the non-monotone case, or easily adapted to handle more than a single MKC.

 \cout{
 \ariel{consider moving to a later part in the appendix}
 The above algorithms  \cite{feldman2011unified, BF19, CCPV11} assume the polytope $P$ is given by an optimization oracle. Given $\bc \in \mathbb{R}^I$, the optimization oracle returns $\bx\in P$ such that $\bc \cdot \bx$ is maximized. 
 We note that these algorithms can be easily adapted to yield  a $0.385$ and  a $(1-e^{-1})$ approximation for the non-monotone and monotone multilinear optimization problem  given a weaker oracle. Taking $\bc \in \mathbb{R}^I$ and $\delta>0$ as input, the weaker oracle returns $\bx\in P$ such that $\bc \cdot \bx \geq (1-\delta) \max_{\by\in P} \bc \cdot \by$.  The adapted algorithms use a polynomial number of oracle queries and 
 $\delta^{-1}$ 
 is always polynomial in   the number of items. 
 We note that if $f$ is {\em modular}  then such weaker oracle can be used to compute $\bx\in P$ such that $F( \bx) \geq (1-\delta) \max_{\by\in P} F( \by)$ since, in this case, the multilinear extension is a linear function.\footnote{See Lemma~\ref{lem:multilinear_of_linear}.}
 This observation is used by the PTAS for modular $d$-MKCP. 
 We often refer to $\bx\in [0,1]^{I}$ as {\em fractional solution}.
}
 
 
 \subsection{Technical Overview}
 
 In the following we describe the technical problem solved by fractional grouping and give some insight to the way we solve this problem.  
 For simplicity, we focus on the special case of $1$-MKCP, 
 in which the number of bins is large and all bins have unit capacity. Let $(I, (w,B,W), 2^I,f)$ be a  $1$-MCKP instance  where   $W(b)=1$ for all $b\in B$.  Also, assume that no two items have the same weight. Let $S^*$ and $A$ be an optimal solution for the instance.
 
Fix an arbitrary small $\aeps>0$. We say that an item $i\in I$ is {\em heavy} if $w(i)>\aeps$; otherwise, $i$ is {\em light}. Let $H\subseteq I$ denote the heavy items. We can apply linear grouping \cite{FL81} to the heavy items in $S^*$. That is, let $h^* = |S^*\cap H|$ be the number of heavy items in $S^*$, and partition $S^*\cap H$ to $\aeps^{-2}$ groups of cardinality $ \aeps^2 \cdot h^*$,
assuming the items are sorted in decreasing order by weights
(for simplicity, assume $\aeps^{-2}$ and $ \aeps^2 \cdot h^*$ are integers). Specifically, $S^*\cap H
 = G^*_1 \cup \ldots \cup G^*_{\aeps^{-2}}$,
 where $|G^*_k|= \aeps^2 \cdot h^*$ for all $1\leq k \leq \aeps^{-2}$ and for any $i_1 \in G^*_{k_1}$, $i_2\in G^*_{k_2}$ where $k_1<k_2$ we have that $w(i_1)>w(i_2)$.  Also, for any $1\leq k\leq \aeps^{-2}$ let $q_k$, the $k$-th pivot, be the item of highest weight in $G^*_k$. 
 
 We use the pivots to generate a new collection of groups $G_1,\ldots, G_{\aeps^{-2}}$ where $G_k = \{i\in H~|~ w\left(q_{k+1}\right) < w(i) \leq w(q_k)\}$ for $1\leq k <\aeps^{-2}$,  and $G_{\aeps^{-2}}=\{i\in H~|~  w(i) \leq w(q_{\aeps^{-2}})\}$. Clearly, $G^*_k\subseteq G_k$ for any $1\leq k\leq \aeps^{-2}$. Let $X=\{i\in H~|~ w(i)> w(q_1)\}$ be the set of largest items in $H$.
 
 A standard {\em shifting} argument can be used to show that any set $S\subseteq I\setminus X$, such that $w(S)\leq |B|$ and $|S\cap G_k|\leq \aeps ^2 \cdot h^*$ for all $1\leq k \leq \aeps^{-2}$, can be packed into $(1+2\aeps)|B|+1$  bins as follows.\footnote{For a set $S\subseteq I$ we denote $w(S)=\sum_{i\in S} w(i)$.} 
 The items in $S\cap G_k$ can be packed in place of the items in $G^*_{k-1}$ in $A^*$, each of the items in $S\cap G_1$ can be packed in a separate bin (observe that $|S \cap G_1|\leq \aeps^{2} \cdot h^* \leq \aeps |B|$ as packing of $h^*$ heavy items requires at least $h^*\cdot \aeps$ bins). Finally, First-Fit can be used to pack the light items in $S$.
 
 Now, assume we know $q_1,\ldots, q_{\aeps^{-2}}$ and $h^*$; thus, the sets $G_1,\ldots,G_{\aeps^{-2}}$ and $X$ can be constructed. Consider the following optimization problem: find $S\subseteq I\setminus X$ such that $w(S)\leq |B|$, $|S\cap G_k|\leq \aeps^2 \cdot h^*$  for all $1\leq k\leq \aeps^{-2}$, and $f(S)$ is maximal. The problem is an instance of non-monotone submodular maximization with a $(1+\aeps^{-2})$-dimensional  knapsack constraint, for which there is a $(0.385-\eps)$-approximation algorithm \cite{KST13,BF19}. The algorithm can be used to find $S\subseteq I\setminus X$ which satisfies the above constraints and 
 $f(S)\geq (0.385-\eps)\cdot f(S^*)$,  as $S^*$ is a feasible solution for the problem. Subsequently, $S$ can be packed into bins using a standard bin packing algorithm. This will lead to a packing of $S$ into roughly 
 $(1+2\aeps)|B| +O(\log^2|B|)$ bins. By removing the bins of least value (along with their items), and using the assumption that $|B|$ is sufficiently large, we can obtain a set $S'$ and an assignment of $S'$ into $B$ such that $f(S)$ is arbitrarily close to $0.385 \cdot f(S^*)$. 
 
 Indeed, we do not know the values of  $q_1,\ldots, q_{\aeps^{-2}}$ and $h^*$. This prevents us from  using the above approach. However, as in~\cite{BEK16}, we can overcome this difficulty through  exhaustive enumeration. Each of   $q_1,\ldots, q_{\aeps^{-2}}$ and $h^*$ takes one of $|I|$ possible values. Thus, by iterating over all $|I|^{1+\aeps^{-2}}$ possible values for $q_1,\ldots, q_{\aeps^{-2}}$ and $h^*$, and solving the above problem for each, we can find a solution of value at least
 $0.385 \cdot f(S^*)$.
 
 While this approach is useful for our restricted class of instances, 
 due to the use of exhaustive enumeration it does not scale
 to general instances, where bin capacities may be arbitrary. Known techniques (\cite{FKNRS20}) can be used to reduce the number of unique bin capacities in a general MKC to be logarithmic in $|B|$. As enumeration is required for each unique capacity, this results in $|I|^{\Theta(\log |B|) }$ iterations, which is non-polynomial. 
 
 Fractional Grouping overcomes this hurdle by using a polytope $P\subseteq [0,1]^I$ to represent an MKC. A grouping  
 $G^{\by}_1,\ldots, G^{\by}_{\tau}$ with $\tau\leq \aeps^{-2}+1$ is derived from a vector $\by\in P$. The polytope $P$ bears some similarity to configuration linear programs used in previous works (\cite{Ja12,FGMS11,BEK16}). While $P$ is not solvable, it satisfies an approximate version of solvability which suffices for our needs.
 
 Fractional grouping satisfies the main properties of the grouping  defined for $S^*$. Each of the groups contains roughly the same number of fractionally selected items. That is,
 $\sum_{i\in G^{\by}_k} \by_i\approx \aeps^2 |B|$ for all $1\leq k \leq \tau$.  Furthermore, we show that if $\by$ is strictly contained in $P$ then any subset $S\subseteq I$ satisfying $(i)$ $|S\cap G_k|\leq \aeps |B|$ for all $1\leq k \leq \tau$, and $(ii)$ $w(S\setminus H)$ is sufficiently small, can be packed into strictly less than $|B|$ bins (see the details in Section \ref{sec:grouping}). 
The existence of a packing for $S$ relies on a shifting argument similar to the one used above. In this case, however, the structure of the polytope $P$ replaces the role of $S^*$ in our discussion.

 This suggests the following algorithm. Use the algorithm of \cite{BF19} to find $\by \in P$ such that $F(\by)\geq (0.385-\eps) f(S^*)$, and sample a random set $R\sim (1-\delta)^2 \by$. By the above property, $R$ can be packed into strictly less than $|B|$ bins with high probability, as $\E\left[|R\cap G_k|\right]\ll \aeps |B|$.
 Thus, $R$ can be packed into $B$ using a bin packing algorithm.  Standard submodular bounds also guarantee that $\E[f(R)]$ is arbitrarily close to $F(\by)$. Hence, we can obtain an approximation ratio arbitrarily close to $(0.385-\eps)$ while avoiding enumeration.

 This core idea of fractional grouping for bins of uniform capacities can be scaled to obtain Theorem~\ref{thm:main}. This scaling involves use of existing techniques for submodoular optimization (\cite{FKNRS20, CVZ10,CVZ11,CCPV11, BF19}), along with a novel {\em block association} technique we apply to handle MKCs with arbitrary bin capacities.

 
 \cout{

 {\bf old version}
 
 We first argue that we can restrict our attention to special cases of $d$-MKCP in which both the constraint and objective function satisfy certain properties.  We assume the set of bins $B$ is partitioned into blocks $(K_j)_{j=0}^{\ell}$ and utilize the following definition of~\cite{FKNRS20}. 
 
 \begin{definition}
 	For any $N\in \mathbb{N}$, a set of bins $B$ and capacities $W:B\rightarrow \mathbb{R}_{\geq 0}$,  we say that a partition $(K_j)_{j=0}^\ell$ of $B$
 	is {\em $N$-leveled} if, for all  $0\leq j \leq \ell$, $K_j$ is a block and  $|K_j|= N^{\floor{\frac{j}{N^2}}}$. 
 	We say that $B$ and $W$ are {\em $N$-leveled} if such a partition exists.
 \end{definition}

For $N,\xi\in \mathbb{N}$,   {\em $(N,\xi)$-restricted $d$-MKCP} is the special case of  $d$-MKCP in which for any instance $\cT=\left(I,\left(w_t,B_t,W_t\right)_{t=1}^d,\II, f\right)$ it holds that $B_t$ and $W_t$ are $N$-leveled for all $1\leq t\leq d$, and $f(\{i\})-f(\emptyset)\leq \frac{\OPT}{\xi}$ for any $i\in I$, where $\OPT$ is the value of an optimal solution of the instance. We assume the input for $(N,\xi)$-restricted $d$-MKCP includes the $N$-leveled partition
$(K^t_j)_{j=0}^{\ell_t}$ of $B_t$ for any $1\leq t\leq d$.  
  We use the term modular (monotone) $(N,\xi)$-restricted $d$-MKCP for the special case in which $f$ is modular (monotone), and non-montone $(N,\xi)$-restricted $d$-MKCP for general $(N,\xi)$-restricted $d$-MKCP.
  
  A combination of standard enumeration and the structuring technique of \cite{FKNRS20} leads to the following lemma, whose proof is given in Appendix \ref{app:restricted}.
  \begin{lemma}
  	\label{lem:restricted}
  	For any $N,\xi, d \in \mathbb{N}$ and $c\in [0,1]$, a polynomial time  $c$-approximation for modular/ monotone/ non-monotone 
  	$(N,\xi)$-restricted $d$-MKCP  implies a polynomial time   $c\cdot \left(1-\frac{d}{N}\right)$-approximation for modular/ monotone/ non-monotone 
  	 $d$-MKCP, respectively.
  \end{lemma}

\cout{Using Lemma \ref{lem:restricted}, we can prove Theorem \ref{thm:main} by providing a $\left(c- \frac{\eps}{2}\right)$-approximation for $(N,\xi)$-restricted $d$-MKCP with $N> \frac{2d}{\eps}$ and some $\xi\in \mathbb{N}$, where $c=1$ for modular instance, $c=1-e^{-1}$ for monotone instances and $c=0.385$ for non-monotone instances.
}

We start by representing
 a single block using a polytope. Let $(w,B,W)$ be an MKC and $K \subseteq B$ be a block. 
Denote by $W^*_K$ the capacities of the bins in block $K$; that is, $W^*_K=W(b)$ for any $b\in K$. 
A {\em$K$-configuration} is a subset $C\subseteq I$ of items which fits into a single bin of the block $K$, i.e., $w(C)\leq W^*_K$. 
We denote the set of all $K$-configurations by $\cC_K=\left\{C\subseteq I~ \middle|~ w(C)\leq W^*_K \right\}$. 

\begin{definition}
	\label{def:extended_block}
	The {\em extended block polytope of $K$} is
	\begin{equation}
		\label{eq:PeK_def}
		P^e_K = \left\{\by\in [0,1]^I, \bz\in [0,1]^{\cC_K} \middle|
		\begin{array}{lcrl}
			& &\displaystyle \sum_{C\in \cC_K} \bz_C ~&\leq~ |K| \\
			\forall i\in I: & &
			\displaystyle
			\by_i ~&\leq ~
			\displaystyle
			\sum_{C\in \cC_K \mbox{ s.t. } i\in C} \bz_C 
		\end{array}
		\right\}
	\end{equation}
\label{def:block_polytope}
The {\em block polytope of $K$} is
\begin{equation}
	\label{eq:P_def}
	P_K=\left\{\by\in [0,1]^I | ~ \exists \bz\in [0,1]^{\cC_k}: (\by,\bz)\in P^e_K \right\}.
\end{equation} 
\end{definition}

The first constraint in~(\ref{eq:PeK_def}) limits the number of selected configurations by the number of bins. The second constraint requires that
each selected item is (fractionally) covered by a corresponding set of configurations. It is easy to verify that, for any $(\by, \bz)\in P^e_K$ it holds that 
$\sum_{i\in I} w(i) \cdot \by_i \leq |K|  \cdot W^*_K$. 
We note that similar polytopes were used in the past (see, e.g., \cite{KK82, FGMS11,Ja12}).

\cout{
\ariel{not sure it is the right place. It seems too technical for the introduction}
We say that $A:K\rightarrow 2^I$ is a {\em feasible assignment for $K$} if  $w(A(b))\leq W^*_K$ for any $b\in K$.
Also, we use $\one_S=\bx\in \{0,1\}^I$ where $\bx_i = 1$ if $i\in S$ and $\bx_i = 0$ if $i\in I\setminus S$.
The next lemma states that the definition of $P_K$ is sound for the problem.  
\begin{lemma}
	\label{lem:soundness}
	Let $A$ be a feasible assignment for $K$ and $S=\bigcup_{b\in K} A(b)$. Then $\one_S \in P_K$. 
\end{lemma}
\ariel{add a proof in the appendix}
}

We say an item $i\in I$ is {\em $\aeps$-heavy} for $\aeps>0$ (w.r.t $K$) if $\aeps \cdot W^*_K < w(i) \leq   W^*_K$.
We say $i\in I$ is {\em $\aeps$-light} if $w(i)\leq \aeps\cdot W^*_K$. We denote by $H_{K,\aeps}$ and $L_{K, \aeps}$ the sets of $\aeps$-heavy and $\aeps$-light items, respectively.
We also use the notation  $w(S)=\sum_{i\in S} w(i)$ for  $S\subseteq I$.

The main technical contribution of this paper lies in the introduction of fractional grouping. Given $\by\in P$, fractional grouping partitions the heavy items $H_{K,\aeps}$ into $\tau\leq\aeps^{-2}+1$ groups $G_1,\ldots, G_{\tau}$ such that $\sum_{i\in G_k} \by_i\approx \aeps |K|$ for $1\leq k \leq \tau$. We show that if $\by$ is strictly in $P$ then any subset $S\subseteq H_{K,\aeps}\cup L_{K,\aeps}$ satisfying $(i)$ $|S\cap G_k|\leq \aeps |K|$ for all $1\leq k \leq \tau$, and $(ii)$ $w(S\cap L_{K,\aeps})$ is sufficiently small, can be packed into strictly less than $|K|$ bins (see the details in Section \ref{sec:grouping}). 
This suggests that, given $\by \in P_K$, a random set $R\sim (1-\delta)^2 \by$ can be packed into strictly less than $|K|$ bins with high probability, as $\E\left[|R\cap G_k|\right]\ll \aeps |K|$.
Thus, $R$ can be packed into $K$  using a bin packing algorithm.   

\cout{
We first discuss how Lemma \ref{lem:grouping} can be used. 
Consider the simple scenario in which $B$ is a block and we are given $\by\in P_B$ as well as some $\aeps>0$.  
 We can sample a random set $R\sim (1-\delta)^2 \by$ where $\delta = 4\aeps$, and use a bin packing  algorithm (\cite{KK82}) to pack $R$ into the bins of $B$. Let $G_1,\ldots, G_{\tau}$ be the grouping of $(1-\delta)\by$ and $\aeps$. It follows from standard concentration bound that with high probability $|R\cap G_k|\leq \aeps \cdot |B|$ for any $1\leq k \leq \tau$
 and $w(R\cap G_k)\leq \sum_{i\in L_{B,\aeps}} (1-\delta)\by_i +\frac{\aeps}{4}\cdot W^*_B \cdot |B|$,
  assuming $B$ is sufficiently large. Thus, by Lemma \ref{lem:grouping} the items in $R$ can be packed into $\left(1-\frac{\aeps}{2}\right) |B|+4\cdot 4^{\aeps^-2}$ bins of capacity $W^*_K$ (w.h.p). Thus, the bin packing algorithm will return a packing with no more than $|B|$ bins, assuming $|B|$ is sufficiently large. A submodular concentration bound can be used to show that $f(R)$ does not deviate afar from $(1-\delta)^2\cdot F(\by)$, where $f$ is a submodular function and $F$ its multilinear extension, under some mild assumptions.
}  

  To obtain an approximation algorithm for restricted $d$-MKCP, the definition of the block polytope is generalized to represent the multiple knapsack 
  constraint for the entire instance.
  \begin{definition}
  	\label{def:extended_gamma_partition}
  	For $\gamma>0$, the  extended $\gamma$-partition polytope of $(w,B,W)$ and the partition  $\left(K_j\right)_{j=0}^{\ell}$ of $B$ to blocks is
  	\begin{equation}
  		\label{eq:ex_gamma_partition_intro}
  		P^e= \left\{ (\bx, \by^0, \ldots, \by^\ell)~\middle|~ \begin{array}{lcc} &\bx \in [0,1]^I&~~ \\  \forall  0\leq j \leq \ell:& \by^{j}\in P_{K_j} &\\ & \sum_{j=0}^{\ell} \by^j = \bx&\\
  			\forall 0\leq j \leq \ell, |K_j|=1, i\in I \setminus L_{K_j, \gamma }:& ~\by^j_i=0&
  		\end{array}
  		\right\}
  	\end{equation}
  	where $P_{K_j}$ is the block polytope of $K_j$.  The $\gamma$-partition polytope of $(w,B,W)$ and $\left(K_j\right)_{j=0}^{\ell}$ is
  	\begin{equation}
  		\label{eq:gamma_partition}
  		P= \left\{ \bx  \in [0,1]^I~\middle|~ \exists \by^0,\ldots \by^{\ell} \in [0,1]^I \text{ s.t. } (\bx, \by^0, \ldots, \by^\ell) \in P^e~\right\}
  	\end{equation}
  \end{definition}
  The last constraint in (\ref{eq:ex_gamma_partition_intro}) prohibits the assignment of  $\gamma$-heavy items to blocks with a single bin. This technical requirement is  used to show a concentration bound.
 
Our objective is to utilize the $\gamma$-partition polytope $P$ of an MKC $(w,B,W)$ and its $N$-leveled partition $(K_j)_{j=0}^{\ell}$ similar to the above example for the  block polytope. That is,
use a bin packing algorithm to 
pack a random set $R$ of items.
However, since the bins are of varying capacities,
we first associate each item with a block, and then apply a bin packing algorithm to each block and its associated items.

The {\em block association} of $(\bx,\by^0,\ldots, \by^{\ell})\in P^e$ is a  partition $(I_j)_{j=0}^{\ell}$ of $\supp(\bx)=\{i\in I~|~\bx_i>0\}$ such that for every block $0\leq j\leq \ell$ it holds that
 $\sum_{i\in G_k\cap I_j} \bx_i \leq \aeps \cdot|K_j|+2$ for every $1\leq k\leq \tau$, where $G_1,\ldots, G_{\tau}$ is the fractional grouping of $\by^j$.
 The association also provides bounds regarding the light items $L_{K_j,\aeps}\cap I_j$ (see the details in Section \ref{sec:association}). Consider a random set $R$  sampled according to  $(1-\delta)^2 \bx$. Applying the above along with the guarantees of fractional grouping and structure of $N$-leveled instances,  it can be shown that, with high probability, $R\cap I_j$ can be packed into $K_j$ using a bin packing algorithm. 
 
 Our algorithm for restricted $d$-MKCP relies on the above techniques. It first finds a vector $\bx\in [0,1]^I$ such that $\bx$ is in the  $\gamma$-partition polytope of each of the MKCs of the instance and $F(\bx)$ approximates the optimum ($F$ is the multilinear extension of the objective function $f$). 
 Subsequently, randomized rounding techniques are used to sample a set $R$ of items based on $\bx$.
 For each of the MKCs, the items in $R$ are associated with blocks,
  and a bin packing algorithm is used to pack $R\cap I_j$ into the block  $K_j$. The fractional grouping guarantees  the success of the bin packing algorithm.\\

}
 \noindent{\bf Organization.}
 We present the fractional grouping technique in Section~\ref{sec:grouping}. Our algorithms for uniform bin capacities and the general case are given in Section~\ref{sec:block} and~\ref{sec:algorithm}, respectively. Due to space constraints, the block association technique is presented in Appendix~\ref{sec:association}.

\cout{
Using the properties of fractional grouping, it can be shown that $R\cap I_j$ can be packed

 We note that the lemma provides a solution for the problem mentioned above. Let $(\bx,\by^0,\ldots, \by^\ell)\in P^e$, the extended $\gamma$-partition polytope of $(w,B,W)$ and $(K_j)_{j=0}^{\ell}$. Also, let $R\sim (1-\delta)^2 \cdot \bx$ for $\delta = 4\aeps$. We can evaluate the association $(I_j)_{j=0}^{\ell}$ of $(1-\delta)\cdot (\bx,\by^0,\ldots, \by^{\ell})$ and use a bin packing algorithm to pack $R\cap I_j$ into bins of capacity $W^*_{K_j}$ for $0\leq j\leq \ell$ with $|K_j|>1$. It can be shown that with high probability $R\cap I_j$ satisfies the conditions of Lemma \ref{lem:grouping} for every $j$ with $|K_j|>1$. Thus the bin packing algorithm will return a packing which uses no more than $|K_j|$ bins for any $j$ with high probability.
}

\section{Fractional Grouping}
\label{sec:grouping}

Given an MKC $(w,B,W)$ over $I$, a subset of bins $K\subseteq B$ is a {\em block} if all the bins in $K$ have the same capacity. 
Denote by $W^*_K$ the capacities of the bins in block $K$, then $W^*_K=W(b)$ for any $b\in K$.

We first define a polytope $P_K$ which represents the block $K\subseteq B$ of an  MKC $(w,B,W)$ over $I$. To simplify the presentation, we assume the MKC $(w,B,W)$ and $K$ are fixed  throughout this section. W.l.o.g., assume that $I=\{1,2,\ldots, n\}$ and $w(1)\geq w(2)\geq \ldots \geq w(n)$. 
A {\em$K$-configuration} is a subset $C\subseteq I$ of items which fits into a single bin of block $K$, i.e., $w(C)\leq W^*_K$. We use $\cC_K$ to denote the set of all $K$-configurations. Formally,
$\cC_K = \left\{C\subseteq I~ \middle|~ w(C)\leq W^*_K \right\}$. 

\begin{definition}
	\label{def:extended_block}
	The {\em extended block polytope of $K$} is
	\begin{equation}
		\label{eq:PeK_def}
		P^e_K = \left\{\by\in [0,1]^I, \bz\in [0,1]^{\cC_K} \middle|
		\begin{array}{lcrl}
			& &\displaystyle \sum_{C\in \cC_K} \bz_C ~&\leq~ |K| \\
			\forall i\in I: & &
			\displaystyle
			\by_i ~&\leq ~
			\displaystyle
			\sum_{C\in \cC_K \mbox{ s.t. } i\in C} \bz_C 
		\end{array}
		\right\}
	\end{equation}
\end{definition}	
The first constraint in~(\ref{eq:PeK_def}) bounds the number of selected configurations by the number of bins. The second constraint requires that
each selected item is (fractionally) covered by a corresponding set of configurations. It is easy to verify that, for any $(\by, \bz)\in P^e_K$, it holds that 
$\sum_{i\in I} w(i) \cdot \by_i \leq |K|  \cdot W^*_K$. 

\begin{definition}
\label{def:block_polytope}
The {\em block polytope of $K$} is
\begin{equation}
\label{eq:P_def}
P_K=\left\{\by\in [0,1]^I | ~ \exists \bz\in [0,1]^{\cC_k}: (\by,\bz)\in P^e_K \right\}.
\end{equation} 
\end{definition}


While $P^e_K$ and $P_K$ are defined by an exponential number of variables, it follows from standard arguments (see, e.g., \cite{FGMS11, KK82}) that, for any $\bc\in \mathbb{R}^I$, $\max_{\by\in P_K} \bc \cdot \by$ can be approximated.
\begin{lemma}
	\label{lem:poly_fptas}
	There is a {\em fully polynomial-time approximation scheme (FPTAS)} for the problem of finding $\by \in P_K$ such that $\bc \cdot \by$ is maximal, given an $MKC$ $(w,B,W)$, a block $K\subseteq B$ and  a vector $\bc\in \mathbb{R}^I$, where $P_K$ is the  block polytope of $K$. 
\end{lemma}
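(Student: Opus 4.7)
The plan is to reformulate the problem as a configuration-type LP and invoke the classical framework for approximating such LPs via an FPTAS for the pricing subproblem, which here reduces to ordinary Knapsack.

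First, I would reduce to the case $\bc \geq 0$. Since the polytope $P_K$ is downward closed in $\by$ (zeroing out a coordinate preserves feasibility—take the same $\bz$), for any $i$ with $\bc_i < 0$ we may set $\by_i=0$ without loss of optimality. Next I write the maximization of $\bc \cdot \by$ over $P_K$ explicitly as the linear program with variables $(\by,\bz)\in [0,1]^I\times [0,1]^{\cC_K}$, constraints $\sum_{C\in \cC_K}\bz_C \leq |K|$, $\by_i \leq \sum_{C\ni i} \bz_C$ and $\by_i\leq 1$, and objective $\max \bc\cdot \by$. Attaching dual multipliers $\alpha$, $\beta_i$ and $\gamma_i$ respectively gives the dual
\begin{equation*}
\min\ |K|\alpha + \sum_{i\in I}\gamma_i \quad\text{s.t.}\quad \beta_i+\gamma_i\geq \bc_i\ \ \forall i,\qquad \alpha \geq \sum_{i\in C}\beta_i\ \ \forall C\in \cC_K,\qquad \alpha,\beta,\gamma\geq 0.
\end{equation*}

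The only non-trivial separation task is on the (exponentially many) configuration inequalities: given $(\alpha,\beta)$, decide whether some $C$ with $w(C)\leq W^*_K$ satisfies $\sum_{i\in C}\beta_i > \alpha$. This is exactly a $0/1$ Knapsack instance with profits $\beta_i$, weights $w(i)$ and capacity $W^*_K$, for which an FPTAS is classical (Ibarra--Kim, Lawler). I would then invoke the standard result—used for the bin-packing configuration LP in \cite{KK82} and in the closely related works \cite{FGMS11,Ja12} cited just above this lemma—that an FPTAS for the dual separation oracle yields an FPTAS for the primal LP via the ellipsoid method with approximate separation (Gr\"otschel--Lov\'asz--Schrijver) or, equivalently, via approximate column generation / multiplicative-weights applied to the primal. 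This produces in $\mathrm{poly}(n,1/\eps)$ time a polynomially-supported $\bz$ together with the induced $\by_i=\min(1,\sum_{C\ni i}\bz_C)$ that lies in $P_K$ and achieves $\bc\cdot \by \geq (1-\eps)\max_{\by'\in P_K}\bc\cdot \by'$.

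The only delicate point—and thus the main obstacle—is that the separation oracle for the configuration constraints is only approximate, so off-the-shelf ellipsoid cannot be applied verbatim. The fix is the standard one: run the ellipsoid/MWU on a slightly shrunken polytope (scale the Knapsack threshold by $1-\eps$), so that any violated configuration is also violated for the exact oracle up to a $(1\pm\eps)$ factor; this preserves polynomial bit complexity and incurs only a $(1-O(\eps))$ loss in the approximation ratio. Since the argument is identical to the one carried out in \cite{KK82,FGMS11,Ja12} for an MKC with a single block, I would state the lemma by direct appeal to that machinery rather than reproving it.
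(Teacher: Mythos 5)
Your proof is correct and rests on the same classical machinery as the paper's (configuration LP, dualization, knapsack FPTAS as an approximate separation oracle, ellipsoid with approximate separation in the style of \cite{KK82,FGMS11}), but it is organized differently. You dualize the optimization LP $\max \bc\cdot\by$ over $P^e_K$ directly and run the ellipsoid/column-generation once on that dual, recovering a primal solution from the violated constraints encountered. The paper instead proves the lemma as a special case of Lemma~\ref{lem:instance_fptas}: it first builds an approximate \emph{membership/separation} oracle for $P_K$ alone (Lemma~\ref{lem:sep_primal}), by writing the covering LP whose optimum being at most $|K|$ characterizes $\by\in P_K$, dualizing \emph{that} LP into a packing program whose constraints are knapsack constraints, and binary-searching its value; it then wraps this oracle in an outer ellipsoid with a binary search on the objective value $v$ over the region $\{\bx \in P : \bc\cdot\bx \ge v\}$. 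The paper's two-level structure buys reusability: the same $P_K$-separation oracle is plugged into the $\gamma$-instance polytope, which also intersects matroid and matching polytopes and therefore cannot be handled by a single clean dualization. Your one-shot dualization is the more direct argument for the single-block case, and your preliminary reduction to $\bc\ge 0$ via downward closedness (which the paper omits) is a harmless and sensible normalization. The only point to be careful about, which you correctly flag and resolve in the standard way, is that the knapsack oracle is only approximate, so the guarantee degrades to $(1-\eps)$ exactly as in the paper.
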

A formal proof is given in  Appendix~\ref{app:poly_fptas} as a special case of Lemma~\ref{lem:instance_fptas}.
We say that $A:K\rightarrow 2^I$ is a {\em feasible assignment for $K$} if  $w(A(b))\leq W^*_K$ for any $b\in K$.
Also, we use $\one_S=\bx\in \{0,1\}^I$, where $\bx_i = 1$ if $i\in S$ and $\bx_i = 0$ if $i\in I\setminus S$.
The next lemma implies that the definition of $P^e_K$ is sound for the problem.  
\begin{lemma}
	\label{lem:soundness}
	Let $A$ be a feasible assignment for $K$ and $S=\bigcup_{b\in K} A(b)$. Then $\one_S \in P_K$. 
\end{lemma}
The lemma is easily proved, by setting $\bz_C=1$ if $A(b)=C$ for some $b \in B$, 
and $\bz_C=0$ otherwise.
We say an item $i\in I$ is {\em $\aeps$-heavy} for $\aeps>0$ (w.r.t $K$) if $W^*_K \geq w(i) > \aeps \cdot W^*_K$; otherwise, $i\in I$ is {\em $\aeps$-light}.
Denote by $H_{K,\aeps}$ and $L_{K, \aeps}$ the sets of $\aeps$-heavy items and $\aeps$-light items, respectively.

Given a vector $\by \in P_K$, we now describe the partition of $\aeps$-heavy items into groups $G_1, \ldots , G_\tau$, for some $\tau \leq \aeps^{-2} +1$.
Starting with $k=1$ and $G_k=\emptyset$, add items from $H_{K,\aeps}$ to the current group $G_k$ until $\sum_{i\in G_k} \by_i \geq \aeps |K|$. Once the constraint is met, mark the index of the last item in $G_k$ as $q_k$, the $\aeps$-{\em pivot} of $G_k$, close $G_k$ and open a new group, $G_{k+1}$. 
Each of the groups $G_1, \ldots ,G_{\tau-1}$ represents a fractional  selection of $\approx \aeps|K|$ heavy items of $\by$. The last group, $G_\tau$, contains the remaining items in $H_{K,\aeps}$, for which the $\aeps$-pivot is $q_{max}$ (last item in $H_{K,\aeps}$).
We now define formally the partition process.
\begin{definition}
	\label{def:partition}
	Let $\by \in P_K$ and $\aeps\in\left(0,\frac{1}{2}\right]$.  Also, let $q_0 \in \{0,1,\ldots, n\}$ and $q_{\max}\in I$ such that $H_{K,\aeps}= \{ i\in I ~|~ q_0 < i \leq q_{\max }\}$. 
	The $\aeps$-pivots of $\by$, given by $q_1, \ldots, q_{\tau}$, are 
	defined inductively, i.e.,
	$$q_k = \min \left\{s\in H_{K,\aeps} ~\middle|~ \sum_{i=q_{k-1}+1}^s \by_i \geq \aeps\cdot |K|  \right\}.$$
	If the set over which the minimum is taken is empty, let $\tau=k$ and $q_{\tau}=q_{\max}$. 
	The {\em $\aeps$-grouping} of $\by$ consists of the sets $G_1,\ldots, G_\tau$, where $G_k =\left\{i\in H_{K,\aeps} ~\middle|~ q_{k-1}<i \leq q_k\right\}$ for $1\leq k \leq \tau$. 
\end{definition}

Given a polytope $P$ and $\delta\in \mathbb{R}$, we use the notation $\delta P = \{\delta \bx ~|~ \bx \in P\}$. The main properties of fractional grouping are summarized in the next lemma. 
\begin{lemma}[Fractional Grouping]
	\label{lem:grouping}
	For any $\by \in P_K$ and $0<\aeps<\frac{1}{2}$ there is a polynomial time algorithm which computes a partition $G_1,\ldots, G_{\tau}$ of $H_{K,\aeps}$  with $\tau\leq \aeps^{-2}+1$ for which the following hold:
	\begin{enumerate}
		\item  $\sum_{i\in G_k} \by_i \leq \aeps \cdot |K|+1$ for any $1\leq k \leq \tau$.
		\item 
		Let $S\subseteq H_{K,\aeps} \cup L_{K,\aeps}$ such that $|S\cap G_k |\leq    \aeps |K|$ for every $1\leq k \leq \tau$, and
		$w(S\cap L_{K,\aeps})\leq \sum_{i\in L_{K,\aeps}} \by_i \cdot w(i) + \lambda \cdot W^*_K$ for some $\lambda\geq 0$. Also, assume  $\by \in (1-\delta) P_K$ for some $\delta \geq 0$. Then $S$ can be packed into $(1-\delta +3\aeps)|K| +   4\cdot 4^{\aeps^{-2}} + 2 \lambda$ bins of capacity $W^*_K$. 
		\label{prop:grouping_third}
	\end{enumerate}
	We refer to $G_1,\ldots, G_{\tau}$ as {\em the $\aeps$-grouping} of $\by$. 
\end{lemma}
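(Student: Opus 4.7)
My plan separates the argument into the algorithmic construction plus Property 1, and the structural Property 2.

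For the algorithm and Property 1, I would directly implement Definition~\ref{def:partition}. Iterate over items $i = 1, 2, \ldots, n$ in decreasing weight order, accumulating a running sum $\sigma$ for the current group $G_k$; whenever $\sigma$ first reaches $\aeps|K|$, declare the last added item to be the pivot $q_k$, close $G_k$, and open $G_{k+1}$. This is clearly polynomial. Property 1 then follows from two facts. First, since $\by_i \leq 1$, the overshoot when a group closes is at most $1$, so $\sum_{i \in G_k} \by_i \leq \aeps|K| + 1$. Second, summing the cover constraint of $P^e_K$ against $w(i)$ and using $\sum_C \bz_C \leq |K|$ gives $\sum_i w(i)\by_i \leq |K| W^*_K$, which together with $w(i) > \aeps W^*_K$ for heavy items yields $\sum_{i \in H_{K,\aeps}} \by_i < |K|/\aeps$. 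Since each of the first $\tau - 1$ closed groups contributes at least $\aeps|K|$ to this sum, $\tau - 1 < \aeps^{-2}$.

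For Property 2, let $\bz$ be the fractional witness of $\by \in (1-\delta)P_K$, so $\sum_C \bz_C \leq (1-\delta)|K|$ and $\sum_{C \ni i} \bz_C \geq \by_i$ for every $i$. I plan a three-stage packing. In Stage A, I put each $i \in S \cap G_1$ in its own bin, consuming at most $|S \cap G_1| \leq \aeps|K|$ bins. In Stage B, I pack the heavy items of $S \cap G_k$ for each $k \geq 2$ by a shifting argument: because items are sorted in decreasing weight and the groups are contiguous in this order, every $i \in G_k$ satisfies $w(i) \leq w(j)$ for every $j \in G_{k-1}$, so swapping $j$ for $i$ inside any configuration $C \ni j$ produces a valid configuration. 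The fractional supply of $G_{k-1}$ items in $\bz$ is $\sum_C \bz_C |C \cap G_{k-1}| \geq \sum_{j \in G_{k-1}} \by_j \geq \aeps|K|$ (for $k - 1 \leq \tau - 1$), which covers the demand $|S \cap G_k| \leq \aeps|K|$. Finally, in Stage C, I First-Fit the light items $S \cap L_{K,\aeps}$ into the bins opened by Stages A and B (and new bins when needed); since every light item fits into any bin with at least $\aeps W^*_K$ free capacity and $w(S \cap L_{K,\aeps}) \leq \sum_{i \in L_{K,\aeps}} \by_i w(i) + \lambda W^*_K$, the number of additional bins opened is at most roughly $2\lambda$, using that the heavy mass $\sum_{i \in H_{K,\aeps}} \by_i w(i)$ already occupies a corresponding fraction of the $(1-\delta)|K|$ fractional bins.

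The main obstacle is converting the formally shifted fractional packing in Stage B into an honest integer packing with additive overhead bounded by the stated $4 \cdot 4^{\aeps^{-2}}$. The crucial structural feature enabling this is that, after grouping, the heavy items of $S$ fall into only $\tau \leq \aeps^{-2} + 1$ interchangeable size classes, so the rounding reduces to that of a bin-packing LP with a constant number of item types. I would handle this either via a Karmarkar--Karp style rounding (whose additive loss is polynomial in the number of types, and is easily absorbed into $4 \cdot 4^{\aeps^{-2}}$) or by an explicit combinatorial construction that sweeps through groups $G_\tau, G_{\tau-1}, \ldots, G_2$ and, for each, replaces at most $|S \cap G_k|$ integrally-rounded copies of $G_{k-1}$-items inside configurations obtained from $\bz$. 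Summing the contributions of the three stages then yields the total bound $(1-\delta + 3\aeps)|K| + 4 \cdot 4^{\aeps^{-2}} + 2\lambda$.
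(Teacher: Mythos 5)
Your construction of the groups and your proof of Property 1 (including the bound $\tau\leq \aeps^{-2}+1$ via $\sum_{i\in H_{K,\aeps}}\by_i<|K|/\aeps$) are correct and match the paper. Your three-stage plan for Property 2 — one bin per item of $S\cap G_1$, shifting $S\cap G_k$ into the "slots" that $\bz$ fractionally provides for $G_{k-1}$, then First-Fit for the light items — is also the paper's strategy, and your supply-versus-demand inequality $\sum_C \bz_C|C\cap G_{k-1}|\geq \aeps|K|\geq |S\cap G_k|$ is exactly the inequality the paper uses.

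However, the step you yourself flag as "the main obstacle" — turning the fractional $\bz$ into an integral collection of bins with additive overhead $4\cdot 4^{\aeps^{-2}}$ — is the heart of the proof, and neither of your two candidate routes is carried out. The paper resolves it by classifying configurations by their \emph{type} $\type(C)=(|C\cap G_1|,\ldots,|C\cap G_\tau|)$: since every $C\in\cC_K$ contains fewer than $\aeps^{-1}$ heavy items, the number of distinct types is at most $4^{\aeps^{-2}}$. Letting $\ett=\sum_{C:\type(C)=T}\bz_C$, one opens $\ceil{\ett}$ empty bins of type $T$; the rounding loss is one bin per type, hence at most $4^{\aeps^{-2}}$ extra bins on top of $\sum_C\bz_C\leq(1-\delta)|K|$, and a bin of type $T$ can receive up to $T_{k-1}$ items of $S\cap G_k$ while staying feasible because $w(i)\leq w(q_{k-1})\leq w(j)$ for $i\in G_k$, $j\in G_{k-1}$. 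This type-counting observation is precisely what you have not supplied, and it is what makes the additive term a function of $\aeps$ alone. Your Karmarkar--Karp suggestion does not give this: KK's additive loss is $O(\log^2 \OPT)$, not a function of the number of size classes, and it solves a bin-packing instance rather than rounding the given witness $\bz$. (A basic-solution argument for the configuration LP with one constraint per group would work and even give overhead $\tau$ instead of $4^{\aeps^{-2}}$, but you would then need to justify passing from $\bz$ to a basic optimal solution of that LP — again a step not present in the proposal.) As written, the proposal identifies the right structure but leaves the decisive counting/rounding argument unproved.
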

\begin{proof}
It follows from Definition \ref{def:partition} that $G_1,\ldots, G_{\tau}$ can be computed in polynomial time. Also, $\sum_{i\in G_\tau} \by_i< \aeps \cdot |K| $  and
 \begin{equation}
\label{eq:group_size}
\forall 1\leq k <\tau:~~~~~\aeps\cdot   |K| \leq \sum_{i\in G_k}\by_i \leq \aeps \cdot |K| +1.
\end{equation}  Furthermore, $\tau\leq \aeps^{-2}+1$. Thus, it remains to show Property~\ref{prop:grouping_third} in the lemma.

Define the {\em type} of a configuration $C\in \cC_K$, denoted by $\type(C)$, as  the vector $T\in \mathbb{N}^\tau$ with $T_k = | C \cap G_k|$.
 Let $\cT =\{\type(C)~|~ C\in \cC_K\} $  be the set of all  types. 
 Given a type $T \in \cT$, consider a set of items $Q\subseteq H_{K,\aeps}\setminus G_1$, such that $|Q\cap G_k|\leq T_{k-1}$ for any $2\leq k\leq \tau$, then
 $w(Q)\leq W^*_K$. This is true since we assume the items in $H_{K,\aeps}$ are sorted in non-increasing order by weights. We use this key property to construct a packing for $S$.
 
 We note that $\sum_{k=1}^{\tau}|C\cap G_k| < \aeps^{-1}$ for any $C\in \cC_K$  (otherwise $w(C)>W^*_K$, as $G_k\subseteq H_{K,\aeps}$). It follows that $|\cT|\leq 4^{\aeps^{-2}}$. Indeed, the number of types is bounded by the number of different non-negative integer $\tau$-tuples whose sum is at most $\aeps^{-1}$.

By Definition~\ref{def:extended_block}, there exists $\bz\in [0,1]^{\cC_K}$ such that $(\by,\bz)\in (1-\delta)P^e_K$. For $T\in \cT$, let ${\ett} =
\sum_{ C\in \cC_K \textnormal{ s.t. } \type(C)=T }
 \bz_C$. Then, for any  $1\leq k\leq \tau-1$, we have
\begin{equation}
\label{eq:eta_bound}
\aeps |K|\leq  \sum_{i\in G_k} \by_i \leq \sum_{i\in G_k} \sum_{~C\in \cC_k  \text{ s.t. } i\in C~}
 \bz_C = \sum_{C\in \cC_K} |G_k\cap C| \bz_C=
 \sum_{T\in \cT} T_k \cdot {\ett}
\end{equation}
The first inequality follows  from \eqref{eq:group_size}. The second inequality follows from \eqref{eq:PeK_def}. The two equalities follow by rearranging the terms. 

Using $\bz$ (through the values of ${\ett}$) we define an assignment of 
$S\cap (G_2 \cup \ldots\cup  G_\tau)$ to $\eta = \sum_{T\in \cT} \ceil{\ett}$ bins.  We initialize $\eta$ sets (bins) $A_1,\ldots, A_\eta = \emptyset $ and associate a type with each set $A_b$, such that there are $\ceil{\ett}$ sets associated with the type $T\in \cT$, using a function $R$. That is, let $R:\{1,2,\ldots, \eta\}\rightarrow \cT$  such that $|R^{-1}(T)| = \ceil{\ett}$. 
We assign
the items in $S\cap (G_2 \cup \ldots \cup G_\tau)$ to $A_1,\ldots, A_{\eta}$  while ensuring that $|A_b\cap G_k|\leq R(b)_{k-1}$ for any $1\leq b \leq \eta$ and $2\leq k \leq \tau$. In other words,
the number of items assigned to $A_b$ from $G_k$ is at most the number of items from $G_{k-1}$ in the configuration type $T$ assigned to bin $b$ by $R$. The assignment is obtained as follows.
For every $2\leq k\leq \tau$, iterate over the items $i\in S\cap G_k$, find $1\leq b \leq \eta$ such that $|A_b\cap G_k| < R(b)_{k-1}$  and set $A_b \leftarrow A_b \cup \{i\}$. It follows from \eqref{eq:eta_bound} and the conditions of the lemma that such $b$ will always be found. 

Upon completion of the process, we have that $S\cap \left( G_2\cup \ldots  \cup G_\tau\right) = A_1\cup \ldots \cup A_{\eta}$.
Furthermore, for every $1\leq b\leq \eta$, there are $C\in \cC_K$ and $T\in \cT$ such that $\type(C)=T=R(b)$.  Since $A_b \subseteq G_2\cup \ldots \cup G_{\tau}$, we have
$$
w(A_b) = \sum_{k=2}^{\tau} w(A_b\cap G_k ) \leq \sum_{k=2}^{\tau}   T_{k-1} \cdot w(q_{k-1}) =
\sum_{k=2}^{\tau}   |C\cap G_{k-1}| \cdot w(q_{k-1}) 
\leq \sum_{i\in C} w(i)\leq W^*_K.
$$
The first inequality holds since $w(q_{k-1})\geq w(i)$ for every $i\in G_{k}$, and the second holds since $w(q_{k-1})\leq w(i)$ for every $i\in G_{k-1}$. By similar arguments,
for every $2\leq k \leq \tau$, we have
\begin{equation}
\label{eq:group_weight}
w(S\cap G_k  )\leq |S\cap G_k| \cdot w(q_{k-1}) \leq \aeps |K| \cdot w(q_{k-1}) \leq \sum_{i\in G_{k-1} } \by_i \cdot w(q_{k-1})\leq 
\sum_{i\in G_{k-1} } \by_i\cdot  w(i).
\end{equation}
The third inequality is due to~\eqref{eq:group_size}. 
 Using \eqref{eq:group_weight} and the conditions in the lemma, 
\begin{equation}
\label{eq:no_G1_bound}
\begin{aligned}
w\left(S\setminus G_1\right) &= w(S\cap L_{K, \aeps} )+ \sum_{k=2}^{\tau }w(S\cap G_k) \
\leq \sum_{i\in L_{K,\aeps} }\by_i  w(i) +\lambda W^*_K +\sum_{k=1}^{\tau-1} \sum_{i\in G_{k}} \by_i  w(i)\\
&
\leq  \sum_{i\in I } \by_i \cdot w(i) + \lambda W^*_K\leq (1-\delta)W^*_K \cdot |K|+\lambda W^*_K.
\end{aligned}
\end{equation}

We use First-Fit (see, e.g., Chapter 9 in \cite{Va13}) to add the items in  $S\cap L_{K,\aeps}$ to the sets (=bins) $A_1, \ldots, A_{\eta}$ while maintaining the capacity constraint, $w(A_b)\leq W^*_K$. 
%
First-Fit
iterates over the items
$i\in S\cap L_{K,\aeps}$ and searches for a minimal $b$ such that $w(A_b \cup \{i\})\leq W^*_K$. If such $b$ exists, First-Fit updates $A_b \leftarrow A_b\cup \{i\}$; otherwise, it adds a new bin with $i$ as its content.  
%
Let $\eta'$ be the number of bins by the end of the process. 
 As $w(i)\leq \aeps W^*_K$ for $i\in S\cap L_{K,\aeps}$, and due to~\eqref{eq:no_G1_bound}, it holds that 
 $\eta' \leq \max\{\eta,   \left(|K|(1-\delta)+\lambda\right)(1+2\aeps)+1\}$. 
Finally,
$$
\eta = \sum_{T\in \cT} \ceil{\ett}
\leq |\cT| + \sum_{T\in \cT} 
{\ett}\leq
4^{\aeps^{-2}}+ \sum_{C\in \cC_K} \bz_C \leq 4^{\aeps^{-2}}+ (1-\delta )|K|.
$$
Thus, there is a packing of $S\setminus {G_1}$ into at most $(1-\delta)|K| + 4^{\aeps^{-2}}+1  + 2\aeps|K|+2\lambda$ bins of capacity $W^*_K$. Since $|S\cap G_1|\leq \aeps |K|$, each of the items in $S\cap G_1$ can be packed into a bin of its own. This yields a packing using at most $(1-\delta+3\aeps)|K| + 4\cdot4^{\aeps^{-2}}+2\lambda$  bins.

\end{proof}

\section{Uniform Capacities }
\label{sec:block}

In this section we apply fractional grouping (as stated in Lemma~\ref{lem:grouping}) to solve the Monotone Submodular Multiple Knapsack Problem with Uniform Capacities (USMKP). 
An instance of the problem consists of an MKC $(w,B,W)$  over a set $I$ of items, such that $W^*_B=W(b)$ for all $b\in B$,  and  a submodular function $f:2^I\rightarrow \mathbb{R}_{\geq 0}$.
For simplicity, we associate a solution for the problem with a feasible assignment $A:B\rightarrow 2^I$.
Then, the set of assigned items is given by $S=\bigcup_{b\in B} A(b)$.

Our algorithm for USMKP instances applies
{\em Pipage Rounding}~\cite{AS04,CCPV07}. 
Given  a (fractional) solution $\bx\in [0,1]^I$, a submodular function $f$, and two items $i_1, i_2 \in I$ with costs $c_1, c_2$, let $F$ be the multilinear extension of $f$. 
Pipage Rounding returns  a new random solution $\bx'\in [0,1]^I$ such that $\E\left[F(\bx')\right]\geq F(\bx)$, $\bx'_i=\bx_i$ for $i\in I\setminus \{i_1, i_2\}$, $\bx_{i_1} \cdot c_1 +\bx_{i_2} \cdot c_2 =\bx'_{i_1} \cdot c_1 +\bx'_{i_2} \cdot c_2 $, and either $\bx'_{i_1}\in \{0,1\}$ or $\bx'_{i_2}\in \{0,1\}$. 
Algorithm~\ref{alg:uniform} calls the subroutine $\pipage(\bx, f, G, \bc)$, which can be implemented by an iterative application of Pipage rounding, as summarized in the next result.
\begin{lemma}
	\label{lem:pipage}
	There is a polynomial time procedure $\pipage(\bx, f, G, \bc)$ for which the following holds. Given $\bx\in [0,1]^{I}$, a submodular function $f:2^I\rightarrow \mathbb{R}_{\geq 0}$, a subset of items $G\subseteq I$ and a cost vector for the items $\bc \in \mathbb{R}_{\geq 0 }^{G}$, the procedure returns a random vector $\bx'\in [0,1]^I$ such that $\E\left[F(\bx')\right]\geq F(\bx)$, $\bx'_i\in \{0,1\}$ for $i\in G$, $\bx'_i = \bx_i$ for all $i\in I\setminus G$, and there is $i^*\in G$ such that $\sum_{i\in G} \bx'_i \cdot c_i \leq c_{i^*}+\sum_{i\in G} \bx_i \cdot c_i$.  
\end{lemma}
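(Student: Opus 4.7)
The plan is to implement $\pipage(\bx, f, G, \bc)$ by repeatedly applying the single-step Pipage Rounding operation (described just before the lemma) to pairs of fractional coordinates lying in $G$, and then disposing of the one remaining fractional coordinate, if any, by an explicit rounding step. I would initialize $\bx^{(0)} = \bx$ and, so long as there exist two distinct $i_1, i_2 \in G$ with $\bx^{(t)}_{i_1}, \bx^{(t)}_{i_2} \in (0,1)$, apply one Pipage step to the pair $(i_1,i_2)$ with weights $(c_{i_1}, c_{i_2})$ to obtain $\bx^{(t+1)}$. By the cited guarantee, this step leaves every coordinate outside $\{i_1,i_2\}$ (in particular every coordinate outside $G$) unchanged, preserves $\bx^{(t)}_{i_1} c_{i_1} + \bx^{(t)}_{i_2} c_{i_2}$ pointwise, satisfies $\E[F(\bx^{(t+1)}) \mid \bx^{(t)}] \geq F(\bx^{(t)})$, and forces at least one of $\bx^{(t+1)}_{i_1}, \bx^{(t+1)}_{i_2}$ to lie in $\{0,1\}$. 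Hence the number of fractional coordinates of $G$ strictly decreases, so the loop terminates after at most $|G|$ iterations, each taking polynomial time.

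When the loop halts, at most one coordinate of $G$ remains fractional; call it $i^*$ (if every coordinate of $G$ is already integral, the conclusion holds trivially for any $i^*\in G$). I would then round $\bx^{(t)}_{i^*}$ to $1$ with probability $\bx^{(t)}_{i^*}$ and to $0$ otherwise, producing the final $\bx'$. Because $F$ is linear in any single coordinate (as the multilinear extension), this yields $\E[F(\bx') \mid \bx^{(t)}] = F(\bx^{(t)})$. Iterating the tower property together with the per-step guarantee then gives $\E[F(\bx')] \geq F(\bx)$, and by construction $\bx'_i \in \{0,1\}$ for every $i \in G$ while $\bx'_i = \bx_i$ for every $i \in I \setminus G$.

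It remains to verify the cost inequality. Each Pipage iteration preserves $\sum_{i\in G} \bx^{(t)}_i c_i$ pointwise, since only coordinates $i_1,i_2\in G$ change and their weighted sum is preserved; thus when the loop terminates this partial sum still equals $\sum_{i \in G} \bx_i c_i$. The final rounding of $i^*$ changes the sum by either $(1-\bx^{(t)}_{i^*}) c_{i^*}$ (on rounding up) or $-\bx^{(t)}_{i^*} c_{i^*}$ (on rounding down), and both alternatives are bounded above by $c_{i^*}$ since $c_{i^*}\geq 0$. Hence $\sum_{i\in G} \bx'_i c_i \leq c_{i^*} + \sum_{i\in G} \bx_i c_i$, as required.

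There is no real obstacle here, as the single-step Pipage operation cited in the paper already supplies all the nontrivial guarantees on $F$; the only subtle point is that one fractional coordinate can be left without a partner to pair against, and the integral rounding of that lone coordinate is exactly what forces the additive slack of $c_{i^*}$ in the cost bound.
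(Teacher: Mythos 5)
Your proposal is correct and follows exactly the route the paper intends: the paper itself only remarks that $\pipage$ "can be implemented by an iterative application of Pipage rounding," and your argument — pairing fractional coordinates of $G$ until at most one remains, then independently rounding the lone survivor $i^*$, whose rounding accounts for the additive slack $c_{i^*}$ — is the standard realization of that remark. All four guarantees (expectation of $F$, integrality on $G$, invariance off $G$, and the cost bound) are verified correctly.
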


To solve USMKP instances, our algorithm initially
finds $\by\in P_B$, where $P_B$  is the block polytope of $B$ (note that $B$ is a block in this case), for which $F(\by)$ is {\em large} ($F$ is the multilinear extension of the value function $f$). 
The algorithm chooses a small value for $\aeps$ and uses $G_1,\ldots, G_{\tau}$, the $\aeps$-grouping of $(1-4\aeps)\by$, to guide the rounding process. 
Pipage rounding is used to convert $(1-4\aeps)\cdot\by$ to $S\subseteq I$ while preserving the number of selected items from each group as $\approx \aeps |B|$, and the total weight of items selected from $L_{B,\aeps}$ (i.e., $\aeps$-light items) as $\approx (1-4\aeps)\cdot \sum_{i\in L_{B,\aeps}} \by_i  \cdot w(i)$.
An approximation algorithm for bin packing is then used to find a packing of $S$ to the bins. Lemma \ref{lem:grouping} ensures the resulting packing uses at most $|B|$ bins for sufficiently large $B$. In case the packing 
requires more than $|B|$ bins we simply assume the algorithm returns an empty solution.
We give the pseudocode in Algorithm~\ref{alg:uniform}.

\begin{algorithm}
	\SetAlgoLined
	\SetKwInOut{Input}{Input}\SetKwInOut{Configuration}{Configuration}
	\DontPrintSemicolon
	
	\KwIn{An MKC $(w,B,W)$ over $I$ with uniform capacities. A submodular function $f:2^I\rightarrow \mathbb{R}_{\geq 0}$.}
	
	Find an approximate solution $\by\in P_B$ for $\max_{\by \in P_B} F(\by)$, where $P_B$ is the block polytope of $B$, and $F$ is the multilinear extension of $f$. 
	\label{uniform:opt}\;
	Choose $\aeps = \min\left\{
	\left( \log|B|\right)^{-\frac{1}{4}},  \frac{1}{2}\right\}
	$.\;
	Set		$\by^0\leftarrow (1-4\aeps)\by$. and let $G_1, \dots, G_{\tau}$ be the $\aeps$-grouping of $\by^0$. 
	\label{def_x_for_pipage}\;
	

	{\bf for} {$k=1,2,\ldots, \tau$}  {\bf do}
		$\by^k \leftarrow \pipage\left(\by^{k-1},f,G_k, \bar{1}\right)$. \;
	$\by' = \pipage\left(\by^{\tau},f,L_{B,\aeps}, \left(w(i) \right)_{i\in L_{B, \aeps}}\right)$.
	\label{step:def_yprime}\;
	Let $S=\{i\in I ~|~ \by'_i = 1\}$. \label{uniform:S_def}\;

	Pack the items in $S$ into $B$ using a bin packing algorithm. Return the resulting assignment. 
	\label{uniform:pack}
	
	
	\caption{Submodular Multiple Knapsack with Uniform Capacities} 
	\label{alg:uniform}
\end{algorithm}

\begin{lemma}
Algorithm \ref{alg:uniform} yields a $\left(1-e^{-1}-O\left(\left(\log |B|\right)^{-\frac{1}{4}} \right)\right)$-approximation for  USMKP.
\end{lemma}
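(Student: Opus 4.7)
The plan is to separate the analysis into a \emph{value bound} and a \emph{feasibility bound}. The value bound will show that $\E[f(S)] \geq (1-e^{-1}-O(\aeps))\,\textnormal{OPT}$, while the feasibility bound will show that for $\aeps = (\log |B|)^{-1/4}$ and $|B|$ sufficiently large, the bin packing step at Line~\ref{uniform:pack} actually packs $S$ into at most $|B|$ bins, so the algorithm returns a valid assignment rather than an empty one. (The case of small $|B|$ can be dispatched by brute force.)

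For the value bound, I would start from the soundness of the block polytope: by Lemma~\ref{lem:soundness} the indicator of an optimal solution lies in $P_B$, hence $\max_{\by \in P_B} F(\by) \geq \textnormal{OPT}$. Using the continuous greedy algorithm with the FPTAS of Lemma~\ref{lem:poly_fptas} as its linear optimization oracle, the $\by$ computed at Line~\ref{uniform:opt} satisfies $F(\by) \geq (1-e^{-1}-o(1))\,\textnormal{OPT}$. Scaling at Line~\ref{def_x_for_pipage} loses at most a $(1-4\aeps)$ factor, because the multilinear extension of a monotone submodular function is concave along any ray from the origin, yielding $F((1-4\aeps)\by) \geq (1-4\aeps) F(\by) + 4\aeps\, f(\emptyset) \geq (1-4\aeps) F(\by)$. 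The $\tau+1$ pipage invocations each weakly increase $\E[F(\cdot)]$ by Lemma~\ref{lem:pipage}, and the output $\by'$ is integral on all of $I$ (items with $w(i)>W^*_B$ have $\by_i=0$, while the remaining items lie in $H_{B,\aeps}\cup L_{B,\aeps}$, each of which is rounded in some invocation). Hence $\E[f(S)] = \E[F(\by')] \geq F(\by^0) \geq (1-4\aeps)(1-e^{-1}-o(1))\,\textnormal{OPT}$.

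For the feasibility bound I would collect the deterministic guarantees of Lemma~\ref{lem:pipage}. The pipage call with cost vector $\bar{1}$ on $G_k$ gives $|S\cap G_k| = \sum_{i\in G_k}\by^k_i \leq 1 + \sum_{i\in G_k}\by^0_i \leq \aeps|B|+2$ by Property~1 of fractional grouping applied to $\by^0$. The final pipage call with cost $w(\cdot)$ on $L_{B,\aeps}$ similarly yields $w(S\cap L_{B,\aeps}) \leq w(i^*) + \sum_{i\in L_{B,\aeps}}\by^0_i w(i) \leq \aeps W^*_B + \sum_{i\in L_{B,\aeps}}\by^0_i w(i)$ for some $i^*\in L_{B,\aeps}$. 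Since $\by^0 \in (1-4\aeps)P_B$, I can invoke Lemma~\ref{lem:grouping} with $\delta=4\aeps$ and $\lambda=\aeps$ (the $O(1)$ off-by-two slack in $|S\cap G_k|$ is absorbed by the ceilings in the shifting argument in the proof of Lemma~\ref{lem:grouping}, since the total capacity $\sum_{T\in\cT}\lceil \eta(T)\rceil T_{k-1}$ already exceeds $\aeps|B|$ by up to $|\cT|\cdot \aeps^{-1}$). This certifies that $S$ admits a packing into at most $(1-\aeps)|B| + 4\cdot 4^{\aeps^{-2}} + 2\aeps$ bins of capacity $W^*_B$, and a polynomial-time bin packing algorithm such as Karmarkar--Karp converts this into an actual packing using at most $(1-\aeps)|B| + O(4^{\aeps^{-2}}) + O(\log^2|B|)$ bins.

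The main obstacle and final step is balancing $\aeps$: we need simultaneously (i) the value loss $O(\aeps)$ from the scaling to be small and (ii) the additive overhead $4^{\aeps^{-2}}$ to be dominated by $\aeps|B|$. The choice $\aeps=(\log|B|)^{-1/4}$ resolves this tension, since $\aeps^{-2}=\sqrt{\log|B|}$ gives $4^{\aeps^{-2}} = 2^{2\sqrt{\log|B|}} = |B|^{o(1)}$, which together with $\log^2|B|$ is comfortably dominated by $\aeps|B| = |B|/(\log|B|)^{1/4}$ once $|B|$ exceeds a universal constant. Under this choice the packing fits inside $|B|$ bins and the algorithm returns the assignment of $S$, so combining with the value bound yields $\E[f(S)] \geq (1-4\aeps)(1-e^{-1}-o(1))\,\textnormal{OPT} = (1-e^{-1}-O((\log|B|)^{-1/4}))\,\textnormal{OPT}$, as required.
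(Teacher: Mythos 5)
Your value bound follows the paper's proof essentially verbatim (soundness of $P_B$, continuous greedy with the FPTAS oracle, concavity along rays for the $(1-4\aeps)$ scaling, and Lemma~\ref{lem:pipage} to preserve $\E[F(\cdot)]$), and your choice and balancing of $\aeps=(\log|B|)^{-1/4}$ is exactly right. The problem is in how you dispose of the additive ``$+2$'' slack in $|S\cap G_k|\leq \aeps|B|+2$.

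You claim this slack is absorbed inside the proof of Lemma~\ref{lem:grouping} because the capacity $\sum_{T\in\cT}\ceil{\ett}\,T_{k-1}$ ``already exceeds $\aeps|B|$ by up to $|\cT|\cdot\aeps^{-1}$.'' That phrase gives an \emph{upper} bound on the surplus created by the ceilings, not a lower bound: if every $\ett$ happens to be an integer, the ceilings contribute nothing, and inequality~\eqref{eq:eta_bound} only guarantees $\sum_{T}\ett\, T_{k-1}\geq \aeps|K|$ --- with equality possible. In that case a set with $|S\cap G_k|=\aeps|B|+2$ cannot be fully assigned by the shifting argument, and Lemma~\ref{lem:grouping} as stated (which requires $|S\cap G_k|\leq\aeps|K|$ exactly) does not apply to your $S$. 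The same issue recurs for $G_1$, whose items are each given their own bin under the hypothesis $|S\cap G_1|\leq\aeps|K|$.

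The fix the paper uses, and which you need here, is external to the lemma: pick a minimal $R\subseteq S$ with $|(S\setminus R)\cap G_k|\leq\aeps|B|$ for all $k$, so $|R|\leq 2\tau\leq 2(\aeps^{-2}+1)$; apply Lemma~\ref{lem:grouping} to $S\setminus R$ with $\delta=4\aeps$, $\lambda=\aeps$; then place each item of $R$ in a bin of its own. This adds only $O(\aeps^{-2})=O(\sqrt{\log|B|})$ bins, which your budget $\aeps|B|$ absorbs along with the $4\cdot 4^{\aeps^{-2}}$ and $O(\log^2|B|)$ terms, so the rest of your balancing argument goes through unchanged.
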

\begin{proof}
	Let $A^*$ be an optimal solution for the input instance, and $\OPT = f\left( \bigcup_{b\in B} A^*(b) \right)$ its value. By Lemma \ref{lem:soundness}, $\one_{\bigcup_{b\in B} A^*(b)}  \in P_B$. 
Let $c= 1-e^{-1}$.
  By using the algorithm of \cite{CCPV11}
  we have that
$F(\by)\geq \left(c-\frac{1}{|I|}\right)\cdot \OPT$ 
($\by$ is defined in Step~\ref{uniform:opt} of Algorithm~\ref{alg:uniform}).  
The algorithm of \cite{CCPV11}  is used with the FPTAS of Lemma \ref{lem:poly_fptas} as an oracle for solving linear optimization  problems over $P_B$.
We note that this would not affect the approximation guarantee.

Since the multilinear extension has negative second derivatives  \cite{CCPV11}, it follows that $F(\by^0) \geq (1-4 \aeps) \cdot \left(c-\frac{1}{|I|}\right) \cdot \OPT$. 
Now, consider the vector $\by'$ output in Step~\ref{step:def_yprime} of the algorithm.
By Lemma \ref{lem:pipage}, it follows that $\E\left[F(\by')\right] \geq F(\by^0) \geq (1-4\aeps) \cdot \left(c-\frac{1}{|I|}\right) \cdot \OPT$, and $\by'\in \{0,1\}^{I}$ (note that $\by'_i=\by_i =0$ for any $i$ with $w(i)>W^*_B$ due to \eqref{eq:PeK_def}). Thus, for the set $S$ defined in Step~\ref{uniform:S_def} of the algorithm, we have 
$\E\left[f(S)\right] \geq (1-4\aeps) \cdot \left(c -\frac{1}{|I|}\right) \cdot \OPT\geq
\left(c - O\left( \left(\log |B|\right)^{-\frac{1}{4}}\right)\right) \cdot \OPT$ (observe we may assume w.l.o.g that $|I|\geq |B|$). 

To complete the proof, it remains to show that the bin packing algorithm in Step~\ref{uniform:pack}  packs all items in $S$ into the bins $B$. By Lemma \ref{lem:pipage}, for any $1\leq k \leq \tau$, it holds that $|S\cap G_k| = \sum_{i\in G_k} \by'_i \leq 1+ \sum_{i\in G_k} \by^0_i\leq \aeps \cdot |B| +2$ (the last inequality follows from Lemma~\ref{lem:grouping}).
Similarly, there is  $i^*\in L_{B,\aeps}$ such that 
$$
w(S\cap L_{B,\aeps})= \sum_{i\in L_{B,\aeps}} \by'_i \cdot w(i)
\leq
w(i^*) +\sum_{i\in L_{B,\aeps}} \by^0_i\cdot w(i)
\leq
  \aeps \cdot W^*_B  +\sum_{i\in L_{B,\aeps}} \by^0_i\cdot w(i).$$
  
To meet the conditions of Lemma \ref{lem:grouping}, we need to remove (up to) two items from each group, i.e., $S \cap G_k$, for $1 \leq k \leq \tau$. Let $R\subseteq S$ be a minimal subset such that $|(S \setminus R) \cap G_k|\leq \aeps |B|$ for all $1\leq k \leq \tau$. By the above we have that $|R|\leq 2\cdot \tau \leq 2 \cdot (\aeps^{-2}+1)$. 
Therefore, $S\setminus R$ satisfies the conditions of Lemma \ref{lem:grouping}. Hence, by taking $\delta = 4\aeps$ and $\lambda =\aeps$, the items in $S\setminus R$ can be packed into $(1-\aeps)|B| + 4\cdot 4^{\aeps^{-2}} +2\aeps $ bins. By using an additional bin for each item in $R$, and assuming $|B|$ is large enough,  the items in  $S$ can be packed into 
$$ (1-\aeps)|B| + 4\cdot 4^{\aeps^{-2}} +2\aeps +2 \cdot (\aeps^{-2}+1)   \leq 
|B| -\frac{|B|}{ \left( \log |B| \right)^{\frac{1}{4}}}
+ 
5\cdot 4^{\sqrt{\log |B|}} +3
\leq |B|
 $$
  bins of capacity $W^*_B$. Recall that the algorithm of~\cite{KK82} returns a packing in at most $\OPT+O(\log^2\OPT)$ bins. Thus, for  large enough $|B|$, the number of bins used in Step~\ref{uniform:pack} of Algorithm~\ref{alg:uniform} is at most
$$|B| -\frac{|B|}{ \left( \log |B| \right)^{\frac{1}{4}}}
	+   5\cdot 4^{\sqrt{\log |B| }}
	\ +O(\log^2|B|)\leq |B|.$$

Finally, we note that Algorithm~\ref{alg:uniform} can be implemented in polynomial time. 

\end{proof}

\section{Approximation Algorithm}
\label{sec:algorithm}

In this section we present our algorithm for general instances of $d$-MKCP, which gives the result in Theorem \ref{thm:main}. 
In designing the algorithm, a key observation is that
we can restrict our attention to $d$-MKCP instances of certain structure, with other
crucial properties satisfied by the objective function.
For the {\em structure}, we assume the bins are partitioned into {\em levels} by capacities, using the following definition of~\cite{FKNRS20}. 

\begin{definition}
	For any $N\in \mathbb{N}$, a set of bins $B$ and capacities $W:B\rightarrow \mathbb{R}_{\geq 0}$,  
	a partition $(K_j)_{j=0}^\ell$ of $B$
	is {\em $N$-leveled} if, for all  $0\leq j \leq \ell$, $K_j$ is a block and  $|K_j|= N^{\floor{\frac{j}{N^2}}}$. 
	We say that $B$ and $W$ are {\em $N$-leveled} if such a partition exists.
\end{definition}

For $N,\xi\in \mathbb{N}$,   {\em $(N,\xi)$-restricted  $d$-MKCP} is the special case of  $d$-MKCP in which for any instance ${\cal R}=\left(I,\left(w_t,B_t,W_t\right)_{t=1}^d,\II, f\right)$ it holds that $B_t$ and $W_t$ are $N$-leveled for all $1\leq t\leq d$, and $f(\{i\})-f(\emptyset)\leq \frac{\OPT}{\xi}$ for any $i\in I$, where $\OPT$ is the value of an optimal solution for the instance. We assume the input for $(N,\xi)$-restricted $d$-MKCP includes the $N$-leveled partition
$(K^t_j)_{j=0}^{\ell_t}$ of $B_t$ for all $1\leq t\leq d$.  
Combining standard enumeration with the structuring technique of \cite{FKNRS20}, we derive  the next result, whose proof is given in Appendix~\ref{app:restricted}.
\begin{lemma}
	\label{lem:restricted}
	For any $N,\xi, d \in \mathbb{N}$ and $c\in [0,1]$, a polynomial time  $c$-approximation for modular/ monotone/ non-monotone 
	$(N,\xi)$-restricted  $d$-MKCP with a matroid/ matroid intersection/ matching/ no additional constraint  implies a polynomial time   $c\cdot \left(1-\frac{d}{N}\right)$-approximation for $d$-MKCP, with the same type of function and same type of additional constraint.
\end{lemma}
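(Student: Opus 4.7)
The plan is to compose two reductions: an \emph{enumeration} step that enforces the per-item value bound $f(\{i\})-f(\emptyset)\le \OPT/\xi$, and the \emph{bin-structuring} step of~\cite{FKNRS20} that enforces the $N$-leveled form for every MKC. The enumeration is lossless (only blowing up the running time by a polynomial factor for fixed $\xi$ and $d$), while the structuring costs a multiplicative $(1-1/N)$ per MKC, combining to $(1-d/N)$ across the $d$ constraints. Feeding the resulting $(N,\xi)$-restricted instance into the assumed $c$-approximation then yields the claimed $c\cdot(1-d/N)$ guarantee. The argument is uniform across the modular, monotone submodular, and non-monotone submodular cases, as well as across the four families of the additional constraint $\II$.

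For the enumeration, first guess $\OPT$ up to a $(1+o(1))$ factor by trying polynomially many candidate values. For a fixed guess, call an item \emph{big} if $f(\{i\})-f(\emptyset)>\OPT/\xi$ and let $H$ denote the set of big items. By submodularity and the telescoping bound $f(S)-f(\emptyset)\le\sum_{i\in S}(f(\{i\})-f(\emptyset))$ (valid even for non-monotone $f$), any solution of value at least $\OPT$ contains at most $O(\xi)$ big items. Enumerate all candidate subsets $T\subseteq H$ with $|T|\le O(\xi)$ in $n^{O(\xi)}$ time, and for each $T$ and each MKC $t\in\{1,\dots,d\}$ enumerate all $|B_t|^{O(\xi)}$ assignments $A_t^T:B_t\to 2^T$ of $T$ into the bins. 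For each tuple of guesses, form the residual instance on $I\setminus H$ with contracted function $g(S)=f(S\cup T)$ (still submodular, and monotone if $f$ is), reduced capacities $W_t'(b)=W_t(b)-w_t(A_t^T(b))$, and contracted additional constraint $\II/T=\{S\subseteq I\setminus H : S\cup T\in\II\}$. By submodularity $g(\{i\})-g(\emptyset)\le f(\{i\})-f(\emptyset)\le \OPT/\xi$ on the residual items, so the singleton bound is satisfied; and $\II/T$ is again of the required type via matroid contraction, matroid-intersection contraction, or vertex deletion in the matching graph, respectively.

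Next apply the $N$-leveling procedure of~\cite{FKNRS20} to each residual MKC independently. The procedure sorts bins by capacity, partitions them geometrically into blocks of sizes $N^{\lfloor j/N^2\rfloor}$, and discards from each block a $1/N$-fraction of its capacity (e.g., the bins of largest capacity within each level), yielding an $N$-leveled MKC into which any feasible solution of the original residual MKC embeds after losing at most a $1/N$ fraction of items' contribution. Running the assumed $c$-approximation on the structured residual and appending $T$ with its guessed assignments gives a solution of value at least $c\cdot(1-d/N)\cdot\OPT$; return the best solution across all enumeration branches. The main obstacle is producing a clean accounting of the $(1-d/N)$ loss across $d$ simultaneous structurings, since each structuring refers only to the capacities of its own MKC yet must preserve a single common solution. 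The standard resolution (as in~\cite{FKNRS20}) is to work with a near-optimal fractional solution $\bx^*$ of a common relaxation and randomize the choice of discarded bins inside each structuring, so that each MKC discards only an expected $1/N$ fraction of $F(\bx^*)$; a union bound over the $d$ constraints then yields the claimed $(1-d/N)$ factor. A secondary issue is verifying, case by case, that $\II/T$ remains in the allowed family of constraints, which follows from standard matroid and matching theory.
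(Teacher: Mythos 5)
There is a genuine gap in your enumeration step. You declare an item \emph{big} when $f(\{i\})-f(\emptyset)>\OPT/\xi$, delete every big item outside the guessed set $T$ from the residual ground set, and justify this by claiming that any near-optimal solution contains $O(\xi)$ big items ``by the telescoping bound $f(S)-f(\emptyset)\le\sum_{i\in S}(f(\{i\})-f(\emptyset))$.'' That inequality points the wrong way: it only says the sum of singleton values is \emph{large} when there are many big items, which is vacuous. For genuinely submodular $f$ the claim is false and the loss can be a constant fraction of $\OPT$. Take $f(S)=\min\{|S|,1\}+\eps|S|$ with $\eps=\frac{\xi-1}{2(n-\xi)}$ and no binding constraints: every singleton has value $1+\eps>\OPT/\xi$, so every item is big, $\OPT=1+\eps n\approx 1+\frac{\xi-1}{2}$, yet any set of $O(\xi)$ items has value $\approx 1$. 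The paper avoids this entirely: it guesses the first $\xi$ items $S$ of the optimal solution $S^*$ taken in greedy marginal-value order, and defines the residual ground set as $I'=\{i\notin S:\ f(S\cup\{i\})-f(S)\le f(S)/\xi\}$, i.e.\ by marginals \emph{relative to $S$} rather than by absolute singleton values. Lemma~\ref{lem:marginal} then guarantees that \emph{every} item of $S^*\setminus S$ satisfies this bound, so nothing of the optimum is discarded and the enumeration is truly lossless; the required singleton bound for the residual objective $g(T')=f(S\cup T')$ holds by construction. Your proposal has no mechanism playing this role.

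A secondary, non-fatal issue: the ``main obstacle'' you identify in combining the $d$ structurings, and your proposed resolution via a fractional solution with randomized bin discarding, are both unnecessary. The structuring guarantee (Lemma~\ref{lem:structuring}, Property~\ref{structuring_prop2}) is deterministic and per-set: for any $S$ feasible for the original MKC there is $\tilde S\subseteq S$ feasible for the leveled MKC with $f(\tilde S)\ge(1-\frac{1}{N})f(S)$. Since feasibility for each MKC is preserved under taking subsets, applying the lemma $d$ times in sequence to the single set $S^*\setminus S$ yields one set feasible for all $d$ leveled constraints with value at least $(1-\frac{1}{N})^d\ge 1-\frac{d}{N}$ of the original, with no relaxation or union bound.
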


Our algorithm for $(N,\xi)$-restricted  $d$-MKCP associates a polytope with each instance.
To this end, we first generalize the definition of a block polytope (Definition \ref{def:block_polytope}) to represent an MKC.
We then use it to define a polytope for the whole instance. 
\begin{definition}
	\label{def:extended_gamma_partition}
	For $\gamma>0$, the  extended $\gamma$-partition polytope of an MKC $(w,B,W)$ and the partition  $\left(K_j\right)_{j=0}^{\ell}$ of $B$ to blocks is
	\begin{equation}
		\label{eq:ex_gamma_partition}
		P^e= \left\{ (\bx, \by^0, \ldots, \by^\ell)~\middle|~ \begin{array}{lcc} &\bx \in [0,1]^I&~~ \\  \forall  0\leq j \leq \ell:& \by^{j}\in P_{K_j} &\\ & \sum_{j=0}^{\ell} \by^j = \bx&\\
			\forall 0\leq j \leq \ell, |K_j|=1, i\in I \setminus L_{K_j, \gamma }:& ~\by^j_i=0&
		\end{array}
		\right\}
	\end{equation}
	where $P_{K_j}$ is the block polytope of $K_j$, 
	and $L_{K_j,\gamma}$ is the set of $\gamma$-light items of $K_j$.
	The $\gamma$-partition polytope of  $(w,B,W)$ and $\left(K_j\right)_{j=0}^{\ell}$ is
	\begin{equation}
		\label{eq:gamma_partition}
		P= \left\{ \bx  \in [0,1]^I~\middle|~ \exists \by^0,\ldots \by^{\ell} \in [0,1]^I \text{ s.t. } (\bx, \by^0, \ldots, \by^\ell) \in P^e~\right\}
	\end{equation} 
\end{definition}
	The last constraint in (\ref{eq:ex_gamma_partition}) forbids the assignment of  $\gamma$-heavy items to blocks of a single bin. This technical requirement is  used to show a concentration bound.

Finally, the {\em $\gamma$-instance polytope} of $\left(I,\left(w_t,B_t,W_t\right)_{t=1}^d,\II, f\right)$ and a partition  $\left(K^t_j\right)_{j=0}^{\ell_t}$ of $B_t$ to blocks, for $1\leq t\leq d$, is $P=P(\II) \cap \left( \bigcap_{t=1}^d P_t\right)$, where $P(\II)$ is the convex hull of $\II$ and 
 $P_t$ is the $\gamma$-partition polytope of $(w_t, B_t, W_t)$ and $\left(K^t_j\right)_{j=0}^{\ell_t}$.
In the {\em instance polytope optimization problem}, we are given 
a $d$-MKCP instance $\mathcal{R}$ with a partition of the bins to blocks for each MKC, $\bc\in \mathbb{R}^I$ and  $\gamma>0$. The objective is to find $\bx \in P$ such that $\bx \cdot \bc$ is maximized, where  $P$ is the $\gamma$-instance polytope of $\mathcal{R}$. While the problem cannot be solved exactly, it admits an FPTAS.
\begin{lemma}
\label{lem:instance_fptas}
There is an FPTAS for the instance polytope optimization problem.	
\end{lemma}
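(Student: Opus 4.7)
The plan is to cast the instance polytope optimization problem as a linear program and solve it via the ellipsoid method equipped with an approximate separation oracle inherited from Lemma~\ref{lem:poly_fptas}. The LP has decision variables $\bx \in [0,1]^I$ and, for each $1 \leq t \leq d$ and $0 \leq j \leq \ell_t$, auxiliary vectors $\by^{t,j} \in [0,1]^I$. The objective is $\max \bc \cdot \bx$, and the constraints are: $\bx \in P(\II)$; the linking equalities $\sum_j \by^{t,j} = \bx$ for every $t$; the zero-forcing constraints $\by^{t,j}_i = 0$ whenever $|K^t_j|=1$ and $i \notin L_{K^t_j,\gamma}$; and the implicit memberships $\by^{t,j} \in P_{K^t_j}$. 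The first three families are polynomial in size, and $\bx \in P(\II)$ admits a classical polynomial-time exact separation oracle in each permitted case for $\II$ (matroid, matroid intersection, matching, or $2^I$).

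The only non-routine obstacle is separating from the block polytopes $P_{K^t_j}$, which are defined through exponentially many configurations. I plan to handle this using the equivalence between approximate linear optimization and approximate separation: given Lemma~\ref{lem:poly_fptas}, for any $\delta > 0$ one obtains, in polynomial time, an oracle that either certifies $\by \in (1+\delta) P_{K^t_j}$ or returns a hyperplane valid for $P_{K^t_j}$ strictly violated by $\by$. Equivalently, one may work directly with the primal LP that explicitly includes configuration variables $\bz^{t,j}_C$: the resulting dual has polynomially many variables and exponentially many constraints of knapsack type, each of which admits an FPTAS separation oracle (since knapsack feasibility/pricing has an FPTAS), and the standard Grötschel-Lov\'asz-Schrijver framework combined with the ellipsoid method then yields an approximately optimal primal solution.

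Running the ellipsoid method with these oracles produces, in polynomial time, a candidate $(\bx, (\by^{t,j}))$ whose objective $\bc \cdot \bx$ is within a factor $(1-\delta)$ of the LP optimum, and with every $\by^{t,j}$ lying in $(1+\delta) P_{K^t_j}$. All polytopes involved are down-closed: $P_K$ is down-closed because witnesses $\bz$ are shared under coordinate decrease, and $P(\II)$ is down-closed in each permitted case since the underlying set system is downward closed. Thus dividing the entire candidate vector by $(1+\delta)$ preserves the linking equalities and produces a genuinely feasible point in the $\gamma$-instance polytope whose objective is at least $(1-O(\delta))$ times the LP optimum. Choosing $\delta = \Theta(\eps)$ delivers the claimed FPTAS. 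The main technical hurdle will be the careful error bookkeeping — verifying that the multiplicative slack tolerated by the approximate oracle can be absorbed by a single global scaling without breaking the linking equalities or degrading the objective beyond $(1-\eps)$, and that the encoding lengths of the intermediate vectors remain polynomial — but all of this is routine once the polytopes are recognized as down-closed.
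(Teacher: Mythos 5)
Your proposal is correct and follows essentially the same route as the paper: an LP over $(\bx,(\by^{t,j}))$ solved by the ellipsoid method, with exact separation for $P(\II)$ and an approximate separation oracle for each block polytope obtained from the configuration-LP dual whose knapsack-type constraints are separated via a knapsack FPTAS, followed by a global down-scaling to restore exact feasibility at a $(1-O(\delta))$ loss. The paper packages the optimization as a binary search over feasibility regions $D_v$ and scales by $(1-\eps)$ rather than dividing by $(1+\delta)$, but these are cosmetic differences.
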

The lemma follows from known techniques for approximating an exponential size linear program using an approximate separation oracle for the dual program. We give the proof in Appendix~\ref{app:poly_fptas}. 

 The next lemma asserts that the $\gamma$-instance polytope provides an approximate 
representation for the instance as a polytope.
\begin{lemma}
	\label{lem:instance_soundness}
	Given an $(N,\xi)$-restricted  $d$-MKCP instance $\mathcal{R}$ with objective function $f$, let $S, (A_t)_{t=1}^d$ be an optimal solution for $\mathcal{R}$ and $\gamma>0$. Then there is $S'\subseteq S$ such that 
	$\one_{S'} \in P$ and $f(S')\geq \left(1-\frac{N^2\cdot d}{\xi\cdot \gamma }\right)f(S)$, where $P$ is the $\gamma$-instance polytope of $\mathcal{R}$.
\end{lemma}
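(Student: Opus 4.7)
The plan is to construct $S'$ by removing from $S$ exactly those items that prevent $\one_S$ from lying in $P$. The only obstruction is the last constraint in the definition~\eqref{eq:ex_gamma_partition} of the extended $\gamma$-partition polytope, which forbids $\gamma$-heavy items inside single-bin blocks. So for every MKC $t$ and every block $K^t_j$ with $|K^t_j|=1$, denote by $b^t_j$ its unique bin and let $R$ collect all items in $A_t(b^t_j)\setminus L_{K^t_j,\gamma}$. I would then take $S'=S\setminus R$.

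The next step is to verify $\one_{S'}\in P=P(\II)\cap\bigcap_{t=1}^d P_t$. For the first factor, $\II$ is downward closed in every case permitted by the problem definition (a matroid, the intersection of two matroids, a matching, or $2^I$), so $S\in\II$ yields $S'\in\II$ and hence $\one_{S'}\in P(\II)$. For each $P_t$, I would exhibit witnesses $\by^{t,j}=\one_{S'\cap\bigcup_{b\in K^t_j}A_t(b)}$ for $0\le j\le \ell_t$; restricting $A_t$ to $K^t_j$ and then to items of $S'$ is still a feasible assignment for $K^t_j$ (removing items only lowers bin loads), so Lemma~\ref{lem:soundness} gives $\by^{t,j}\in P_{K^t_j}$. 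These witnesses sum to $\one_{S'}$, and on single-bin blocks only $\gamma$-light items remain by construction of $R$, meeting the final requirement in~\eqref{eq:ex_gamma_partition}. Hence $\one_{S'}\in P_t$ for every $t$, and therefore $\one_{S'}\in P$.

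For the value bound, the count is straightforward: a single bin of capacity $W^*_{K^t_j}$ can contain at most $\lfloor 1/\gamma\rfloor$ items of weight strictly greater than $\gamma\cdot W^*_{K^t_j}$, so each single-bin block contributes at most $1/\gamma$ items to $R$. In an $N$-leveled partition $|K^t_j|=1$ exactly when $j<N^2$, giving at most $N^2$ single-bin blocks per MKC and $|R|\le dN^2/\gamma$ in total. Submodularity (diminishing returns applied telescopically along any enumeration of $R$) yields $f(S)-f(S')\le\sum_{i\in R}\bigl(f(\{i\})-f(\emptyset)\bigr)$, and each term is at most $\OPT/\xi$ by the $(N,\xi)$-restricted assumption. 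Combining, $f(S)-f(S')\le \frac{dN^2}{\xi\gamma}\cdot f(S)$, which rearranges to the claimed inequality.

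The only mildly delicate point lies in the second paragraph—checking the exact hypothesis of Lemma~\ref{lem:soundness} and the single-bin coordinate condition of~\eqref{eq:ex_gamma_partition} for the constructed witnesses. Neither check is hard, and I do not foresee a genuine obstacle: the argument is essentially a shaving step that deletes the unavoidable violations and charges the loss to the per-item value bound guaranteed by the restricted-instance structure.
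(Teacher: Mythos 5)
Your proposal is correct and follows essentially the same route as the paper's proof: remove the $\gamma$-heavy items assigned to single-bin blocks (your $R$ is the paper's $V$), bound their number by $dN^2/\gamma$ via the same per-bin counting, certify $\one_{S'}\in P$ with the restricted assignments as witnesses for each block polytope, and charge the loss telescopically via submodularity and the $(N,\xi)$-restricted bound $f(\{i\})-f(\emptyset)\le \OPT/\xi$. No gaps.
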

Lemma \ref{lem:instance_soundness}  is proved constructively by removing the $\gamma$-heavy items assigned to blocks of a single bin in $A_t$, for $1\leq t\leq d$.  We give the proof in Appendix~\ref{app:algorithm}.

Recall that $F$ is the multiliear extension of the objective function $f$.
Our algorithm finds a vector $\bx$ in the instance polytope for which $F(x)$ approximates the optimum. 
The fractional solution $\bx$ is then rounded to an integral solution. 
Initially, a random set $R\in \II$  is sampled, with $\Pr(i\in R) = (1-\delta)^2 \bx_i$.\footnote{Recall that $\II$ is the additional constraint.}
The technique by which $R$ is sampled depends on $\II$. If $\II=2^{I}$ then $R$ is sampled according to $\bx$, i.e., $R\sim (1-\delta)^2\bx$ (as defined in Section~\ref{sec:related}). If $\II$ is a matroid constraint, the sampling of \cite{CVZ10} is used. Finally, if $\II$ is a matroid intersection, or a matching constraint, then the dependent rounding technique of \cite{CVZ11} is used. Each of the distributions admits a Chernoff-like concentration bound. These bounds are central to our proof of correctness. We refer to the above operation as sampling $R$ by $\bx$, $\delta$ and $\II$. 

\cout{
While each sampling method is slightly different, we use them obliviously via the following lemma.
\begin{lemma}
	\label{lem:sampling}
	Let $(\II ,f)$ be a valid pair,  $\bx \in P(\II)$ and $\delta>0$. Then, there is a polynomial time random algorithm which, given $\II$, $\bx$ and $\delta$, returns a random set $R\in \II$ for which the following hold.
	\begin{enumerate}
		\item For any $i\in I$, $\Pr(i\in R) = (1-\delta)^2 \bx_i$.
		\item For any $\ba\in [0,1]^I$,  $\eps>0$  and $\zeta \geq (1-\delta)^2 \ba \cdot \bx$,
		$$\Pr\left(\sum_{i\in R} \ba_i \geq (1+\eps) \zeta \right)\leq \exp\left( -\frac{\zeta\cdot \delta\cdot \eps^2}{20}\right).$$
		\item Let $\eps>0$, $f_{\max}= \max_{i\in I} f(\{i\})-f(\emptyset)$ and $\zeta \leq (1-\delta)^2 F(\bx)$, where $F$ is the multilinear extension of $f$. Also, assume $f$ is monotone.  
		Then,
		$$\Pr\left(f(R) \leq (1-\eps)\zeta \right)\leq \exp\left(- \frac{\zeta\cdot  \delta\cdot \eps^2}{20 \cdot f_{\max}}\right).$$
		\item If $\II=2^I$ then the events $(i\in R)_{i\in I}$ are independent.
	\end{enumerate}
	We say that the set $R$ is sampled by $\bx$, $\delta$ and $\II$.
\end{lemma}
Lemma \ref{lem:sampling} follows directly from \cite{CVZ10}, \cite{CVZ11}, the concentration bound of \cite{Vo10}  and the Chernoff bounds of \cite{RKPS06}. The requirement for $(\II,f)$ to be valid stems for the lack of concentration bounds for non-monotone and submodular functions in \cite{CVZ10} and \cite{CVZ11} respectively.  
}

Given the set $R$, the algorithm proceeds to a {\em purging} step.
 While this step does not affect the content of $R$ if $f$ is monotone, it is critical in the non-monotone case. Given a submodular function $f:2^I\rightarrow \mathbb{R}$, we define a purging function $\eta_f:2^I\rightarrow 2^I$ as follows.
 Fix an arbitrary order over $I$ (which is independent of $S$), initialize $J=\emptyset$ and iterate over the items in $S$ by their order in $I$. For an item $i\in S$, if $f(J\cup \{i\})-f(J)\geq 0$ then $J\leftarrow J\cup \{i\}$;
else, continue to the next item. Now, $\eta_f(S)=J$, where $J$ is the set at the end of the process. The purging function was introduced in \cite{CVZ14} and is used here similarly in conjunction with the FKG inequality.

While the above sampling and purging steps can be used to select a set of items for the solution, they do not determine how these items are assigned to the bins. We now show that it suffices to associate the selected items with blocks and then use a Bin Packing algorithm for finding their assignment to the bins in the blocks, as in Algorithm~\ref{alg:uniform}. 

Intuitively, we would like to associate a subset of items $I^t_j$ with a block $K^t_j$ in a way
that enables to assign the items in $I^t_j\cap R$ to $|K^t_j|$ bins, for $1\leq t\leq d$ and $1 \leq j \leq \ell_t$. Consider two cases. If $|K^t_j| >1$
 then we ensure $I^t_j\cap R$ satisfies conditions that allow using Fractional Grouping (see Lemma~\ref{lem:grouping}). On the other hand, if $|K^t_j|=1$, it suffices to require that $R\cap I^t_j$ 
 adheres to the capacity constraint of this bin. Such a partition $(I^t_j)_{j=0}^{\ell_t}$ of $\supp(\bx)$ can be computed for each of the MKCs. 
We refer to this partition as the   {\em Block Association} of a point in the $\gamma$-partition polytope and $\aeps$, on which the partition depends.
 In Appendix~\ref{sec:association} we give a formal definition of block association and its  properties.
 
 We proceed to analyze our algorithm  (see the pseudocode in Algorithm~\ref{alg:restricted}). 

\begin{algorithm}
	\SetAlgoLined
	\SetKwInOut{Input}{Input}\SetKwInOut{Configuration}{Configuration}
	\DontPrintSemicolon
	
	\KwIn{An $(N,\xi)$-restricted  $d$-MKCP instance $\mathcal{R}$ defined by  $\left(I,\left(w_t,B_t,W_t\right)_{t=1}^d,\II, f\right)$ and $(K^t_j)_{j=0}^{\ell_t}$, the $N$-leveled partition of $B_t$ for $1\leq t\leq d$.}
	\Configuration{$\gamma>0$, $\delta>0$, $N\in\mathbb{N}$, $\xi\in \mathbb{N}$,}

	Optimize $F(\bx)$ with $\bx \in P$, where $P$ is the
	$\gamma$-instance polytope of $\mathcal{R}$, and $F$ is the multilinear extension of $f$. \label{restricted:opt} \;
	
	Let $R$  be a random set sampled by $\bx$, $\delta $ and $\II$. Define $J=\eta_f(R)$ ($\eta_f$ is the purging function). 
	\label{restricted:sampling}
	\;
	
	Let $\by^{t,0},\ldots, \by^{t,\ell_t}\in [0,1]^{I}$ such that 
	$(\bx, \by^{t,0},\ldots, \by^{t,\ell_t})\in P_t^e$, where  $P_t^e$ is the extended $\gamma$-partition polytope of $(w_t, B_t,W_t)$ and the partition $(K^t_j)_{j=0}^{\ell_t}$, for $1\leq t\leq d$.
	\label{restricted:defs}\;
	
	Find the block association $(I^t_j)_{j=0}^{\ell_t}$ of $(1-\delta)(\bx, \by^{t,0},\ldots, \by^{t,\ell_t})$ and $\aeps=\frac{\delta}{4}$ for $1\leq t \leq d$. \label{restricted:association}\;
	
	Pack the items of $J\cap I_j^t$ into the bins of $K^t_j$ using an algorithm for bin packing if $|K^t_j|>1$, or simply assign $J\cap I_j^t$ to  $K^t_j$ otherwise . \label{restricted:packing} \;
	
	Return $J$ and the resulting assignment if the previous step succeeded; otherwise, return an empty set and an empty packing. \label{restricted:return}
	\;
	
	\caption{$(N,\xi)$-restricted $d$-MKCP} 
	\label{alg:restricted}
\end{algorithm}

\begin{lemma}
	\label{lem:restricted_approx}
	For any $d\in \mathbb{N}$, $\eps>0$ and $M>0$, there are parameters $N\in \mathbb{N}$ satisfying $N>M$, $\xi\in \mathbb{N}$, $\gamma>0$ and $\delta>0$ such that Algorithm \ref{alg:restricted} is a randomized  $(c-\eps)$-approximation for  $(N,\xi)$-restricted  $d$-MKCP, where $c=1$ for modular instances with any type of additional constraint,  $c=1-e^{-1}$ for monotone instances with a matroid constraint, and $c=0.385$ for non-monotone instances with no additional constraint.
\end{lemma}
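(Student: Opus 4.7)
The plan is to argue that Algorithm~\ref{alg:restricted} outputs a feasible solution of expected value at least $(c-\eps)\OPT$ whenever the parameters are chosen in the order $\delta,\aeps=\delta/4,\gamma,\xi,N$, each small (or large) enough so that the error term it controls is at most $\eps/O(1)$ and each later choice is allowed to depend on earlier ones. The argument decomposes into (i) a value guarantee for the fractional solution $\bx$ and its random rounding $J$, and (ii) a feasibility (packing) guarantee for Step~\ref{restricted:packing}, which is then converted to expected value via $\E[f(J)\,\one(\text{success})]\geq \E[f(J)]-\OPT\cdot\Pr(\text{fail})$.

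For the value guarantee, start from a fixed optimal solution of value $\OPT$. By Lemma~\ref{lem:instance_soundness} there is $S'\subseteq S^*$ with $\one_{S'}\in P$ and $f(S')\geq (1-\tfrac{N^2 d}{\xi\gamma})\OPT$, so $\max_{\by\in P} F(\by)\geq (1-o_{\xi}(1))\OPT$. Step~\ref{restricted:opt} then invokes the FPTAS of Lemma~\ref{lem:instance_fptas} as an approximate solvability oracle: in the modular case $F$ is linear and the FPTAS is used directly, yielding $F(\bx)\geq(1-o(1))\OPT$; in the monotone case the continuous greedy of~\cite{CCPV11} gives $F(\bx)\geq(1-e^{-1}-o(1))\max$; in the non-monotone case the $0.385$-algorithm of~\cite{BF19} gives $F(\bx)\geq(0.385-o(1))\max$, using that $P$ is downward closed. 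Concavity of $F$ along the ray from the origin yields $F((1-\delta)^2\bx)\geq(1-\delta)^2 F(\bx)$. In the monotone case $J=R$ and a standard submodular concentration bound handles the rounding; in the non-monotone case, the purging $J=\eta_f(R)$ combined with the FKG inequality, as in~\cite{CVZ14}, gives $\E[f(J)]\geq(1-O(\delta))F(\bx)\geq (c-O(\delta)-o(1))\OPT$.

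For the feasibility guarantee, fix an MKC index $t$ and a block $K^t_j$. The block association $(I^t_j)_j$ of $(1-\delta)(\bx,\by^{t,0},\dots,\by^{t,\ell_t})$ produced in Step~\ref{restricted:association} is designed (Appendix~\ref{sec:association}) so that, for every group $G_k$ of the $\aeps$-grouping of $(1-\delta)\by^{t,j}$, $\sum_{i\in G_k\cap I^t_j}(1-\delta)\bx_i\leq \aeps|K^t_j|+O(1)$, and the fractional $\aeps$-light weight assigned to $K^t_j$ is controlled; for singleton blocks, constraint~(\ref{eq:ex_gamma_partition}) restricts $I^t_j\subseteq L_{K^t_j,\gamma}$. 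Chernoff-type tail bounds for the rounding distribution in use (independent for $\II=2^I$, randomized swap rounding of~\cite{CVZ10} for a matroid, dependent rounding of~\cite{CVZ11} for matroid intersection or matching) give, with probability $1-1/\mathrm{poly}(|I|)$ per block, $|J\cap G_k\cap I^t_j|\leq \aeps|K^t_j|$ and $w(J\cap L_{K^t_j,\aeps}\cap I^t_j)\leq \sum_{i\in L}(1-\delta)\by^{t,j}_i w(i)+\aeps W^*_{K^t_j}$. These are exactly the hypotheses of Lemma~\ref{lem:grouping} with the parameter $\lambda=\aeps$, so $J\cap I^t_j$ packs into at most $(1-\delta+3\aeps)|K^t_j|+O(4^{\aeps^{-2}})\leq|K^t_j|$ bins, where the final inequality uses the leveled bound $|K^t_j|\geq N^{\lfloor j/N^2\rfloor}$ together with $N$ chosen large enough to absorb the additive constant. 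Singleton blocks are handled by a separate concentration on $w(J\cap I^t_j)$ that is tight because every item is $\gamma$-light and each contributes at most $f_{\max}\leq\OPT/\xi$. A union bound over the $O(d\max_t\ell_t)$ blocks makes the total failure probability $o(1)$.

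The main obstacle I expect is this packing step: the block sizes span many orders of magnitude (from $1$ up to $N^{\lfloor\ell_t/N^2\rfloor}$), and the concentration inequalities must be tight enough at every scale to survive the union bound. Two design choices are what make this possible and must be wielded carefully: the leveled partition forces each block either to have size $\geq N$ (where Chernoff is tight) or to equal $1$ (where the polytope constraint excludes $\gamma$-heavy items), and the restriction $f(\{i\})-f(\emptyset)\leq\OPT/\xi$ makes individual item contributions negligible. Ordering the parameters $\delta\to\aeps\to\gamma\to\xi\to N$ so that each dominates the slacks introduced by the next is the only delicate bookkeeping step of the proof.
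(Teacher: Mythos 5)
Your overall architecture matches the paper's: Lemma~\ref{lem:instance_soundness} plus the FPTAS of Lemma~\ref{lem:instance_fptas} as an approximate oracle for the value bound on $F(\bx)$, block association plus concentration plus Lemma~\ref{lem:grouping} for feasibility, and FKG for the non-monotone case. However, there is a genuine gap in how you combine the value and feasibility guarantees. The conversion $\E[f(J)\cdot\one(\text{success})]\geq \E[f(J)]-\OPT\cdot\Pr(\text{fail})$ is not valid: $J$ need not be feasible, so $f(J)$ is not bounded by $\OPT$ on the failure event, and no upper bound on $f(J)$ is established. More importantly, in the non-monotone case the crux is precisely the \emph{correlation} between $f(J)$ and the success event, which your decomposition discards. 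You invoke FKG only to bound $\E[f(J)]$ on its own --- but that quantity needs no FKG at all ($f(\eta_f(R))\geq f(R)$ and $\E[f(R)]=F((1-\delta)^2\bx)\geq(1-\delta)^2F(\bx)$). The paper instead bounds the joint expectation $\E[\kappa(R)\cdot f(J)]$, where $\kappa(R)$ is the compliance indicator, by a per-element argument: for each $i$ it applies FKG \emph{conditioned on} $i\in R$ to the two decreasing functions $\kappa(X\cup\{i\})$ and $\max\{0,f_{X\cap[i-1]}(\{i\})\}$, and this requires a separate conditional concentration bound, $\Pr(\neg\Psi_{t,j}\mid \{i\}\subseteq R)<\eps^2/2$ (Claim~\ref{claim:psi_prob_with_Q}), which your proposal does not supply. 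In the monotone case your conversion can be repaired (as the paper does) by putting the value lower bound $f(R)\geq c(1-\eps^2)^5\OPT$ into the good event itself and using non-negativity of $f$, but in the non-monotone case the naive subtraction cannot be repaired this way.

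Two secondary issues. First, your parameter ordering $\delta\to\aeps\to\gamma\to\xi\to N$ is backwards for $\gamma$ and $\xi$: the union bound over the up to $N^2$ singleton blocks requires $N^2\exp(-\Theta(\delta^3\gamma^{-1}))$ to be small, and the soundness loss requires $N^2 d/(\gamma\xi)\leq\eps^2$, so both $\gamma$ and $\xi$ must be chosen \emph{after} $N$ (the paper's order is $\delta,\aeps,N,\gamma,\xi$). Second, the per-block failure probability is not $1/\mathrm{poly}(|I|)$; for a multi-bin block it is $\exp(-\Theta(\delta^4|K^t_j|))$, which for the smallest multi-bin blocks (size $N$) is a constant independent of $|I|$. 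A union bound over all blocks closes only because the leveled structure guarantees at most $N^2$ blocks of each size and sizes grow geometrically, so $\sum_{j\geq 1}N^2(16\delta^{-2}+2)\exp(-\frac{\delta^4}{640}N^j)$ converges and is driven below $\eps^2/(4d)$ by taking $N$ large --- not by the number of blocks being small.
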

\cout{
\begin{proof}[Proof Sketch]
By Lemma \ref{lem:instance_soundness}, it follows that there is $S'\subseteq I$ such that $\one_{S'}\in P$ and $f(S')\geq \left(1- \frac{N^2\cdot d}{\xi \gamma}\right)\OPT \geq (1	-\eps^2)\OPT$, where the last inequality follows from a proper selection of the parameters. Thus, it holds that $F(\bx)\geq c\cdot(1-\eps^2)^2\OPT$. We note that the FPTAS of Lemma \ref{lem:instance_fptas} suffices for the algorithm of \cite{CCPV07} to yield an approximate solution for the multilinear optimization 
problem. In the modular case, Lemma~\ref{lem:instance_fptas} can be used directly to obtain $\bx$, as the multilinear extension of a modular function is linear (see Lemma \ref{lem:multilinear_of_linear}).

We use standard concentration bounds to show that 
with high probability (w.h.p)
$J\cap I^t_j$ satisfies the conditions of Fractional Grouping (Lemma~\ref{lem:grouping}); thus, w.h.p. the packing in Step~\ref{restricted:packing} succeeds.

In the monotone (and modular) case, concentration bounds are used to show that $f(J)$ is sufficiently large w.h.p. In the non-monotone case, the expected value of the returned solution is bounded using the  FKG inequality (as in \cite{CVZ14}).
\end{proof}
}
A formal proof of the lemma is given in Appendix \ref{app:algorithm}.
 Theorem \ref{thm:main} follows immediately from Lemmas \ref{lem:restricted_approx} and \ref{lem:restricted}.


\bibliographystyle{plain}
\bibliography{bibfile}

\begin{thebibliography}{10}

\bibitem{AS04}
Alexander~A Ageev and Maxim~I Sviridenko.
\newblock Pipage rounding: A new method of constructing algorithms with proven
  performance guarantee.
\newblock {\em Journal of Combinatorial Optimization}, 8(3):307--328, 2004.

\bibitem{AS04b}
Noga Alon and Joel~H Spencer.
\newblock {\em The probabilistic method}.
\newblock John Wiley \& Sons, 2004.

\bibitem{BET19}
Evripidis Bampis, Bruno Escoffier, and Alexandre Teiller.
\newblock {Multistage Knapsack}.
\newblock In {\em 44th International Symposium on Mathematical Foundations of
  Computer Science (MFCS 2019)}, pages 22:1--22:14, 2019.

\bibitem{BEK16}
Nikhil Bansal, Marek Eli{\'a}{\v{s}}, and Arindam Khan.
\newblock Improved approximation for vector bin packing.
\newblock In {\em Proceedings of the Twenty-Seventh Annual ACM-SIAM Symposium
  on Discrete Algorithms}, pages 1561--1579. SIAM, 2016.

\bibitem{BF19}
Niv Buchbinder and Moran Feldman.
\newblock Constrained submodular maximization via a nonsymmetric technique.
\newblock {\em Mathematics of Operations Research}, 44(3):988--1005, 2019.

\bibitem{buchbinder2014submodular}
Niv Buchbinder, Moran Feldman, Joseph Naor, and Roy Schwartz.
\newblock Submodular maximization with cardinality constraints.
\newblock In {\em Proceedings of the twenty-fifth annual ACM-SIAM symposium on
  Discrete algorithms}, pages 1433--1452. SIAM, 2014.

\bibitem{CCPV07}
Gruia C{\u{a}}linescu, Chandra Chekuri, Martin P{\'{a}}l, and Jan
  Vondr{\'{a}}k.
\newblock Maximizing a submodular set function subject to a matroid constraint.
\newblock In {\em Integer Programming and Combinatorial Optimization, 12th
  International {IPCO} Conference, Ithaca, NY, USA, June 25-27, 2007,
  Proceedings}, pages 182--196, 2007.

\bibitem{CCPV11}
Gruia C{\u{a}}linescu, Chandra Chekuri, Martin P{\'{a}}l, and Jan
  Vondr{\'{a}}k.
\newblock Maximizing a monotone submodular function subject to a matroid
  constraint.
\newblock {\em {SIAM} J. Comput.}, 40(6):1740--1766, 2011.

\bibitem{CK05}
Chandra Chekuri and Sanjeev Khanna.
\newblock A polynomial time approximation scheme for the multiple knapsack
  problem.
\newblock {\em {SIAM} J. Comput.}, 35(3):713--728, 2005.

\bibitem{CVZ10}
Chandra Chekuri, Jan Vondrak, and Rico Zenklusen.
\newblock Dependent randomized rounding via exchange properties of
  combinatorial structures.
\newblock In {\em Proceedings of the 2010 IEEE 51st Annual Symposium on
  Foundations of Computer Science}, page 575–584, 2010.

\bibitem{CVZ11}
Chandra Chekuri, Jan Vondr{\'a}k, and Rico Zenklusen.
\newblock Multi-budgeted matchings and matroid intersection via dependent
  rounding.
\newblock In {\em Proceedings of the twenty-second annual ACM-SIAM symposium on
  Discrete Algorithms}, pages 1080--1097. SIAM, 2011.

\bibitem{CVZ14}
Chandra Chekuri, Jan Vondr{\'a}k, and Rico Zenklusen.
\newblock Submodular function maximization via the multilinear relaxation and
  contention resolution schemes.
\newblock {\em SIAM Journal on Computing}, 43(6):1831--1879, 2014.

\bibitem{CG14}
Reuven Cohen and Guy Grebla.
\newblock Efficient allocation of periodic feedback channels in broadband
  wireless networks.
\newblock {\em IEEE/ACM Transactions on Networking}, 23(2):426--436, 2014.

\bibitem{FL81}
W~Fernandez De~La~Vega and George~S. Lueker.
\newblock Bin packing can be solved within 1+ $\varepsilon$ in linear time.
\newblock {\em Combinatorica}, 1(4):349--355, 1981.

\bibitem{FKNR21}
Yaron Fairstein, Ariel Kulik, Joseph, Naor, and Danny Raz.
\newblock General knapsack problems in a dynamic setting.
\newblock {\em arXiv preprint arXiv:2105.00882}, 2021.

\bibitem{FKNRS20}
Yaron Fairstein, Ariel Kulik, Joseph~(Seffi) Naor, Danny Raz, and Hadas
  Shachnai.
\newblock A $(1-e^{-1}-\varepsilon)$-approximation for the monotone submodular
  multiple knapsack problem.
\newblock In {\em 28th Annual European Symposium on Algorithms (ESA 2020)},
  pages 44:1--44:19, 2020.

\bibitem{FG95}
U.~{Feige} and M.~{Goemans}.
\newblock Approximating the value of two power proof systems, with applications
  to max 2sat and max dicut.
\newblock In {\em Proceedings of the 3rd Israel Symposium on the Theory of
  Computing Systems}, ISTCS '95, pages 182--189, 1995.

\bibitem{Fe98}
Uriel Feige.
\newblock A threshold of ln n for approximating set cover.
\newblock {\em J. ACM}, 45(4):634–652, July 1998.

\bibitem{feldman2011unified}
Moran Feldman, Joseph Naor, and Roy Schwartz.
\newblock A unified continuous greedy algorithm for submodular maximization.
\newblock In {\em 2011 IEEE 52nd Annual Symposium on Foundations of Computer
  Science}, pages 570--579. IEEE, 2011.

\bibitem{FGMS11}
Lisa Fleischer, Michel~X Goemans, Vahab~S Mirrokni, and Maxim Sviridenko.
\newblock Tight approximation algorithms for maximum separable assignment
  problems.
\newblock {\em Mathematics of Operations Research}, 36(3):416--431, 2011.

\bibitem{RKPS06}
Rajiv Gandhi, Samir Khuller, Srinivasan Parthasarathy, and Aravind Srinivasan.
\newblock Dependent rounding and its applications to approximation algorithms.
\newblock {\em J. ACM}, 53(3):324–360, May 2006.

\bibitem{GV11}
Shayan~Oveis Gharan and Jan Vondr{\'a}k.
\newblock Submodular maximization by simulated annealing.
\newblock In {\em Proceedings of the twenty-second annual ACM-SIAM symposium on
  Discrete Algorithms}, pages 1098--1116. SIAM, 2011.

\bibitem{GLS93}
M~Gr{\"o}tschel, L~Lov{\'a}sz, and A~Schrijver.
\newblock Geometric algorithms and combinatorial optimization.
\newblock {\em Algorithms and Combinatorics}, 1993.

\bibitem{GLS81}
Martin Gr{\"{o}}tschel, L{\'{a}}szl{\'{o}} Lov{\'{a}}sz, and Alexander
  Schrijver.
\newblock The ellipsoid method and its consequences in combinatorial
  optimization.
\newblock {\em Comb.}, 1(2):169--197, 1981.

\bibitem{IK75}
Oscar~H Ibarra and Chul~E Kim.
\newblock Fast approximation algorithms for the knapsack and sum of subset
  problems.
\newblock {\em Journal of the ACM (JACM)}, 22(4):463--468, 1975.

\bibitem{Ja10}
Klaus Jansen.
\newblock Parameterized approximation scheme for the multiple knapsack problem.
\newblock {\em SIAM J. Comput.}, 39(4):1392--1412, 2010.

\bibitem{Ja12}
Klaus Jansen.
\newblock A fast approximation scheme for the multiple knapsack problem.
\newblock In {\em SOFSEM'12}, pages 313--324, 2012.

\bibitem{KK82}
Narendra Karmarkar and Richard~M Karp.
\newblock An efficient approximation scheme for the one-dimensional bin-packing
  problem.
\newblock In {\em 23rd Annual Symposium on Foundations of Computer Science
  (sfcs 1982)}, pages 312--320. IEEE, 1982.

\bibitem{KPP04}
Hans Kellerer, Ulrich Pferschy, and David Pisinger.
\newblock {\em Knapsack problems}.
\newblock Springer, 2004.

\bibitem{khuller1999budgeted}
Samir Khuller, Anna Moss, and Joseph~Seffi Naor.
\newblock The budgeted maximum coverage problem.
\newblock {\em Information processing letters}, 70(1):39--45, 1999.

\bibitem{KS10}
Ariel Kulik and Hadas Shachnai.
\newblock There is no {EPTAS} for two-dimensional knapsack.
\newblock {\em Information Processing Letters}, 110(16):707--710, 2010.

\bibitem{KST13}
Ariel Kulik, Hadas Shachnai, and Tami Tamir.
\newblock Approximations for monotone and nonmonotone submodular maximization
  with knapsack constraints.
\newblock {\em Math. Oper. Res.}, 38(4):729--739, 2013.

\bibitem{lee2010maximizing}
Jon Lee, Vahab~S Mirrokni, Viswanath Nagarajan, and Maxim Sviridenko.
\newblock Maximizing nonmonotone submodular functions under matroid or knapsack
  constraints.
\newblock {\em SIAM Journal on Discrete Mathematics}, 23(4):2053--2078, 2010.

\bibitem{MT90}
S.~Martello and P.~Toth.
\newblock Knapsack problems: algorithms and computer implementations.
\newblock {\em Wiley-Interscience series in discrete mathematics and optimiza
  tion}, 1990.

\bibitem{NW78}
G.~L. Nemhauser and L.~A. Wolsey.
\newblock Best algorithms for approximating the maximum of a submodular set
  function.
\newblock {\em Mathematics of Operations Research}, 3(3):177--188, 1978.

\bibitem{Ro13}
Thomas Rothvo{\ss}.
\newblock Approximating bin packing within o (log opt* log log opt) bins.
\newblock In {\em 2013 IEEE 54th Annual Symposium on Foundations of Computer
  Science}, pages 20--29. IEEE, 2013.

\bibitem{Sc03}
Alexander Schrijver.
\newblock {\em Combinatorial optimization: polyhedra and efficiency},
  volume~24.
\newblock Springer Science \& Business Media, 2003.

\bibitem{SZZ20}
Xiaoming Sun, Jialin Zhang, and Zhijie Zhang.
\newblock Tight algorithms for the submodular multiple knapsack problem.
\newblock {\em arXiv preprint arXiv:2003.11450}, 2020.

\bibitem{sviridenko2004note}
Maxim Sviridenko.
\newblock A note on maximizing a submodular set function subject to a knapsack
  constraint.
\newblock {\em Operations Research Letters}, 32(1):41--43, 2004.

\bibitem{TKL06}
Deepak~S Turaga, Krishna Ratakonda, and Junwen Lai.
\newblock {QoS} support for media broadcast in a services oriented
  architecture.
\newblock In {\em 2006 IEEE International Conference on Services Computing
  (SCC'06)}, pages 127--134. IEEE, 2006.

\bibitem{van2010performance}
Hien~Nguyen Van, Fr{\'e}d{\'e}ric~Dang Tran, and Jean-Marc Menaud.
\newblock Performance and power management for cloud infrastructures.
\newblock In {\em 2010 IEEE 3rd international Conference on Cloud Computing},
  pages 329--336. IEEE, 2010.

\bibitem{Va13}
Vijay~V Vazirani.
\newblock {\em Approximation algorithms}.
\newblock Springer Science \& Business Media, 2013.

\bibitem{Vo10}
Jan Vondrak.
\newblock A note on concentration of submodular functions, 2010.

\bibitem{vondrak2013symmetry}
Jan Vondr{\'a}k.
\newblock Symmetry and approximability of submodular maximization problems.
\newblock {\em SIAM Journal on Computing}, 42(1):265--304, 2013.

\bibitem{xu2018blockchain}
Quanqing Xu, Khin Mi~Mi Aung, Yongqing Zhu, and Khai~Leong Yong.
\newblock A blockchain-based storage system for data analytics in the internet
  of things.
\newblock In {\em New Advances in the Internet of Things}, pages 119--138.
  Springer, 2018.

\end{thebibliography}

\appendix 

\section{Application of $d$-MKCP in Cloud Data Centers}
\label{app:applications}

In the Cross Region Cloud Provider problem, a cloud provider maintains two data centers, located in different geographical regions (e.g, an American and a European data center). Each of the data centers contains servers of varying computing capacities. 
Additionally, there is a collection of applications which  the provider may deploy on its cloud. Each application requires some compute capacity in each region.
As the demand for an application differs between the regions, the compute requirement of an application may differ significantly between the regions. 
Each application also has a specified profit the provider gains if the application is deployed.

The cloud provider needs to select a subset of applications to be deployed, and assign each of these applications to a server in each of the regions. The assignment of applications to servers must preserve the compute capacities of the servers. The provider's goal is to maximize the total profit obtained from the selected applications. 
The Cross Region Cloud Provider problem can be easily cast as an instance of $2$-MKCP with no additional constraint.

\section{Reduction to Restricted Instances}
\label{app:restricted}

In this section we prove Lemma \ref{lem:restricted}.
Let $\mathcal{T} = \left(I, \left( \mathcal{K}_t\right)_{t=1}^d, \II, f \right)$ be a  $d$-MKCP instance, where $\mathcal{K}_t = (w_t, B_t, W_t)$, $1\leq t\leq d$ are the $d$ multiple knapsack constraints. Also,
let $S$ and $(A_t)_{t=1}^{d}$ be some solution for $\mathcal{T}$,
and let $\xi\in \mathbb{N}$. We define the {\em residual instance} of $\mathcal{T},S,(A_t)_{t=1}^d$ and $\xi$ as the valid
$d$-MKCP instance $\mathcal{T}'=\left(I', \left( \mathcal{K}'_t\right)_{t=1}^d,  \II', g \right)$, where
\begin{enumerate}
	\item
	$I'=\left\{ i\in I\setminus S ~\middle|~ f(\{i\}\cup S )-f(S)\leq \frac{f(S)}{\xi}\right\}$.
	\item The function 
	$g:2^{I'}\rightarrow \mathbb{R}_{\geq 0}$ is defined by $g(T)=f(S\cup T)$.
	\item For any $1\leq t\leq  d$, $\mathcal{K}'_t = (w_t, B_t, W'_t)$, where $W'_t(b)= W_t(b) - w_t(A_t(b))$ for all $b\in B_t$.
	\item
	$\II' =\{ T\subseteq I'~|~T\cup S \in \II \}$. 
\end{enumerate}

We note that the function $g$ is non-negative and submodular. Furthermore (as shown in Lemma~\ref{lem:submodular_cup}), $g$ is monotone (modular) if $f$ is monotone (modular).
Also, if $\II$ describes a constraint (matroid, matroid intersection or matching constraint), $\II'$ describes a constraint of the same type. 
If $T\subseteq I'$ and its assignment $(D_t)_{t=1}^{d}$ define a solution for the residual 
instance $\mathcal{T}'$, then it can be easily verified that $S\cup T$ and $(A_t \cup D_t)_{t=1}^{d}$  form a solution for the original instance $\mathcal{T}$.\footnote{Given two assignments $A_1,A_2:B\rightarrow 2^I$, we define $D=A_1 \cup A_2$ by $D(b)=A_1(b)\cup A_b(b)$ for all $b\in B$.} Furthermore, by definition, $f(S\cup T)=g(T)$.

Residual instances are useful as the marginal value $g(\{i\})-g(\emptyset)$ of every item $i\in I'$ is bounded. 
The next lemma states that the residual instance
with a specific solution preserves the optimum.\footnote{Given an assignment $A:B\rightarrow 2^I$ and a set $S\subseteq I$, we define $A\cap S$ ($A\setminus S$) to be the assignment $D:B\rightarrow 2^I$ such that $D(b)=A(b)\cap S$ ($D(b)=A(b)\setminus S$) for every $b\in B$. }
\begin{lemma}
	\label{lem:residual_generic}
	Let $\xi\in \mathbb{N}$, $\mathcal{T}$ be an instance of  $d$-MKCP, and $S^*$,$(A^*_t)_{t=1}^d $ a solution for $\mathcal{T}$. Then there is  $S\subseteq S^*$, $|S|\leq \xi$  such that $S^*\setminus S$, $(A^*_t\setminus S)_{t=1}^d$ is a solution for the residual instance of 
	$\mathcal{T}$,  $S$, $(A^*_t\cap S)_{t=1}^{d}$ and $\xi$.
\end{lemma}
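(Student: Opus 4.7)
The plan is to construct $S$ greedily from $S^*$. Starting from $S_0=\emptyset$, at iteration $k$ I would pick
\[
i_k \in \argmax_{i\in S^*\setminus S_{k-1}}\bigl(f(S_{k-1}\cup\{i\}) - f(S_{k-1})\bigr),
\]
set $S_k=S_{k-1}\cup\{i_k\}$, and stop after $\min\{\xi,|S^*|\}$ iterations, taking $S$ to be the final set. By construction $S\subseteq S^*$ and $|S|\le \xi$, so what remains is to check that $S^*\setminus S$ together with $(A^*_t\setminus S)_{t=1}^d$ is a valid solution for the residual instance of $\mathcal{T}$, $S$, $(A^*_t\cap S)_{t=1}^d$ and $\xi$.

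The feasibility and membership parts of the statement are essentially bookkeeping. For each bin $b\in B_t$ the sets $A^*_t(b)\setminus S$ and $A^*_t(b)\cap S$ partition $A^*_t(b)$, so feasibility of $A^*_t$ immediately gives $w_t(A^*_t(b)\setminus S)\le W_t(b) - w_t(A^*_t(b)\cap S) = W'_t(b)$. Moreover $\bigcup_{b\in B_t}(A^*_t(b)\setminus S) = S^*\setminus S$, so $A^*_t\setminus S$ is an assignment of $S^*\setminus S$, and since $(S^*\setminus S)\cup S = S^*\in \II$, the definition of $\II'$ yields $S^*\setminus S\in \II'$.

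The only substantive step is to show $S^*\setminus S\subseteq I'$, i.e.\ that $f(S\cup\{i\})-f(S)\le f(S)/\xi$ for every $i\in S^*\setminus S$. If $|S^*|\le \xi$, I would simply take $S=S^*$ and the condition is vacuous, so assume $|S|=\xi$. My plan is to invoke submodularity twice. First, it implies that the greedy marginals $\alpha_k := f(S_k)-f(S_{k-1})$ form a non-increasing sequence, and that every $i\in S^*\setminus S_\xi$ satisfies $f(S\cup\{i\}) - f(S) \le \alpha_\xi$, via the chain $f(S_\xi\cup\{i\}) - f(S_\xi) \le f(S_{\xi-1}\cup\{i\}) - f(S_{\xi-1}) \le \alpha_\xi$ (the last step by the greedy choice of $i_\xi$). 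Second, telescoping gives $\sum_{k=1}^\xi \alpha_k = f(S) - f(\emptyset)\le f(S)$ (using $f\ge 0$), and monotonicity of $(\alpha_k)$ then yields $\alpha_\xi \le \frac{1}{\xi}\sum_{k=1}^\xi \alpha_k \le f(S)/\xi$, which is exactly the required inequality.

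I do not expect any real obstacle; the only mild subtleties are the edge cases $|S^*|\le \xi$ (handled trivially) and $f(S)=0$, where the non-increasing sequence with non-positive sum forces $\alpha_\xi\le 0$, so the defining inequality of $I'$ again holds trivially.
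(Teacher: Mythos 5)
Your proof is correct and follows essentially the same route as the paper: the paper also builds $S$ by greedily ordering $S^*$ by marginal gains, handles $|S^*|\le\xi$ trivially, and bounds the marginal of any $i\in S^*\setminus S$ by averaging the greedy increments and telescoping to $f(S)-f(\emptyset)\le f(S)$. The only difference is cosmetic — the paper factors the averaging step out as a separate auxiliary lemma (Lemma~\ref{lem:marginal}) while you inline it via the non-increasing sequence $\alpha_k$.
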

\begin{proof}
	Let  $\mathcal{T}=\left(I, \left( w_t, B_t,W_t\right)_{t=1}^d,\II, f \right)$. Also, let $S^*=\{s^*_1,\ldots, s^*_\ell\}$ with the items sorted such that
	$f(\{s^*_1,\ldots , s^*_i\})=\max_{i-1<k\leq \ell} f(\{s^*_1,\ldots ,s^*_{i-1}\}\cup \{s^*_k\})$ for every $1\leq i \leq \ell$. If $\ell\leq \xi$ let $S=S^*$, and the lemma immediately follows. Otherwise, set
	$S=\{s^*_1, \ldots, s^*_{\xi}\}$. 
	
	Let $\mathcal{T}'=\left(I', \left( w_t, B_t,W'_t\right)_{t=1}^d, \II', f \right)$ be the residual instance of 
	$\mathcal{T}$,  $S$, $(A^*_t\cap S)_{t=1}^{d}$ and $\xi$.
	By well known properties of submodular functions (see Lemma \ref{lem:marginal}),
	for every $i\in S^*\setminus S$ it holds that $f(S\cup\{i\})-f(S)\leq \frac{f(S)}{\xi}$, and therefore
	$S^*\setminus S\subseteq I'$. Furthermore, it can be easily verified that $S^*\setminus S \in \II'$ and $A^*_t\setminus S$ is a feasible assignment of $S^*\setminus S$ w.r.t $(w_t, B_t,W'_t)$; thus, $S^*\setminus S$ and $\left(A^*_t\setminus S\right)_{t=1}^{d}$ is a solution for the residual instance $\mathcal{T}'$. 
\end{proof}
The enumeration (Step \ref{reduction:loop}) of Algorithm \ref{alg:reduction} iterates over all subsets $S'\subseteq I$, $|S'|\leq \xi$ and disjoint assignments\footnote{An assignment $A:B\rightarrow 2^I$ is disjoint  if for any $b_1, b_2\in B$, $b_1\neq b_2$, it holds that $A(b_1)\cap A(b_2)=\emptyset$.} 
$(A_t)_{t=1}^{d}$ of $S'$ to the $d$ MKCs, and finds an approximate solution for each residual instance. The number of  possible subsets and assignments is polynomial for a fixed $\xi$. In one of these iterations, $S'=S$ and $A_t=A^*_t\cap S$ for every $1\leq t\leq d$ ($S$ is the set from Lemma \ref{lem:residual_generic}). In this specific iteration, the value of the optimal solution for the residual instance equals to the value of an optimal solution for the original instance; thus, the approximate solution for the residual instance can be used to derive an approximate solution for the input instance.

Algorithm~\ref{alg:reduction} converts each of the $d$ MKCs into an $N$-leveled constraint. 
The following lemma, adapted from~\cite{FKNRS20}, shows that any MKC can be converted into an $N$-leveled constraint with only a small decrease in the value of an optimal solution. 

\begin{lemma}
	\label{lem:structuring}
	For any $N$, set of bins $B$ and capacities $W:B\rightarrow \mathbb{R}_{\geq 0}$, there is $\tilde{B}\subseteq B$,  capacities $\tilde{W}:\tilde{B}\rightarrow \mathbb{R}_{\geq 0}$,
	and an $N$-leveled partition  $(\tilde{K}_j)_{j=0}^\ell$ of $\tilde{B}$,
	such that
	\begin{enumerate}
		\item $\tilde{B}$, $\tilde{W}$ and $(\tilde{K}_j)_{j=0}^\ell$ can be computed in polynomial time.
		\item
		The bin capacities satisfy $\tilde{W}(b) \leq W(b)$, for every $b\in \tilde{B}$.
		\item
		\label{structuring_prop2}
		For any set of items $I$, weight function  $w:I\rightarrow \mathbb{R}_{\geq 0}$, a subset $S\subseteq I$ feasible for the MKC $(w,B,W)$ over $I$, and a submodular function $f:2^I \rightarrow \mathbb{R}_{\geq 0}$, there is  $\tilde{S}\subseteq S$ feasible for the MKC $(w,\tilde{B},\tilde{W})$ such that $f(\tilde{S}) \geq \left(1- \frac{1}{N}\right) f(S)$.\footnote{We say $S$ is feasible for a MKC $(w,B,W)$ if there is a feasible assignment $A$ of $S$ w.r.t to $(w,B,W)$.}
	\end{enumerate}
	We refer to $\tilde{B}$ and $\tilde{W}$ as the $N$-leveled constraint of $B$ and $W$. 
\end{lemma}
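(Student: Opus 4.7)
My plan has three steps. First, for the construction: sort $B$ in non-increasing order of capacity and carve this sorted sequence into consecutive blocks whose sizes follow the $N$-leveled pattern ($N^2$ singleton blocks, then $N^2$ blocks of size $N$, then $N^2$ blocks of size $N^2$, etc., matching $|\tilde K_j| = N^{\lfloor j/N^2\rfloor}$). Any trailing bins that cannot complete the next required block are dropped, yielding $\tilde B \subseteq B$. Within each block $\tilde K_j$ set $\tilde W(b) := \min_{b'\in \tilde K_j} W(b')$ for every $b \in \tilde K_j$, making $\tilde K_j$ a genuine block of the required size with uniform capacity. This is polynomial, preserves the bound $\tilde W \leq W$ pointwise, and establishes Properties~1 and~2.

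Second, feasibility via shifting. Given feasible $S \subseteq I$ with assignment $A$, I would shift within each level. Because bins are globally sorted by capacity, every bin in block $b-1$ of a level has original capacity at least that of every bin in block $b$ of the same level, and in particular at least the new capacity $\tilde W$ of block $b-1$; hence reassigning items originally placed in position $p$ of block $b$ to position $p$ of block $b-1$ respects all new capacities. This yields $N^2$ candidate shifts per level, each emptying a different block whose items are then discarded. Level $0$ blocks (singletons whose $\tilde W$ equals $W$) need no shift.

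The main obstacle is the loss bound. Among the $N^2$ candidate shifts at each level, I would pick the one discarding the block of smallest $f$-marginal, doing this greedily level by level. For modular $f$ a direct averaging argument shows the per-level loss is at most a $\tfrac{1}{N^2}$ fraction of that level's value, and summing across levels gives total loss at most $f(S)/N^2 \leq f(S)/N$. For general submodular $f$ (including non-monotone), the $N^2$ disjoint discard sets at each level partition that level's items, and submodularity combined with $f(\emptyset)\geq 0$ bounds the smallest candidate's $f$-marginal by a $\tfrac{1}{N^2}$ fraction of the level's contribution; a careful summation across levels (absorbing cross-level marginal terms via submodularity) yields $f(\tilde S) \geq (1 - 1/N)\, f(S)$. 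This is the shifting-plus-averaging recipe used in~\cite{FKNRS20}, which my plan is to follow essentially verbatim.
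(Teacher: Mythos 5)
Your construction of $\tilde B$, $\tilde W$ and the leveled partition is the standard one and Properties~1 and~2 follow as you say; the observation that block $b$'s contents fit into block $b-1$ under the reduced capacities is also correct. The gap is in the claim that this ``yields $N^2$ candidate shifts per level, each emptying a different block.'' If you try to discard block $j^*$ of a level with $j^*\geq 2$ and shift only blocks $j^*+1,\ldots,N^2$ upward, then blocks $1,\ldots,j^*-1$ must retain their own items; but a bin $b$ of such a block may have been filled by $A$ up to its original capacity $W(b)$, which can exceed $\tilde W(b)=\min_{b'\in K_j}W(b')$, so those blocks are infeasible under $\tilde W$. No rearrangement rescues this: writing the Hall-type condition for an injection that sends each retained bin's contents to a bin whose new capacity is large enough (equivalently, to a weakly earlier block), one finds that at least $N^L-1$ discarded bins must lie before the end of the \emph{first} block of level $L$ --- i.e., in earlier levels or in that first block itself. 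A discard set consisting of one non-initial block per level therefore violates feasibility, and the only purely within-level shift is the one discarding block~$1$ of every level, which gives a single candidate and no averaging at all (all of $f$'s value could be concentrated in those first blocks).

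The argument that actually works --- and the reason the lemma claims only a $1/N$ loss, and the reason a level has $N^2$ blocks rather than $N$ --- is cross-level: the $N^L$ bins of the first block of level $L$ must be absorbed by discarding the contents of $N^L$ bins of level $L-1$, which has $N^{L+1}$ bins. Partitioning each level $L-1$ into $N$ chunks of $N$ consecutive blocks yields exactly $N$ pairwise disjoint candidate discard sets $D_1,\ldots,D_N$, each simultaneously valid for all levels (and large enough to also absorb the trailing bins dropped from $B$, which your sketch does not address). The telescoping-plus-nonnegativity step you gesture at is then exactly the right tool, but applied to these $N$ candidates it gives $\min_r\bigl(f(S)-f(S\setminus D_r)\bigr)\leq f(S)/N$, not $f(S)/N^2$. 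As written, your $N^2$ per-level candidates are infeasible and the $1/N^2$ bound is not achievable with this construction.
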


The proof of the lemma requires a minor adaptation of the proof in~\cite{FKNRS20}.  
We now proceed to the analysis of Algorithm \ref{alg:reduction}.

\begin{lemma}
\label{lem:reduction_analysis}
For any $N,\xi, d \in \mathbb{N}$ and $c\in [0,1]$, if $\mathcal{A}$ is a polynomial-time random $c$-approximation for modular/ monotone/ non-monotone 
$(N,\xi)$-restricted  $d$-MKCP  with a matroid/ matroid intesection/ matching/ no additional constraint then Algorithm \ref{alg:reduction} configured with $\mathcal{A}$ is  a polynomial-time random  $c\cdot \left(1-\frac{d}{N}\right)$-approximation for $d$-MKCP with the same type of function and additional constraint.
\end{lemma}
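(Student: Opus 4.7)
The plan is to pinpoint a single iteration of the enumeration loop in Algorithm~\ref{alg:reduction} whose output already meets the claimed guarantee, and then absorb the loss from subroutine error, residual-reduction, and structuring into the factor $c\cdot(1-d/N)$. Fix an optimal solution $S^*,(A^*_t)_{t=1}^d$ of the input instance $\mathcal{T}$ of value $\OPT$. Lemma~\ref{lem:residual_generic} provides a set $S\subseteq S^*$ with $|S|\leq\xi$ such that $S^*\setminus S$, $(A^*_t\setminus S)_{t=1}^d$ is a feasible solution of the residual instance $\mathcal{T}^{\textnormal{res}}$ of $\mathcal{T},S,(A^*_t\cap S)_{t=1}^d,\xi$; write $g$ for its objective, so $g(S^*\setminus S)=f(S^*)=\OPT$. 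Since the enumeration examines every candidate set of size at most $\xi$ together with every disjoint assignment of it to the $d$ MKCs, one iteration uses exactly $S'=S$ and $A_t=A^*_t\cap S$; I track this iteration below.

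Next I would apply Lemma~\ref{lem:structuring} once per MKC of $\mathcal{T}^{\textnormal{res}}$, yielding an instance $\tilde{\mathcal{T}}$ whose MKCs are all $N$-leveled (together with explicit $N$-leveled partitions). Chaining property~\ref{structuring_prop2} produces $S^*\setminus S = T_0 \supseteq T_1 \supseteq \cdots \supseteq T_d$ where $T_t$ is simultaneously feasible for the first $t$ structured MKCs and the remaining $d-t$ original MKCs, and $g(T_t)\geq(1-1/N)\cdot g(T_{t-1})$. Since each constraint type considered for $\II$ (matroid, matroid intersection, matching, or trivial) is downward closed, and $T_d\subseteq S^*\setminus S\in\II'$, we have $T_d\in\II'$, so $T_d$ with the obvious restriction of $(A^*_t\setminus S)_{t=1}^d$ forms a feasible solution of $\tilde{\mathcal{T}}$ of value at least $(1-1/N)^d\cdot\OPT\geq(1-d/N)\cdot\OPT$. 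Hence $\OPT(\tilde{\mathcal{T}})\geq(1-d/N)\cdot\OPT$.

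I then verify that $\tilde{\mathcal{T}}$ is $(N,\xi)$-restricted: the $N$-leveled property is immediate, and for the marginal-value condition, every item $i$ in the ground set of $\mathcal{T}^{\textnormal{res}}$ satisfies $g(\{i\})-g(\emptyset)=f(S\cup\{i\})-f(S)\leq f(S)/\xi$ by definition of the residual instance, while the empty solution shows $\OPT(\tilde{\mathcal{T}})\geq g(\emptyset)=f(S)$, giving $g(\{i\})-g(\emptyset)\leq\OPT(\tilde{\mathcal{T}})/\xi$. Running $\mathcal{A}$ on $\tilde{\mathcal{T}}$ returns in expectation a solution $T,(D_t)_{t=1}^d$ with $g(T)\geq c\cdot\OPT(\tilde{\mathcal{T}})\geq c\cdot(1-d/N)\cdot\OPT$, and stitching with the guessed $S$ yields the feasible solution $S\cup T$, $\bigl((A^*_t\cap S)\cup D_t\bigr)_{t=1}^d$ of $\mathcal{T}$; feasibility follows from the disjointness $S\cap T=\emptyset$ and the residual-capacity bookkeeping $W'_t(b)=W_t(b)-w_t((A^*_t\cap S)(b))$, and its value equals $f(S\cup T)=g(T)$ by definition of $g$. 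Polynomial running time is clear because for fixed $\xi$ and $d$ the enumeration has polynomially many iterations, each invoking the structuring and $\mathcal{A}$ once. The one subtle step is that simultaneous structuring of all $d$ MKCs incurs only a $(1-1/N)^d\geq 1-d/N$ multiplicative loss on $\OPT$—this is exactly what produces the $(1-d/N)$ factor, and together with the marginal-value bound above it is why the enumeration over sets of size $\xi$ (rather than a polynomial-sized enumeration alone) is needed.
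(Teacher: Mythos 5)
Your proof is correct and follows essentially the same route as the paper's: use Lemma~\ref{lem:residual_generic} to identify the good iteration, apply Lemma~\ref{lem:structuring} once per MKC for a cumulative $(1-1/N)^d\geq 1-d/N$ loss, verify that the resulting instance is $(N,\xi)$-restricted, invoke $\mathcal{A}$, and stitch the guessed set back in. One small point in your favor: your verification of the marginal-value condition via $\OPT(\tilde{\mathcal{T}})\geq g(\emptyset)=f(S)$ is actually tighter than the paper's, which compares against the optimum of the \emph{original} instance.
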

\begin{proof}
Let $\cT=\left(I,\left(w_t,B_t,W_t\right)_{t=1}^d,\II, f\right)$ be  $d$-MKCP instance where $f$ and $\II$ matches the type of functions and additional constraint solved by $\cA$.  Also, let $S^*$ and $(A^*_t)_{t=1}^{d}$ be an optimal solution for $\mathcal{T}$, and $\OPT=f(S^*)$. W.l.o.g., we assume that $A^*_t$ is a disjoint assignment for any $1\leq t\leq d$. 

By Lemma~\ref{lem:residual_generic}, there is $S\subseteq S^*$, $|S|\leq \xi$ such that $S^*\setminus S$ and $(A^*_t \setminus S)_{t=1}^{d}$ is a solution for $\mathcal{T}'$, the residual problem of $\mathcal{T}$, $S$, $(A^*_t\cap S)_{t=1}^{d}$ and $\xi$. We focus in the analysis on the iteration of the loop in Step \ref{reduction:loop} in  which $S$ is the set defined above, and $A_t= A^*_t \cap S$ for any $1\leq t\leq d$.

It holds that $\tilde{\cT}$ is a  $(N,\xi)$-restricted  $d$-MKCP instance with the type of function and additional constraint handled by $\cA$. Its constraints are $N$-leveled due to the leveling. Furthermore, $g(\{i\})-g(\emptyset) = f(S\cup\{i\})-f(S)\leq \frac{f(S)}{\xi} \leq \frac{\OPT}{\xi}$ for any $i\in I'$  (note that $S$ and $(A^*_t\cap S)_{t=1}^{d}$ form a solution for $\cT$, thus $f(S)\leq \OPT$). Finally, it can be easily verified that the optimum of $\tilde{\cT}$ is at most $\OPT$.  We conclude that $\mathcal{A}$ returns a $c$-approximation for $\tilde{\cT}$.

By iterative application of  Lemma \ref{lem:structuring} it follows that there is $\tilde{S}\subseteq S^*\setminus S \subseteq I'$ such that  $\tilde{S}$ is feasible for $(w_t,\tilde{B}_t,\tilde{W}_t)$ for any $1\leq t\leq d$ and $g(\tilde{S})\geq \left(1-\frac{d}{N}\right) g(S^*\setminus S) = \left(1-\frac{d}{N}\right) \OPT$. Thus, there are assignments with which  $\tilde{S}$ is a feasible solution for $\tilde{\cT}$. We conclude that $\E[g(R)]\geq c\cdot \left(1-\frac{d}{N}\right) \OPT$ (the randomization stems from $\mathcal{A}$ being possibly random). 

Note that $T^*$ is maintained by the algorithm as the solution of highest value attained so far.
Thus, subsequent to the specific iteration it holds that $\E[f(T^*)]\geq \E[f(S\cup R)]= \E[g(R)]\geq \left(1-\frac{d}{N}\right) \OPT$.

We note that the algorithm has a polynomial running time, as the number of iterations is polynomial for a constant $\xi$, and that the returned solution is always feasible.

\end{proof}
Lemma \ref{lem:restricted} follows immediately from Lemma \ref{lem:reduction_analysis}.
\begin{algorithm}
	\SetAlgoLined
	\SetKwInOut{Input}{Input}\SetKwInOut{Configuration}{Configuration}
	\DontPrintSemicolon
	
	\KwIn{A $d$-MKCP instance $\cT=\left(I,\left(w_t,B_t,W_t\right)_{t=1}^d,\II, f\right)$}
	\Configuration{A $c$-approximation algorithm $\mathcal{A}$ for $(N,\xi)$-restricted $d$-MKCP}
	
	Initialize $T^*\leftarrow \emptyset$ and $D^*_t:B_t\rightarrow 2^I$ by $D^*_t(b)=\emptyset$ for every $1\leq t\leq d$ and $b\in B_t$. \;
	
	\ForAll{ $S\subseteq I$, $|S|\leq \xi$ and 
		feasible disjoint assignments $A_t$ of $S$  w.r.t. $(w_t, B_t, W_t)$ for $1\leq t\leq d$ \label{reduction:loop}
	}{
		Let $\cT'=(I', \left(w_t, B_t, W'_t\right)_{t=1}^{d}, \II', g)$ be the  residual instance of $\cT$, $S$, $(A_t)_{t=1}^{d}$ and $\xi$. \;
		
		Let $\tilde{B}_t, \tilde{W}_t$ be the $N$-leveled constraint of $B_t$, $W'_t$,  and let  $(K^t_j)_{j=0}^{\ell_t}$ be its $N$-leveled partition for $1\leq t\leq d$.
		\label{reduction:leveling}
		\;
		
		Use algorithm $\mathcal{A}$ to find approximate solution for  $\tilde{\cT}=(I' , \left(w_t, \tilde{B}_t, \tilde{W}_t\right)_{t=1}^{d}, \II', g)$  with the partitions $(K^t_j)_{j=0}^{\ell_t}$  for $1\leq t\leq d$. Let $R $ and $(A'_t)_{t=1}^{d}$ be the returned solution \; 
		
		\If{$g(R)\geq f(T^*)$}{
			Set $T^*\leftarrow S\cup R$ .\;
			
			For $1\leq t \leq d$: set $D^*_t(b) \leftarrow A'_t(b) \cup A_t(b)$  for $b\in \tilde{B}_t$ and $D^*_t(b)\leftarrow A_t(b)$ for $b\in B_t\setminus \tilde{B}_t$.\;
		}

	}
	Return $T^*$ and $\left(D^*_t\right)_{t=1}^{d}$.\;
	\caption{Reduction to Restricted $(N,\xi)$-MKCP} 
	\label{alg:reduction}
\end{algorithm}

\section{Block Association}
\label{sec:association}
In this section we present our block association technique, formally summarized in the next lemma.
\begin{lemma}[Block Association]
	Let $(w,B,W)$ be an MKC, and $(K_j)_{j=0}^{\ell}$ a partition of $B$ into blocks.
	Suppose that $(\bx , \by^0,\ldots, \by^{\ell})\in P^e$, where $P^e$ is the $\gamma$-extended partition polytope of $(w,B,W)$ and $(K_j)_{j=0}^{\ell}$ for some $\gamma>0$. Then, given a parameter $\aeps>0$, there is a polynomial-time algorithm which finds a partition $(I_j)_{j=0}^{\ell}$ of $\supp(\bx)$
	satisfying the following conditions for all $0\leq j \leq \ell$.\footnote{We denote $\supp(\bnu)=\{i\in I~|~ \bnu_i >0\}$ for $\bnu\in [0,1]^{I}$.}
	\begin{enumerate} 
		\label{lem:association}
		\item
		If  $|K_j|>1$, let $G^j_1, \ldots ,G^j_{\tau_j}$ be the $\aeps$-grouping of~$\by^j$. Then
		$\sum_{i\in I_j \cap G^j_k } \bx_i \leq \aeps |K_j| +2$ for all $1\leq k \leq \tau_j$, and there is $i^*_j\in I$ such that  $\sum_{i\in I_j \cap L_{K_j,\aeps}\setminus \{i^*_j\} } \bx_i w(i) \leq \sum_{i\in  L_{K_j,\aeps} } \by^j_i  w(i)$.
		\item 
		If  $|K_j|=1$ 
		there is $i^*_j\in I$ such that $\sum_{i\in   I_{j}\setminus \{i^*_j\}} \bx_i \cdot w(i)\leq \sum_{i\in I } \by^j_i\cdot  w(i)$. 
		\item
		It holds that 
		$I_j\subseteq \supp(\by^j)$.
	\end{enumerate}
	We refer to $(I_j)_{j=0}^{\ell}$ as the {\em Block Association} of $(\bx , \by^0,\ldots, \by^{\ell})\in P^e$.
\end{lemma}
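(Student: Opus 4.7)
The plan is to encode the task as a bipartite fractional assignment problem and apply a bipartite dependent rounding that preserves item marginals while introducing at most a one-item slack per slot constraint. I will set up a bipartite graph whose left vertices are the items in $\supp(\bx)$ and whose right vertices are \emph{slots}: for each block $j$ with $|K_j|>1$ the slots are the $\tau_j$ groups $G^j_1,\ldots,G^j_{\tau_j}$ together with a single slot for $L_{K_j,\aeps}$; for each block $j$ with $|K_j|=1$ there is a single slot representing the whole block. Each item $i$ has an edge to the unique slot in block $j$ corresponding to its classification under $K_j$, carrying fractional mass $\by^j_i$ (edges with $\by^j_i=0$ are omitted, so no item is linked to a block outside $\supp(\by^j)$). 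Because $(\bx, \by^0,\ldots,\by^\ell)\in P^e$ we have $\sum_j \by^j_i = \bx_i$, and every slot's fractional load already satisfies the bound we ultimately want: $\aeps|K_j|+1$ for a group slot by Lemma~\ref{lem:grouping}(1), and $\sum_{i\in L_{K_j,\aeps}} \by^j_i w(i)$ or $\sum_{i\in I} \by^j_i w(i)$ for a weighted slot.

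The next step is to round this fractional bipartite assignment to an integral one. I will invoke a standard iterative rounding on the support of the fractional solution (in the Shmoys--Tardos style): while the current solution is not integral, either cancel a cycle in the bipartite support graph, which preserves all slot loads exactly and all item marginals, or fix a leaf edge integrally, which preserves all item marginals and changes each slot load by at most the value of a single edge. The procedure terminates with an integer assignment in which every item is routed to exactly one block, so $(I_j)_{j=0}^{\ell}$ is a partition of $\supp(\bx)$, and for every slot the integer load exceeds its fractional counterpart by at most the contribution of one item. Since the rounding uses only edges in the support, $I_j\subseteq \supp(\by^j)$ automatically, establishing Property~3.

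The final step is to translate the per-slot slack into the two bounds in the lemma. For a group slot with $|K_j|>1$, an item's contribution is $\bx_i\le 1$, so the integer load is at most $(\aeps|K_j|+1)+1=\aeps|K_j|+2$, giving the group bound in Property~1. For the light slot of such a block, the maximum contribution comes from a single item $i^*$ with contribution $\bx_{i^*}w(i^*)$; choosing $i^*_j:=i^*$ and excluding it from the sum gives $\sum_{i\in I_j\cap L_{K_j,\aeps}\setminus\{i^*_j\}}\bx_i w(i)\le \sum_{i\in L_{K_j,\aeps}}\by^j_i w(i)$, as required. The analogous argument with a single absorbing item handles the weighted constraint of a block with $|K_j|=1$, yielding Property~2.

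The main obstacle is producing a single rounding scheme that simultaneously guarantees the one-item slack on all slot constraints while exactly preserving the item marginals $\sum_j \by^j_i=\bx_i$. Iterative bipartite rounding handles this cleanly because the item marginals correspond to left-vertex degree constraints that are maintained by both cycle cancellation and leaf fixing, while every slot constraint, whether counting (weighted by $\bx_i$) or weighted (weighted by $\bx_i w(i)$), is linear in the rounding variables and is therefore perturbed by at most the value of a single incident edge when a leaf is fixed. This uniform one-edge perturbation is exactly what the two stated bounds can absorb, either as a ``$+2$'' in the counting case or via the distinguished item $i^*_j$ in the weighted case.
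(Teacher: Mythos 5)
Your overall architecture is the same as the paper's: a bipartite graph between items and ``slots'' (one slot per group $G^j_k$, one for $L_{K_j,\aeps}$ when $|K_j|>1$, one per single-bin block), edge masses $\by^j_i$ summing to the item marginal $\bx_i$, followed by iterative rounding via cycle cancellation and leaf fixing. The leaf-fixing accounting (one distinguished item $i^*_j$ absorbed per slot, each slot leaf-processed at most once) and Property~3 also match the paper. However, there is a genuine gap in the cycle-cancellation step. You assert that cancelling a cycle with alternating $\pm\delta$ ``preserves all slot loads exactly.'' This is true only for the counting slots, whose load is $\sum_i x_{i,s}$ with unit coefficients. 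For the weighted slots (the $L_{K_j,\aeps}$ slot and the single-bin slots), the load is $\sum_i w(i)\,x_{i,s}$, and a slot on a cycle has exactly two cycle edges, to items $i_a$ and $i_b$; a uniform cancellation changes its load by $\delta\left(w(i_a)-w(i_b)\right)\neq 0$ in general. Flipping the cycle's orientation flips this sign for \emph{all} slots on the cycle simultaneously, so you cannot in general make every weighted slot non-increasing. Since cycle cancellations may occur many times before the graph becomes a forest, the cumulative increase in $\sum_{i\in I_j\cap L_{K_j,\aeps}}\bx_i w(i)$ is unbounded, and the single-item slack $i^*_j$ cannot absorb it.

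The paper closes exactly this hole: along a cycle $(i_1,r_1,\ldots,i_k,r_k)$ it shifts non-uniform amounts $\delta\bnu_j$, choosing $\bnu\neq 0$ so that $\bc^{r_j}_{i_j}\bnu_j-\bc^{r_j}_{i_{j+1}}\bnu_{j+1}\leq 0$ for every $j$, i.e.\ every slot's (weighted or unweighted) load is non-increasing while item marginals are preserved. Such a $\bnu$ exists because the first $k-1$ conditions, imposed as equalities, form a homogeneous linear system with $k$ unknowns, and the sign of the resulting nonzero solution can be chosen to satisfy the $k$-th inequality. To repair your proof you would need either this non-uniform shifting vector or an equivalent device (e.g.\ an extreme-point/iterative-relaxation argument on the LP with the weighted slot constraints included), neither of which is present in your write-up.
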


Defining a block association for the items relies on an abstract notion of constraints. A {\em constraint} is a pair  $(\bc, \beta)$ of coefficients  $\bc\in \mathbb{R}_{\geq 0}^{I}$ and a bound 
$\beta \in \mathbb{R}_{\geq 0}$. 
We say that $\bgam \in \mathbb{R}^{I}$ {\em satisfies} the 
constraint $(\bc,\beta)$ if $\sum_{i\in I} \bc_i \cdot \bgam_i \leq \beta$, and that $\bgam$ {\em semi-satisfies} $(\bc, \beta)$ if $\exists i^*\in I$ such that $\sum_{i\in I\setminus \{i^*\}} \bc_i\cdot  \bgam_i \leq \beta$.
We say that $(\bgam^r)_{r=1}^{p}$, $\bgam^r \in [0,1]^{I}$  are a {\em decomposition} of $\bx\in [0,1]^{I}$ if $\bx= \sum_{r=1}^{p} \bgam^r$. Item $i\in I$ is {\em perfect} w.r.t. vectors $(\bgam^r)_{r=1}^{p}$ if there is at most one vector $\bgam^r$ for which $\bgam^r_i\neq 0$; otherwise, $i$ is {\em broken}. We say that a vector $\bgam^r\in [0,1]^I$ is {\em perfect}  w.r.t $(\bgam^r)_{r=1}^{p}$ if all items in $\supp(\bgam^r)$ are perfect (recall $\supp(\bnu)=\{i\in I~|~ \bnu_i >0\}$ for $\bnu\in [0,1]^{I}$). Otherwise, $\bgam^r$ is {\em broken}. 
A decomposition $(\bgam^r)_{r=1}^{p}$ is {\em perfect} if all items are perfect (and therefore all vectors $\bgam^r$ are perfect). 
Given vectors $(\blam^r)_{r=1}^{p}$, we define the {\em broken bipartite graph of $(\blam^r)_{r=1}^{p}$ } as the bipartite graph $G=(S,T,E)$ where $S$ is the set of broken items, $T=\{ 1\leq r \leq p~|~\text{$\blam^r$ is broken}\}$ is the set of broken vectors, and $E=\{(i,r)~|~ \blam^r_i\neq 0 \}$ (see Figure~\ref{fig:broken_graph_example}).

\begin{figure}
	\centering
	\begin{subfigure}{.5\textwidth}
		\centering
		\begin{tikzpicture}
		\node[] at (1,3.25) {S};
		\node[] at (4,3.25) {T};
		\node[default node](1) at(1,2.25) {$1$};
		\node[default node](2) at(1,0.75) {$3$};
		\node[default node](3) at(4,2.25) {$1$};
		\node[default node](4) at(4,0.75) {$3$};
		
		\draw  (1)--(3);
		\draw  (2)--(3);
		\draw  (2)--(4);
		\draw  (1)--(4);
		\end{tikzpicture}
		\caption{Broken graph $G=(S,T,E)$ with $S=\{1,3\}$, $T= \{1,3\}$, and $E=\{(1,1),(1,3),(3,1),(3,3)\}$ for the following instance $I=\{1,2,3\}$, $p=3$,
			$\bgam^{1}=(\frac{1}{3},0,\frac{2}{3})$, 
			 $\bgam^{2}=(0,1,0)$, and $\bgam^{3}=(\frac{2}{3},0,\frac{1}{3})$.}
		\label{fig:broken_graph_example}
	\end{subfigure}%
	\begin{subfigure}{.5\textwidth}
		\centering
		\begin{tikzpicture}
		\node[] at (9,4) {S};
		\node[] at (12,4) {T};
		\node[default node](3) at(9,0) {$3$};
		\node[default node](2) at(9,1.5) {$2$};
		\node[default node](1) at(9,3) {$1$};
		\node[default node](6) at(12,0) {$3$};
		\node[default node](5) at(12,1.5) {$2$};
		\node[default node](4) at(12,3) {$1$};
		
		\draw  (2)--(5);
		\draw  (2)--(4);
		\draw  (3)--(5);
		\draw [ultra thick] (1)--(4) node[pos=0.3,above,label]{$+\bnu_1$};
		\draw [ultra thick] (3)--(4) node[pos=0.2,above,label,sloped]{$-\bnu_2$};
		\draw [ultra thick] (3)--(6) node[pos=0.3,below,label]{$+\bnu_2$};
		\draw [ultra thick] (1)--(6) node[pos=0.2,above,label,sloped]{$-\bnu_1$};
		\end{tikzpicture}
		\caption{Shift items along cycle $C=(i_1,r_1,i_2,r_2)$, with $i_1=r_1=1$ and $i_2=r_2=3$.}
		\label{fig:simple_cycle_example}
	\end{subfigure}
	\label{fig:make_perfect}
	\caption{Visualizations of procedures in Algorithm \ref{alg:perfect-decomp}}
\end{figure}
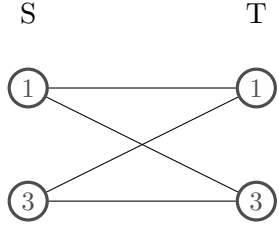
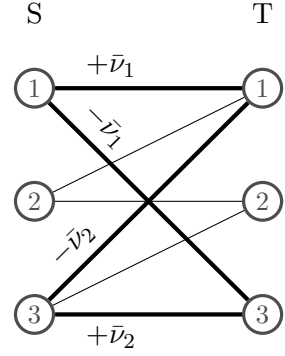

\begin{algorithm}
	\SetAlgoLined
	\SetKwInOut{Input}{Input}\SetKwInOut{Configuration}{Configuration}
	
	\DontPrintSemicolon
		\KwIn{Vectors $\bx,(\bgam^r)_{r=1}^{p}$, constraints $(\bc^r,\beta^r)$ for $1\leq r\leq p$.}
	
	Initialize $\blam^r=\bgam^r$ for $1\leq r\leq p$\; 
	
	\While {there exists a broken item $i$ (w.r.t. $(\blam^r)_{r=1}^{p}$)}{
		Update  $G=(S,T,E)$ to be  the  broken bipartite graph of $(\blam^r)_{r=1}^{p}$.\;
		\If {there exists a broken vector $r\in T$ such that $\deg_G(r)=1$\label{line:if-statement}} {
			Let $i\in S$ be the neighbor of $r$. Set $\blam^r_i\leftarrow \bx_i$, and for all $r'\neq r$ set $\blam^{r'}_i\leftarrow 0$.\label{line:if-update}
		}
		\Else {
			Let $C=(i_1, r_1,  \ldots,i_k,r_k)$ be a simple cycle in $G$. \label{perfect:cycle}\;
			Find a vector $\bnu\in\mathbb{R}^k, \bnu\neq 0$ such that $\bc^{r_{j}}_{i_j} \cdot \bnu_j - \bc^{r_{j}}_{i_{j+1}} \cdot \bnu_{j+1} \leq 0$ for $1\leq j \leq k$ ($\bnu_{k+1}=\bnu_1$, $i_{k+1}=i_1$). \label{line:shifting-vector-cond}\label{perfect:nu}\; 
			Select a maximal $\delta$ such that $\blam^{r_j}_{i_j} +\delta \bnu_j, \blam^{r_{j}}_{i_{j+1}} -\delta \bnu_{j+1}\geq 0$ for every $1\leq j \leq k$.\label{line:select-max-delta}\; 
			Set $\blam^{r_j}_{i_j} \leftarrow  \blam^{r_j}_{i_j} +\delta \bnu_j$ and $\blam^{r_{j}}_{i_{j+1}} \leftarrow \blam^{r_{j}}_{i_{j+1}} -\delta \bnu_{j+1} $ for $1\leq j \leq k$ .\label{line:else-update}\; 
		}
	
	}
	Return 	$(\blam^r)_{r=1}^{p}$
	\caption{Make Perfect} 
	\label{alg:perfect-decomp}
\end{algorithm}

\begin{lemma}
	\label{lem:abstract_association}
Given a set of constraints $\{(\bc^r, \beta^r)~|~1\leq r\leq p\}$ and a decomposition $(\bgam^r)_{r=1}^{p}$ of $\bx\in [0,1]^I$, such that $\bgam^r$ satisfies $(\bc^r,\beta^r)$ for $1\leq r\leq p$, Algorithm \ref{alg:perfect-decomp}
	returns in polynomial time a  perfect decomposition  $(\blam^r)_{r=1}^{p}$ of $\bx$ such that $\blam^r$ semi-satisfies $(\bc^r,\beta^r)$, and $\supp(\blam^r)\subseteq \supp(\bgam^r)$ for $1\leq r\leq p$. 
\end{lemma}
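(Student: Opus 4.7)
My plan is to prove by induction on the iterations of the \textbf{while} loop that Algorithm~\ref{alg:perfect-decomp} maintains three invariants: (i) $\sum_{r=1}^{p} \blam^r = \bx$; (ii) $\supp(\blam^r) \subseteq \supp(\bgam^r)$ for every $1 \leq r \leq p$; and (iii) every currently broken vector $\blam^r$ satisfies the constraint $(\bc^r, \beta^r)$, while every currently unbroken vector $\blam^r$ semi-satisfies it. All three hold at initialization since $\blam^r = \bgam^r$. Once the loop terminates, no item is broken, so the decomposition is perfect, and by (iii) every $\blam^r$ semi-satisfies its constraint; the support condition $\supp(\blam^r) \subseteq \supp(\bgam^r)$ from (ii) is exactly the additional conclusion of the lemma.

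For the \textbf{if} branch (Step~\ref{line:if-update}), let $r$ be a broken vector with its unique broken neighbor $i$; the update moves all of item $i$'s mass onto $\blam^r$. Invariant (i) is preserved because $\sum_{r'} \blam^{r'}_i = \bx_i$ both before and after, while entries for other items are untouched. Invariant (ii) is preserved because $(i,r)$ is an edge of the broken graph, so $\blam^r_i > 0$ implies $i \in \supp(\blam^r) \subseteq \supp(\bgam^r)$ by induction; the other supports only shrink. For invariant (iii), the quantity $\sum_{j \neq i} \bc^r_j \blam^r_j$ is unchanged by the update, and since $\blam^r$ satisfied $(\bc^r, \beta^r)$ before the step we get $\sum_{j \neq i} \bc^r_j \blam^r_j \leq \beta^r$ afterwards; hence $\blam^r$ semi-satisfies with $i^* = i$ and becomes unbroken. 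Every $r' \neq r$ with $\blam^{r'}_i > 0$ has its constraint value only decreased, so it still satisfies if it remains broken.

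For the \textbf{else} branch, I must first establish that a non-zero $\bnu$ meeting the conditions of Step~\ref{line:shifting-vector-cond} always exists. These conditions form a homogeneous system $M \bnu \leq 0$ with $k$ rows in $k$ variables, the $j$-th row being $\bc^{r_j}_{i_j} e_j - \bc^{r_j}_{i_{j+1}} e_{j+1}$. By Farkas' lemma, the cone $\{\bnu : M \bnu \leq 0\}$ is trivial only if the rows of $M$ positively span $\mathbb{R}^k$; but $k$ vectors in $\mathbb{R}^k$ generate a pointed cone, which is a proper subset of $\mathbb{R}^k$, so a non-zero $\bnu$ must exist and can be found in polynomial time by LP. The shift preserves invariant (i) because for each cycle item $i_j$ the increment $+\delta \nu_j$ on $\blam^{r_j}_{i_j}$ is matched by the decrement $-\delta \nu_j$ on $\blam^{r_{j-1}}_{i_j}$. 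Invariant (ii) is preserved since the only modified entries correspond to edges of the cycle in the broken graph and were already positive, hence in the required supports by induction. Finally, the change in $\sum_i \bc^{r_j}_i \blam^{r_j}_i$ is $\delta\bigl(\bc^{r_j}_{i_j} \nu_j - \bc^{r_j}_{i_{j+1}} \nu_{j+1}\bigr) \leq 0$ by the choice of $\bnu$ and $\delta > 0$, so the broken vectors in the cycle continue to satisfy their constraints; vectors outside the cycle are untouched.

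Termination in polynomial time follows by showing that each iteration strictly decreases the number of edges of the broken bipartite graph, which is at most $np$ initially. In the \textbf{if} branch, item $i$ becomes perfect and leaves $S$, so every edge incident to $i$ disappears from the broken graph. In the \textbf{else} branch, the maximal choice of $\delta$ in Step~\ref{line:select-max-delta} forces at least one of the cycle entries $\blam^{r_j}_{i_j}$ or $\blam^{r_j}_{i_{j+1}}$ to reach $0$ (this $\delta$ is strictly positive because every cycle entry is strictly positive and $\bnu \neq 0$), removing the corresponding edge; no new edges appear since $\sum_r \blam^r = \bx$ is preserved and only previously positive entries are modified. The main obstacle of the proof is the existence of $\bnu$ in the cycle branch together with verifying that the constraint values are controlled in the \textbf{if} branch (where increasing $\blam^r_i$ to $\bx_i$ is compensated by ``spending'' $i$ as the excluded item $i^*$); once those two points are in place, termination and the final conclusion follow immediately.
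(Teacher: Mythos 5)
Your proof is correct and follows essentially the same route as the paper's: maintain the three invariants (the $\blam^r$ remain a decomposition of $\bx$, supports only shrink, and each broken vector satisfies while each perfect vector semi-satisfies its constraint), and show that every iteration removes at least one edge of the broken bipartite graph. The one place you diverge is the existence of $\bnu$ in the cycle step: your Farkas-style argument reaches the right conclusion, but the intermediate claim that $k$ vectors in $\mathbb{R}^k$ ``generate a pointed cone'' is false as stated (take $v$ and $-v$); what you actually need is the standard fact that fewer than $k+1$ vectors cannot positively span $\mathbb{R}^k$, which does give $\{\bnu: M\bnu\le 0\}\neq\{0\}$. The paper sidesteps this entirely by solving the $k-1$ homogeneous equalities $\bc^{r_j}_{i_j}\bnu'_j-\bc^{r_j}_{i_{j+1}}\bnu'_{j+1}=0$ for $1\le j\le k-1$ (which have a nonzero solution since there are $k$ unknowns) and then taking whichever of $\pm\bnu'$ makes the remaining $k$-th inequality nonpositive; this is more elementary and directly constructive, so you may prefer it to invoking LP duality.
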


 We first show that the algorithm is well defined, as stated in the next lemma.
\begin{lemma}\label{lem:well-defined}
	Algorithm \ref{alg:perfect-decomp} is well defined. That is, if the statement in Step \ref{line:if-statement} is false, then there  exists a cycle and a vector $\bnu$ as defined in  Steps \ref{perfect:cycle}, \ref{perfect:nu}.
\end{lemma}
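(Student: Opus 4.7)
The plan is to prove the two claims of the lemma separately: the existence of a simple cycle in $G$, and the existence of a nonzero shifting vector $\bnu$ for any chosen simple cycle.

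For the cycle, I would first show that every vertex of the broken bipartite graph $G = (S, T, E)$ has degree at least two. By definition, a broken item $i \in S$ satisfies $\blam^r_i \neq 0$ for at least two distinct $r$, and each such $r$ contains the broken item $i$ in its support and is therefore itself broken, so $\deg_G(i) \geq 2$ for every $i \in S$. In the else-branch the condition in Step~\ref{line:if-statement} fails, hence $\deg_G(r) \geq 2$ for every $r \in T$ as well. A finite graph with minimum degree at least two necessarily contains a cycle: starting from any vertex and extending a walk by a non-repeated edge, the walk can never get stuck and must eventually revisit a vertex, yielding a simple cycle.

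For the shifting vector, fix a simple cycle $C = (i_1, r_1, \ldots, i_k, r_k)$ and write $a_j = \bc^{r_j}_{i_j}$ and $b_j = \bc^{r_j}_{i_{j+1}}$. The requirement in Step~\ref{line:shifting-vector-cond} is to find a nonzero $\bnu \in \mathbb{R}^k$ with $a_j \bnu_j \leq b_j \bnu_{j+1}$ for all $j$, indices taken modulo $k$. I would split into cases based on whether any coefficient vanishes. If $a_j = 0$ for some $j$, setting $\bnu$ equal to the $j$-th standard unit vector works: the $j$-th inequality becomes $0 \leq 0$ (using $a_j = 0$); the $(j-1)$-th reduces to $-b_{j-1} \leq 0$, which holds by nonnegativity of $\bc$; and every other inequality reduces to $0 \leq 0$. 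The symmetric subcase $b_j = 0$ is handled by taking $\bnu = -e_{j+1}$.

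Otherwise all $a_j, b_j > 0$, and I would use a telescoping construction. Let $\alpha = \prod_{j=1}^{k} a_j/b_j$, pick a sign $\sigma \in \{+1, -1\}$ to be determined, and set $\bnu_1 = \sigma$ and $\bnu_{j+1} = (a_j/b_j)\bnu_j$ for $j = 1, \ldots, k-1$. By design the first $k-1$ inequalities hold with equality, while unrolling the definition shows the $k$-th one evaluates to $\sigma\, b_k (\alpha - 1)$. Choosing $\sigma = +1$ when $\alpha \leq 1$ and $\sigma = -1$ otherwise renders this expression nonpositive, and since $\bnu_1 = \pm 1$ the produced $\bnu$ is nonzero. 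The main obstacle is the subcase in which some coefficient vanishes, where the natural telescoping breaks down; the unit-vector construction above bypasses this cleanly, and the sign choice $\sigma$ in the remaining case absorbs both possible directions of the cyclic product in a uniform manner.
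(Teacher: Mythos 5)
Your proof is correct. The cycle-existence half is essentially identical to the paper's (both use that every vertex of the broken bipartite graph has degree at least two once the if-branch fails, hence a cycle exists). For the shifting vector, however, you take a genuinely different, more constructive route. The paper observes that the $k-1$ equalities $\bc^{r_j}_{i_j}\bnu_j - \bc^{r_j}_{i_{j+1}}\bnu_{j+1} = 0$, $1\le j\le k-1$, form an underdetermined homogeneous linear system in $k$ unknowns, so some nonzero $\bnu'$ satisfies them, and then either $\bnu'$ or $-\bnu'$ satisfies the $k$-th inequality. This one-line rank argument sidesteps any concern about vanishing coefficients. You instead build $\bnu$ explicitly: you split off the degenerate cases $a_j=0$ and $b_j=0$ (solved by $\pm$ standard unit vectors, using nonnegativity of $\bc$) and then run a telescoping recursion $\bnu_{j+1}=(a_j/b_j)\bnu_j$ in the strictly positive case, with a sign chosen by comparing the cyclic product $\alpha=\prod a_j/b_j$ to $1$. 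Both arguments are sound; the paper's is shorter and requires no case analysis, while yours has the minor advantage of producing $\bnu$ explicitly rather than by a nonconstructive dimension count. Your case split for zero coefficients is a real necessity for the telescoping construction (division by zero would otherwise occur), and you handle it correctly, but it is dead weight relative to the rank-based argument, which absorbs those cases automatically.
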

\begin{proof} 
	
	We note that all of the items in the broken graph (the set $S$) are of degree two or more by definition, and if Step \ref{perfect:cycle} is reached then all the vectors $r\in T$ are of degree at least two. It follows that the broken graph $G$ contains a cycle.
	
	The existence of $\bnu$ in Step \ref{perfect:nu} follows from a simple argument in linear algebra. 
	The equations  $\bc^{r_{j}}_{i_j} \cdot \bnu'_j - \bc^{r_{j}}_{i_{j+1}} \cdot \bnu'_{j+1} = 0$ for  $1\leq j \leq k-1$ form $k-1$ linear homogeneous  equations with $k$ variables.
	Thus, there  is $\bnu'\in \mathbb{R}^k$, $\bnu'\neq 0$ which satisfies the equations. Hence, either $\bnu =\bnu'$ or $\bnu= -\bnu'$ satisfies the condition in Step \ref{perfect:nu} (i.e., the $k$-th inequality holds as well).
\end{proof}

To prove Lemma \ref{lem:abstract_association}, we show that the number of edges in $G$ decreases in each iteration, and that the following invariants hold throughout the execution of the algorithm.
\begin{enumerate}
	\item 
	For any $1\leq r\leq p$, either   $\blam^r$ is perfect and  semi-satisfies $( \bc^r, \beta^r)$, or it satisfies $(\bc^r, \beta^r)$.
	\item
	For any $1\leq r\leq p$ it holds that  $\supp(\blam^r)\subseteq \supp(\bgam^r)$. 
	\item
	$\left(\blam^r\right)_{r=1}^{p}$ is a decomposition of $\bx$. 
\end{enumerate}
These remaining parts are fairly simple, hence deferred to Appendix \ref{app:association}.


\begin{proof}[Proof of Lemma \ref{lem:association} (Block Association)]

	Let $M = \left\{ 0\leq j\le \ell  ~|~|K_j|>1 \right\}$ be the blocks which contain multiple bins. Also, we use  $\wedge$ as an element-wise minimum over two vectors. That is, for $\bnu^1, \bnu^2\in \mathbb{R}^I$ we define  $\bnu^1 \wedge \bnu^2 =\bgam$, where  $\bgam\in \mathbb{R}^I$ and $\bgam_i = \min\{ \bnu^1_i, \bnu^2_i\}$ for all $i\in I$.

	Let $R=\{(j,k) ~|~j\in M,~1\leq k \leq \tau_j \}\cup \{j~|~0\leq j\leq \ell\}$.
	We define a set of constraints $\{(\bc^r,\beta^r)~|~r\in R\}$ for $(\bx,\by^0,\ldots, \by^{\ell})$.  
	For every $i\in I$ we define $\bc^j_i=w(i)$ for any $0\leq j\leq \ell$ and $\bc^{j,k}_i=1$ for any $j\in M$ and $1\leq k\leq \tau_j$. Also, we define $S^j = L_{K_j,\aeps}$ for $j\in M$, $S^{j,k}=G^j_k$ for $j\in M$ and $1\leq k\leq \tau_j$, and $S^j =\{i\in I~|~w(i)\leq W^*_K\}$ for $j\notin M$. 
	We define $\bgam^r=\by^j \wedge \one_{S^r}$ for $r=(j,k)$ or $r=j$, and $\beta^r = \bc^r\cdot \bgam^r$. By definition, it holds that $\bgam^r$ satisfies $(\bc^r, \beta^r)$. 
	Furthermore, it can be easily verified that $(\bgam^r)_{r\in R}$ is a decomposition of $\bx$.
	
	By Lemma~\ref{lem:abstract_association},  given $\bx$, $(\bgam^r)_{r\in R}$  and the constraints $(\bc^r, \beta^r)_{r\in R}$  Algorithm \ref{alg:perfect-decomp}
	 returns a perfect decomposition 
	$(\blam^r)_{r\in R}$ of $\bx$ such that $\blam^r$ semi-satisfies $(\bc^r,\beta^r)$. Define $I_j = \supp(\blam^j) \cup \left(\bigcup_{k=1}^{\tau_j}\supp(\blam^{j,k})\right)$ for $j\in M$, and $I_j=\supp(\blam^j)$ for $0\leq j\leq \ell$, $j\notin M$.  	
	It follows from Lemma \ref{lem:abstract_association} that $(I_j)_{j=0}^{\ell}$ can be computed in polynomial time. 
	
	For every $r\in R$ it holds that $\supp(\blam^r)\subseteq \supp(\bgam^r)\subseteq S^r$. Furthermore, it can be easily verified that for any $r\in R$, $r=j$ or $r=(j,k)$, it holds that $I_j \cap S^r = \supp(\blam^r)$. By the same argument, it follows that
	$I_j\subseteq \supp(\by^j)$. 
	 Finally, we note that $(I_j)_{j=0}^{\ell}$ is a partition of $\supp(\bx)$,  since$(\blam^r)_{r\in R}$ is a perfect decomposition of $\bx$. 
	
	
	For every $j\in M$ and $1\leq k\leq \tau_j$, it holds that $\blam^{j,k}$ semi-satisfies $(\bc^{j,k},\beta^{j,k})$. Hence, there is $i^*\in I$, such that
	$$
	\sum_{i\in I_j \cap G^j_k} \bx_i = \sum_{i\in \supp(\blam^{j,k})} \bx_i =\sum_{i\in I} \blam^{j,k}_i \leq
	1+\sum_{i\in I\setminus \{i^*\}} \blam^{j,k}_i \leq  1+\beta^{j,k} \leq 1+\sum_{i\in G^j_k} \by^j_i \leq 2 + \aeps\cdot |K_j|.
	$$
	The second equality holds since $(\blam^r)_{r\in R}$ is a perfect  decomposition, and the last inequality follows from Lemma \ref{lem:grouping}. Similarly, $\blam^j$ semi-satisfies $(\bc^j, \beta^j) $. Hence, there is $i_j^*$ such that 
	\begin{equation*}
	\begin{aligned}
	\sum_{i\in I_j \cap L_{K_j,\aeps}\setminus\{i^*_j\}} & \bx_i \cdot w(i) =
	\sum_{i\in I_j \cap S^j\setminus\{i^*_j\}} \bx_i \cdot w(i)= 
	\sum_{i\in \supp(\blam^j) \setminus\{i^*_j\}} \bx_i \cdot \bc^j_i
	= \sum_{i\in I
		\setminus\{i^*_j\}} \blam^j_i \cdot \bc^j_i\\ 
	&\leq \beta^j
	= \sum_{i\in I} w(i)\cdot \bgam^j 
	=  \sum_{i\in I} w(i)\cdot \left(\by^j\wedge \one_{L_{K_j,\aeps}}\right)_i = 
	\sum_{i\in L_{K_j,\aeps} } w(i) \cdot \by^j_i.
	\end{aligned}
	\end{equation*}
	
	Finally, for $0\leq j \leq \ell$, $j\notin M$ it holds that $(\blam^j)$ semi-satisfies $(\bc^j,\beta^j)$. Thus, there is $i^*_j\in I$ such that,
	$$ 
	\sum_{i\in I_j \setminus \{i^*_j\}} \bx_i \cdot w(i) =\sum_{i\in \supp(\blam^j) \setminus \{i^*_j\}} \bx_i \cdot w(i)  = \sum_{i\in I \setminus \{i^*_j\}} \blam^j_i \cdot \bc^j_i \leq \beta^j = \sum_{i\in I} w(i)\cdot \by^j_i.
	$$
\end{proof}

\subsection{Block Association: Deferred Proofs}
\label{app:association}

\begin{lemma}\label{lem:broken-graph-progress}
After each iteration of Algorithm \ref{alg:perfect-decomp} the number of edges in the broken bipartite graph $G$ decreases by at least one.
\end{lemma}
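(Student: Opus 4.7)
\medskip

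\noindent\textbf{Proof plan for Lemma~\ref{lem:broken-graph-progress}.}
The plan is to analyze the two branches of Algorithm~\ref{alg:perfect-decomp} separately and show that in each iteration the multiset of nonzero coordinates $\{(r,i) : \blam^r_i \neq 0\}$ can only shrink, while at least one coordinate from the current broken graph $G$ is set to zero. Since the edges of $G$ correspond precisely to pairs $(i,r)$ with $i \in S$, $r \in T$, and $\blam^r_i \neq 0$, it suffices to verify: (i) no coordinate goes from $0$ to nonzero during the update, and (ii) some edge present in $G$ before the update is no longer present after $G$ is rebuilt. A key auxiliary observation is that if no item becomes newly broken and no vector becomes newly broken, then $S$ and $T$ can only lose elements, so every ``edge disappearance at the coordinate level'' translates to an edge disappearance in the rebuilt graph, and no edge can be created that was not corresponded to by an old edge.

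First, I would treat the \emph{if} branch (Steps~\ref{line:if-statement}--\ref{line:if-update}). Here $r$ is a broken vector with unique broken neighbor $i$, and the update sets $\blam^r_i \leftarrow \bx_i$ and $\blam^{r'}_i \leftarrow 0$ for all $r' \neq r$. Since $i$ is currently broken, $\bx_i = \sum_{r'}\blam^{r'}_i > 0$, so the only coordinate that changes from positive to the new value $\bx_i$ is $(r,i)$ (still nonzero), while every other coordinate on item $i$ is zeroed out. Hence no new nonzero coordinate is created, and now $i$ is the support of exactly one $\blam^{r'}$, making $i$ perfect. Consequently $i$ leaves $S$, and every edge incident to $i$ in the old $G$ (at least $\deg_G(i)\ge 2$ edges, since $i$ was broken) disappears from the rebuilt graph; no vector's support grows, so no new broken items or vectors are created.

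Next, I would treat the \emph{else} branch. By construction, the only coordinates touched by Step~\ref{line:else-update} are $\{(r_j,i_j),(r_j,i_{j+1}) : 1\le j\le k\}$, which are exactly the edges of the cycle $C$ and are therefore already nonzero in the current $G$. Thus no coordinate can become nonzero from zero. By the maximality of $\delta$ chosen in Step~\ref{line:select-max-delta}, at least one of these coordinates is driven to zero; that coordinate corresponds to an edge of $G$ that vanishes. Again, since no $\blam^{r'}_i$ transitions from $0$ to positive, the support of each $\blam^{r'}$ only shrinks, so no item becomes newly broken and no vector becomes newly broken; thus $S$ and $T$ only lose elements when $G$ is rebuilt, confirming that the vanishing coordinate really does correspond to the loss of at least one edge in $G$.

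Combining the two cases shows that every iteration removes at least one edge from $G$, establishing the lemma. The main obstacle I anticipate is a clean argument that in the cycle branch the broken graph cannot acquire new edges through some indirect effect of the shift (for instance by making a previously perfect item become broken). This reduces to noting that the set of nonzero coordinates is monotonically decreasing under the update, which in turn rests on the observation that Step~\ref{line:else-update} only modifies coordinates already nonzero and the choice of $\delta$ in Step~\ref{line:select-max-delta} keeps all modified entries in $[0,1]$.
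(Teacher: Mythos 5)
Your proof is correct and follows essentially the same route as the paper's: in the \emph{if} branch the unique broken neighbor becomes perfect and all its edges vanish, and in the \emph{else} branch the maximality of $\delta$ forces some cycle coordinate to zero, deleting that edge. Your additional observation that supports only shrink (so no new broken items or vectors, hence no new edges, can appear when $G$ is rebuilt) is a careful point the paper leaves implicit, and it is exactly the right justification.
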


\begin{proof}
It is easy to see that if the statement in Step \ref{line:if-statement} is true, one broken item becomes perfect and its edges are removed. Otherwise, as proven in Lemma \ref{lem:well-defined} a non-trivial vector $\bnu$ exists for which the conditions in Step \ref{line:shifting-vector-cond} hold. In Step \ref{line:select-max-delta} the maximal $\delta$ is selected such that $\blam^r_i=0$ for at least one item-vector pair $(i,r)$. This means that $(i,r)$ is removed from $G$.
\end{proof}

\begin{lemma}\label{lem:semi-satify}
Given a decomposition $(\bgam^r)_{r=1}^{p}$ of $\bx$ and constraints $(c^r,\beta^r)$ for $1\leq r\leq p$ such that $\bgam^r$ satisfies $(c^r,\beta^r)$, through out its run Algorithm \ref{alg:perfect-decomp} maintains a set of vectors $(\blam^r)_{r=1}^{p}$ such the following holds.
\begin{enumerate}
	\item 
	For any $1\leq r\leq p$, either   $\blam^r$ is perfect and  semi-satisfies $( \bc^r, \beta^r)$, or it satisfies $(\bc^r, \beta^r)$.
	\item
	For any $1\leq r\leq p$ it holds that  $\supp(\blam^r)\subseteq \supp(\bgam^r)$. 
	\item
	$\left(\blam^r\right)_{r=1}^{p}$ is a decomposition of $\bx$. 
\end{enumerate}
\end{lemma}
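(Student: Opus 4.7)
The plan is to proceed by induction on the number of iterations of the while loop. At the start, $\blam^r = \bgam^r$ for all $r$, so by hypothesis each $\blam^r$ satisfies $(\bc^r,\beta^r)$ (hence invariant 1 holds), the support containment in invariant 2 is the identity, and invariant 3 holds because $(\bgam^r)_{r=1}^p$ is given to be a decomposition of $\bx$. I would then show that every iteration preserves all three invariants, splitting into the two cases corresponding to the \textbf{if/else} branch of Algorithm~\ref{alg:perfect-decomp}.

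For the \emph{if} branch (Step~\ref{line:if-update}): the selected broken vector $r$ has exactly one neighbor $i$ in the broken graph, so all other items in $\supp(\blam^r)$ are already perfect, and after setting $\blam^r_i \leftarrow \bx_i$ and $\blam^{r'}_i \leftarrow 0$ for $r'\neq r$, item $i$ becomes perfect and $\blam^r$ itself becomes perfect. Invariant~3 is immediate since $\blam^r_i + \sum_{r'\neq r}\blam^{r'}_i = \bx_i$ by construction, and no other item's column is touched. Invariant~2 holds because $i\in\supp(\blam^r)\subseteq\supp(\bgam^r)$ before the update (it was an edge in the broken graph), and for other $r'$ the support only shrinks. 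For invariant~1, the key observation is that before the update $\blam^r$ was broken, hence by induction it satisfied $(\bc^r,\beta^r)$; taking $i^*=i$ gives $\sum_{i'\neq i}\bc^r_{i'}\blam^r_{i'}\leq \beta^r$ (this sum is unchanged by the update), so the new perfect $\blam^r$ semi-satisfies the constraint. For $r'\neq r$, the update only decreases $\blam^{r'}_i$ to zero, so $\sum_{i'}\bc^{r'}_{i'}\blam^{r'}_{i'}$ does not increase, and perfectness (if it held) is preserved since removing an item from the support cannot create a conflict with another vector.

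The \emph{else} branch (cycle shifting) is the main obstacle. Here I need to combine three ingredients. First, for invariant~3, I would show that the contribution to $\sum_r \blam^r_{i_j}$ from the two touching updates telescopes: item $i_j$ gets $+\delta\nu_j$ from the update in vector $r_j$ and $-\delta\nu_j$ from the update in vector $r_{j-1}$ (since $i_{(j-1)+1}=i_j$), yielding zero net change; all other items are untouched. Second, for invariant~2, each updated entry $(r_j,i_j)$ or $(r_j,i_{j+1})$ already lay in the broken graph, hence was in $\supp(\blam^{r_j})\subseteq\supp(\bgam^{r_j})$, so the support can only shrink (via the maximal choice of $\delta$ in Step~\ref{line:select-max-delta}). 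Third, for invariant~1, the condition $\bc^{r_j}_{i_j}\nu_j - \bc^{r_j}_{i_{j+1}}\nu_{j+1}\leq 0$ enforced in Step~\ref{line:shifting-vector-cond} gives exactly that $\sum_{i}\bc^{r_j}_i\blam^{r_j}_i$ does not increase for any $r_j$ in the cycle, so each $\blam^{r_j}$ continues to satisfy $(\bc^{r_j},\beta^{r_j})$. Vectors and items outside the cycle are untouched, so their status is preserved.

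Combining the base case with the invariance under either branch completes the induction, and since Lemma~\ref{lem:broken-graph-progress} guarantees the loop terminates with an empty broken graph, the output is a perfect decomposition, at which point invariant~1 forces each $\blam^r$ to be perfect and semi-satisfy $(\bc^r,\beta^r)$, as required.
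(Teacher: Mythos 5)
Your proposal is correct and follows essentially the same route as the paper's proof: induction over the iterations of the while loop, with a case split on the \textbf{if}/\textbf{else} branches, using that a broken vector must (fully) satisfy its constraint before the update so that the \textbf{if}-branch yields semi-satisfaction with $i^*=i$, and that the sign condition of Step~\ref{line:shifting-vector-cond} keeps $\bc^{r_j}\cdot\blam^{r_j}$ non-increasing in the cycle-shifting branch. Your telescoping argument for invariant~3 and the support-shrinking argument for invariant~2 match the paper's reasoning.
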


\begin{proof}
For each vector $\blam^r$ let $\blam^r_j$ be the value of $\blam^r$ after the $j$-th iteration of the algorithm ($(\blam^r_0)_{r=1}^p$ are the values set upon initialization). It is easy to see that all three requirements hold for $(\blam^r_0)_{r=1}^p$ as $\blam^r_0=\bgam^r$ for $1\leq r\leq p$, and the requirements hold for $(\bgam^r)_{r=1}^{p}$.

We will prove that the requirements hold after iteration $j+1$ under the assumption that all requirements hold after the $j$-th iteration, i.e., $(\blam^r_j)_{r=1}^p$ is a decomposition of $\bx$, each vector $\blam^r_j$ is either perfect and semi-satisfies $(\bc^r,\beta^r)$ or it satisfies $(\bc^r,\beta^r)$, and $\supp(\blam^r_j)\subseteq\supp(\bgam^r)$. If a vector $\blam^r_j$ is perfect, no changes are made to it, i.e., $\blam^r_{j+1}=\blam^r_j$. As it did not change, its support does not change as well and it still semi-satisfies $(\bc^r,\beta^r)$. 

Next, we will consider the case where it isn't perfect. This means that $\blam^r_j$ it satisfies $(\bc^r,\beta^r)$. Changes to vector $\blam^r_j$ can only be made in Steps \ref{line:if-update} or \ref{line:else-update}. If it changed in Step \ref{line:if-update}, it becomes perfect and its support is unchanged. 
Let $i$ be the item of Step \ref{line:if-update},  then
$$\sum_{i'\in I\setminus \{i\}} (\blam^r_{j+1})_{i'} \cdot \bc^r_{i'}= 
\sum_{i'\in I\setminus \{i\}} (\blam^r_{j})_{i'} \cdot \bc^r_{i'}  \leq \sum_{i'\in I} (\blam^r_{j})_{i'} \cdot \bc^r_{i'} \leq \beta^r,$$ 
where the last inequality holds since $\blam^r_j$ satisfies $(\bc^r,\beta^r)$. 
That is,  $\blam^r_{j+1}$ is perfect and semi-satisfies $(\bc^r,\beta^r)$.

Otherwise, changes were made to vector $\blam^r_j$ in Step \ref{line:else-update}. As items are shifted along positive edges in cycle $C$, no new items are added to the support of $\blam^r_j$, i.e., $\supp(\blam^r_{j+1})\subseteq\supp( \blam^r_j)\subseteq \supp(\bgam^r)$. In addition, since $\blam^r_j$ satisfies $(\bc^r,\beta^r)$ and due to the condition in Step \ref{line:shifting-vector-cond} it holds that $\bc^r\cdot\blam^r_{j+1}\leq\bc^r\blam^r_j\leq\beta^r$. Thus $\blam^r_{j+1}$ satisfies $(\bc^r,\beta^r)$.

Finally, we note that $(\blam^r_{j+1})_{r=1}^p$ remains a decomposition of $\bx$ as any changes made in Steps \ref{line:if-update} and \ref{line:else-update} shift items between vectors and do not change their total sum (see illustration in Figure \ref{fig:simple_cycle_example}).
\end{proof}

\section{Solving the Instance Polytope}
\label{app:poly_fptas}

In this section we prove Lemmas \ref{lem:instance_fptas}.
The proof utilizes the ellipsoid method with  separation oracles
via the result of Gr{\"{o}}tschel, Lov{\'{a}}sz and Schrijver \cite{GLS81} (see \cite{GLS93} for a comprehensive survey). As we cannot provide an exact separation oracle, or even a weak oracle (as defined in \cite{GLS81}), we follow a known scheme, which appeared for example in \cite{KK82,FGMS11}, in which the oracle may fail, and the ellipsoid method will be aborted. In this case, however,  the input
for the separation oracle can be used to derive an approximate solution. 
As in \cite{FGMS11} and \cite{KK82} we use the dual program to replace the exponential number of variables with an exponential number of constraints (for which we provide a separation oracle).  However, as the proof of Lemma \ref{lem:instance_fptas} involves matroid and matching polytopes, in a naive representation of the polytope both primal and dual programs have an exponential number of constraints and variables. To circumvent this issue a variant of the technique is used.

We start by providing an approximate separation oracle for the block polytope.
\begin{lemma}
\label{lem:sep_primal}
There is an algorithm which given an MKC $(w,B,W)$, a block $K\subseteq I$, $\by\in \mathbb{R}^{I}$ and $\eps>0$ either determines that $(1-\eps)\cdot\by \in  P_K$ or finds $\bnu \in [0,1]^I$ such that $\by \cdot \bnu > \by' \cdot \bnu$  for every $\by'\in P_K$, where $P_K$ is the block polytope of $K$.
\end{lemma}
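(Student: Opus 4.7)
The plan is to encode membership in $P_K$ as a configuration LP and exploit LP duality, using the classical $0/1$-knapsack FPTAS as an approximate separation oracle for the dual plugged into the ellipsoid method in the style of \cite{GLS81,GLS93}, as adapted to configuration LPs in \cite{KK82,FGMS11,Ja12}.

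First I would reduce to a clean input. If $\by_i>1$, or $\by_i>0$ while $w(i)>W^*_K$, return $\bnu=\one_{\{i\}}$: in both cases every $\by'\in P_K$ satisfies $\by'_i\leq 1<\by_i$ or $\by'_i=0<\by_i$, so this hyperplane separates. Negative coordinates can be truncated to $0$, since $P_K\subseteq\mathbb{R}_{\geq 0}^I$. Thus assume $\by\in[0,1]^I$ with $\by_i=0$ whenever $w(i)>W^*_K$. Introduce the covering LP and its dual:
\begin{align*}
\text{(P)}: \; &\min\sum_{C\in\cC_K}\bz_C \;\; \text{s.t.} \;\; \sum_{C\ni i}\bz_C\geq \by_i\;\forall i,\; \bz\geq 0,\\
\text{(D)}: \; &\max\sum_{i\in I}\by_i\bnu_i \;\; \text{s.t.} \;\; \sum_{i\in C}\bnu_i\leq 1\;\forall C\in\cC_K,\; \bnu\geq 0.
\end{align*}
A short capping argument shows the extra constraint $\bz_C\leq 1$ in the definition of $P^e_K$ does not change the optimum of (P): whenever $\bz^*_C>1$ one may cap $\bz^*_C$ to $1$, since for every $i\in C$ coverage is preserved by $\by_i\leq 1$. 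Hence $\by\in P_K$ iff the value of (P) is at most $|K|$, and $(1-\eps)\by\in P_K$ iff the value of (P) is at most $|K|/(1-\eps)$. Every feasible $\bnu$ of (D) automatically lies in $[0,1]^I$, because $\{i\}\in\cC_K$ for every $i$ with $\by_i>0$.

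The separation problem for (D), namely deciding whether $\max_{C\in\cC_K}\bnu(C)\leq 1$ for a given $\bnu\geq 0$, is exactly $0/1$-knapsack with values $\bnu_i$, weights $w(i)$, and capacity $W^*_K$. I would invoke the classical knapsack FPTAS with accuracy $\delta=\eps/4$: it returns $C^*\in\cC_K$ with $\bnu(C^*)\geq (1-\delta)\max_{C}\bnu(C)$. If $\bnu(C^*)>1$, report this violated constraint; otherwise certify $\bnu(C)\leq 1/(1-\delta)$ for all $C\in\cC_K$. Plug this oracle into the ellipsoid method applied to (D), searching for $\bnu\geq 0$ with $\bnu\cdot\by>|K|/(1-\delta)$ that satisfies every queried constraint. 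If the ellipsoid locates such a $\bnu$, set $\bnu'=(1-\delta)\bnu\in[0,1]^I$; then $\bnu'(C)\leq 1$ for every $C\in\cC_K$, and for any $\by'\in P_K$ with witness $\bz'\in[0,1]^{\cC_K}$ summing to at most $|K|$, one has $\bnu'\cdot\by'\leq\sum_i\bnu'_i\sum_{C\ni i}\bz'_C=\sum_C\bz'_C\bnu'(C)\leq |K|<\bnu'\cdot\by$, so $\bnu'$ is the required separating hyperplane. Otherwise, the ellipsoid certifies infeasibility, and a standard rescaling shows $\max_{\bnu\geq 0,\,\bnu(C)\leq 1\,\forall C}\bnu\cdot\by\leq |K|/(1-\delta)^2\leq |K|/(1-\eps)$; by LP duality the value of (P) is at most $|K|/(1-\eps)$, i.e.\ $(1-\eps)\by\in P_K$. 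A polynomial-support witness $\bz$ can then be produced by solving the restricted primal whose variables correspond to the polynomial collection of configurations generated by the oracle during the ellipsoid run.

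The main obstacle will be to carefully track how the $(1-\delta)$ slack of the approximate oracle propagates through the ellipsoid/LP-duality pipeline, so that the gap between the inner region $\{\bnu:\bnu(C)\leq 1-\delta\}$ (on which the oracle may reject a point) and the outer region $\{\bnu:\bnu(C)\leq 1/(1-\delta)\}$ (which it may accept) collapses exactly to the factor $1-\eps$ in both the separation and the feasibility branches. Ensuring strict inequality in the separating hyperplane requires a small additional slack in the ellipsoid target value, and verifying the bit-length conditions needed for the ellipsoid method to run in time polynomial in $|I|$, the input encoding, and $1/\eps$ is standard within the framework of \cite{GLS81} but should be checked against our specific LPs.
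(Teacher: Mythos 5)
Your proposal is correct and follows essentially the same route as the paper: the same configuration LP and its dual, the knapsack FPTAS as an approximate separation oracle inside the ellipsoid method, and the same duality argument to conclude either $(1-\eps)\by\in P_K$ or to extract a separating vector from a dual point of value exceeding $|K|$. The only cosmetic difference is that you run the ellipsoid once at the fixed target $|K|/(1-\delta)$ whereas the paper binary-searches over the dual value, and you additionally spell out the capping argument for $\bz_C\leq 1$ and the preprocessing of $\by\notin[0,1]^I$, which the paper leaves implicit.
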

\begin{proof}
Let $(w,B,W)$, $K$, $\by$ and $\eps$ be as defined in the lemma. We first consider the following linear program, in which the variables are $\bz_C$  for $C\in \cC_K$ ($\cC_K$ are the $K$-configurations, as defined in Section \ref{sec:grouping}).
\begin{equation}
\label{eq:block_primal}
\begin{aligned}
&\min &&~~~~~ \sum_{C\in \cC_K} \bz_C\\
&\textnormal{such that} &
\forall i\in I:& ~~~~~\sum_{C\in \cC_K\textnormal{ s.t. } i\in C} \bz_C \geq \by_i\\
&&\forall C\in \cC_K:& ~~~~~\bz_C\geq 0
\end{aligned}
\end{equation}
By Definition \ref{def:block_polytope}, it holds that $\by\in P_K$  if and only if the optimal solution for \eqref{eq:block_primal} is $|K|$ or less. The dual of the above linear program is the following. 
\begin{equation}
\label{eq:block_dual}
\begin{aligned}
&\max &&~~~~~ \sum_{i\in I} \bbeta_i \cdot \by_i \\
&\textnormal{such that} &
\forall C\in \cC_K:& ~~~~~\sum_{i\in C} \bbeta_i \leq 1\\
&&\forall i\in I:& ~~~~~\bbeta_i\geq 0
\end{aligned}
\end{equation}
We note that the first constraint in \eqref{eq:block_dual} is essentially an instance of the knapsack problem, for which there is a known FTPAS \cite{IK75} (also appears in textbooks such as \cite{Va13}). 
Let $\bbeta\in \mathbb{R}^I_{\geq 0}$ be a vector in the feasible region on \eqref{eq:block_dual} and let $\by' \in P_K$. Thus there is $\bz'\in [0,1]^{\cC_K}$ such that $(\by',\bz')\in P^e_K$ ($P^e_K$ is the extended block polytope of $K$). It holds that
\begin{equation}
\label{eq:y_beta_bound}
\bbeta \cdot \by' = \sum_{i\in I} \bbeta_i \cdot \by'_i \leq \sum_{i\in I}\bbeta_i \sum_{C\in \cC_k \textnormal{ s.t. } i\in C} \bz'_C = \sum_{C\in \cC_K}\bz'_C \sum_{i \in C} \bbeta_i \leq \sum_{C\in \cC_K} \bz'_C \leq |K|.
\end{equation}
The first and last inequalities follow from the definition of the extended block polytope (Definition \ref{def:extended_block}), the second inequality holds since $\bbeta$ is in the feasible region of \eqref{eq:block_dual}. In the following we will attempt to find a vector $\bbeta$ for which $\bbeta \cdot \by > |K|$.

For any $v\geq 0$ define,
\begin{equation}
\label{eq:dual_bound}
D_v = \left\{ \bbeta \in \mathbb{R}^I_{\geq 0} ~\middle|~ \begin{array}{ll}
	& \bbeta \cdot \by \geq v \\
	\forall C\in \cC_K:& \sum_{i\in C}\bbeta_i\leq 1
	\end{array}\right\}.
\end{equation}
Clearly, if $D_v\neq \emptyset$ then the optimal value of \eqref{eq:block_dual} is at least $v$, and otherwise the value is less than $v$. 
Given $v\geq 0$ and $\delta>0$ 
we can use the the ellipsoid method along with a separation oracle to determine if $D_v$ is empty or not. Given $\bbeta\in \mathbb{R}_{\geq 0}^{I}$ the separation oracle will verify that $\bbeta \cdot \by \geq v$, and otherwise it will return $\bbeta \cdot\by\geq v$ as the separating hyper-plane. If $\bbeta \cdot \by \geq v$, the separation oracle executes an FPTAS for the knapsack problem (e.g., \cite{IK75}) with the following instance and $\eps$. The set of items is $I$, with weight $w(i)$ and profit $\bbeta_i$ for item $i\in I$, and knapsack (bin) capacity $W^*_K$. If the approximation scheme finds a set $C$ of items
such that $\sum_{i\in C} \bbeta_i >1$, then $C\in \cC_K$ (as this is the capacity constraint of the knapsack). Hence, the separation oracle can 
return $\sum_{i\in C} \bbeta_i \leq 1$ as the violated constraint. If the set $C$ returned by the approximation scheme satisfies $\sum_{i\in C} \bbeta_i\leq 1$ then the separation oracle aborts the ellipsoid algorithm and returns $\bbeta$ .

The execution of the ellipsoid given $v$ can end in  two possible ways.  It will either determine that $D_v=\emptyset$, or the separation oracle will abort it. If the execution is aborted, let $\bbeta$ be the returned vector. Then 
$\bbeta\cdot \by \geq v$ and for $C\in \cC_K$ it holds that $\sum_{i\in C} \bbeta_i \leq \frac{1}{1-\eps}$ (otherwise the optimal solution for the knapsack has value greater than $\frac{1}{1-\eps}$ and the approximation scheme will find a solution of value greater than $(1-\eps)\frac{1}{1-\eps}=1$). Hence $(1-\eps)\bbeta\in D_{(1-\eps)v}$. Thus, in time polynomial in the instance and $\eps^{-1}$ the algorithm either finds $(1-\delta)\bbeta\in  D_{(1-\delta)v}$ or asserts $D_v=\emptyset$. 

Thus, using a binary search   over the values of $v$ we can find $(1-\eps)v_2<v_1<v_2$ and $\bbeta\in [0,1]^{I}$ such that $D_{v_2}=\emptyset$ and $\bbeta \in D_{v_1}$, in time polynomial in the input size and $\eps^{-1}$. 
 If $v_1>|K|$  then $\bbeta\cdot \by >|K|$, and the algorithm can return $\bnu = \bbeta$ due to \eqref{eq:y_beta_bound}. Otherwise, $v_2\leq \frac{|K|}{1-\eps}$.
 Therefore, the optimal solution for \eqref{eq:block_dual} is at most $\frac{|K|}{1-\eps}$, and by duality, the optimal solution for \eqref{eq:block_primal} is at most $\frac{|K|}{1-\eps}$, thus $(1-\eps)\by \in P_K$.  The overall running time of the algorithm is polynomial in the input size and $\eps^{-1}$. 
\end{proof}

\begin{proof}[Proof of Lemma \ref{lem:instance_fptas}]
Let $\left(I,\left(w_t,B_t,W_t\right)_{t=1}^d,\II, f\right)$ be a $d$-MKCP instance,  $\left(K^t_j\right)_{j=1}^{\ell_t}$ be a partition  of $B_t$  to block for $1\leq t\leq d$ , $\bc\in\mathbb{R}^I$ and $\gamma>0$. Let
$P_t$ be the $\gamma$-partition polytope of $(w_t, B_t,W_t)$ and $(K_t)_{j=1}^{\ell}$ for $1\leq t\leq d$ and $P=P(\II) \cap \left( \bigcap_{t=1}^d P_t\right)$ be the $\gamma$-instance polytope of the instance. Finally, let $\OPT= \max_{\bx\in P} \bc \cdot \bx$.

As in the proof of Lemma \ref{lem:sep_primal} we will use a binary search and an approximate separation oracle, this time to an extended form of $P$. Let $P_{K^t_j}$ be the block polytope of $K^t_j$ in the MKC $(w_t, B_t, W_t)$ for $1\leq j \leq \ell_t$ and $1\leq t\leq d$. Given $v\geq 0$, consider the polytope $D_v$, such that each element in $D_v$ is of the form $\left(\bx, \left(\left( \by^{t,j}\right)_{j=1}^{l_t}\right)_{t=1}^{d}\right)$, i.e, 
$D_v\subseteq [0,1]^I \times  \left(\times_{t=1}^{d}\times_{j=1}^{\ell_t} [0,1]^I\right)$ and contains  all the vectors which satisfy the following constraints. 
\begin{align}
&&~~~&\bx\cdot \bc\geq v \label{eq:fptas_v}\\
&\forall 1\leq t\leq d,~i\in I:& & \bx_i\leq \sum_{j=1}^{\ell_t} \by^{t,j}_i \label{eq:fptas_x}\\
&\forall 1\leq t\leq d,~ 1\leq j \leq \ell_t, ~|K^t_j|=1, i\in I\setminus L_{K^t_j,\gamma} : && 
\by^{t,j}_i=0\label{eq:fptas_large}\\ 
& & & \bx\in P(\II) \\
&\forall 1\leq t\leq d,~ 1\leq j \leq \ell_t: &&\by^{t,j}\in P_{K^t_j} 
\end{align}
It can be easily observed that  $D_v\neq \emptyset $ if and only if $v>\OPT$. Furthermore, it holds that if $\left(\bx, \left(\left( \by^{t,j}\right)_{j=1}^{l_t}\right)_{t=1}^{d}\right)\in D_v$ for some $v\geq 0$ then $\bx\in P$. 

Given $v\geq 0$ and $\delta>0$ we can use the ellipsoid method to determine if $D_v=\emptyset$ using a separation oracle. Given $\left(\bx, \left(\left( \by^{t,j}\right)_{j=1}^{l_t}\right)_{t=1}^{d}\right)$, the separation oracle will first verify constraints \eqref{eq:fptas_v}, \eqref{eq:fptas_x} and \eqref{eq:fptas_large} hold. If not, it will return the constraint as a separation hyper-plane. Next, it will use a separation oracle for $\bx\in P(\II)$, we note that such oracles exists for all  constraints $\II$ considered in this paper (see \cite{Sc03} for further details). Finally, for every $1\leq t \leq d$ and $1\leq j \leq \ell_t$ the algorithm of Lemma \ref{lem:sep_primal} will be used with $\by^{t,j}$, the block $K^t_j$ of the MKC $(w_t,B_t,W_t)$ and $\eps$. If it returns a separating vector $\bnu$ it can be used as a separator for the ellipsoid algorithm. If it holds that $(1-\eps)\by^{t,j}\in P_{K^t_j}$ for every $1\leq j \leq \ell_t$ and $1\leq t\leq d$ then the separation oracle aborts the ellipsoid and returns$\left(\bx, \left(\left( \by^{t,j}\right)_{j=1}^{l_t}\right)_{t=1}^{d}\right)$.

As in the proof of Lemma \ref{lem:sep_primal} the ellipsoid algorithm can terminate in one of two ways. Either it determines that $D_v=\emptyset$, or the separation oracle aborts the process and returns $\left(\bx, \left(\left( \by^{t,j}\right)_{j=1}^{l_t}\right)_{t=1}^{d}\right)$. In the latter case it can be easily verified that $(1-\eps)\cdot \left(\bx, \left(\left( \by^{t,j}\right)_{j=1}^{l_t}\right)_{t=1}^{d}\right)\in D_{(1-\eps)v}$.  Furthermore, the running time of the ellipsoid method is polynomial in the 
input size and in $\eps^{-1}$.

Thus, a binary search can be used to find $(1-\eps)v_2<v_1<v_2$ and $\left(\bx, \left(\left( \by^{t,j}\right)_{j=1}^{l_t}\right)_{t=1}^{d}\right)$ such that $D_{v_2}=\emptyset$ and $\left(\bx, \left(\left( \by^{t,j}\right)_{j=1}^{l_t}\right)_{t=1}^{d}\right)\in D_{v_1}$. Hence $\OPT<v_2$, $\bx\in P$ and $$\bx\cdot \bc\geq v_1\geq (1-\eps)v_2 \geq (1-\eps)\OPT.$$
The process can be implemented in time polynomial in $\eps^{-1}$ and the input size , thus the promised FPTAS is obtained.

\end{proof}

\section{Approximation Algorithm: Omitted Proofs}
\label{app:algorithm}
\begin{proof}[Proof of Lemma \ref{lem:instance_soundness}]
	Let $\mathcal{R}$ be the $d$-MKCP instance
	$\left(I,\left(w_t,B_t,W_t\right)_{t=1}^d,\II, f\right)$ and $\left(K^t_j\right)_{j=0}^{\ell_t}$   the $N$-leveled
	partition of $B_t$ for $1\leq t\leq d$. W.l.o.g assume that $A_t$ is disjoint for any $1\leq t\leq d$. That is,  $A_t(b_1)\cap A_t(B_2) =\emptyset$ for any $b_1\neq b_2$, 
	
	For any $1\leq t\leq d$ let $V_t=\{ i\in A(b)~|~0\leq j \leq \ell_t,~b\in K^t_j,~|K^t_j|=1,~w_t(i)>\gamma \cdot W_t(b)\}$ be the set of all item assigned to a block of a single bin in the $t$-th MKC and are $\gamma$-heavy. Since there are at most $N^2$ blocks of a single bin (by the definition of $N$-leveled partition) and a configuration cannot contain $\gamma^{-1}$ or more $\gamma$-heavy items, it follows that $|V_t| \leq \frac{N^2}{\gamma}$. 
	Let $V=\bigcup_{t=1}^{d} V_t$, thus $|V| \leq \frac{N^2\cdot d}{\gamma}$. 
	
	Define $S'=S\setminus V$ and
	denote $f(S)=\OPT$. By the definition of restricted $d$-MKCP it holds that $f(\{i\})-f(\emptyset) \leq \frac{\OPT}{\xi}$ for any $i\in I$. Denote $V=\{i_1,\ldots, i_{\eta}\}$. As $f$ is submodular  we have 
	\begin{equation*}
		\begin{aligned}
	f(S)&=f(S')+\sum_{j=1}^{\eta} f(S'\cup\{i_1, \ldots, i_{j}\})-f(S'\cup \{i_1, \ldots, i_{j-1}\})\\
	&\leq 
	f(S')+\sum_{j=1}^{\eta} f(\{ i_{j}\})-f(\emptyset)\\
	&\leq f(S')+|V|\cdot \frac{\OPT}{\xi}.
	\end{aligned}
	\end{equation*}
Hence, $f(S')\geq \left( 1-\frac{N^2 \cdot d}{\gamma\cdot \xi} \right)f(S)$.

We are left to show that $\one_{S'}\in P$.
For $1\leq t\leq d$ and $0\leq j \leq \ell_t$ define $S'_{t,j}=\bigcup_{b\in K^t_j} A_t(b) \cap S'$, $\by^{t,j}=\one_{S'_{t,j}}$ and 
 $\bz^{t,j}\in [0,1]^{\cC_{K^t_j}}$ (recall $\cC_{K^t_j}$ is the set of  $K^t_j$-configurations) by $\bz^{t,j}_C=1$ if there is $b\in K^t_j$ such that $C=A_t(b)$ and $\bz^{t,j}_C=0$ otherwise.
 It can be easily verified that $(\one_{S'_{t,j}},\bz^{t,j})\in P^e_{K^t_j}$ where $P^e_{K^t_j}$ is the extended block polytope of $K^t_j$. Thus, $\by^{t,j}=\one_{S'_{t,j}}\in P_{K^t_j}$ where $P_{K^t_j}$ is the block polytope of $K^t_j$.
 
 Let $1\leq t\leq d$, $0\leq j \leq \ell_t$ and $i\in I\setminus L_{K^t_j,\gamma}$ with $K^t_j=\{b\}$ for some $b\in B_t$. If $i\in A_t(b)$ it follows that $i\in V_t$, therefore $i\notin S'$ and $i\notin S'_{t.j}$. 
 If $i\notin A_t(b)$ then $i\notin S'_{t,j}$ as well. Hence $\by^{t,j}_i=0$.
  As $\one_{S'} =\sum_{j=0}^{\ell_t} \by^{t,j}$ we can conclude that $(\one_{S'},\by^{t,0},\ldots, \by^{t,\ell_t})\in P^e_t$ for any $1\leq t\leq d$, where $P^e_t$ is the extended $\gamma$-partition polytope of $(w_t, B_t, W_t)$ and the partition $(K^t_j)_{j=0}^{\ell_t}$.
 
 Finally,
  as $S'\subseteq S$ and $S\in \II$ it follows that $S'\in \II$ and 
$\one_{S'}\in P(\II)$. Thus we can conclude $\one_{S'}\in P$.

 \end{proof}

The following definition and Lemma are used in the proof of Lemma~\ref{lem:restricted_approx}. 

We first limit our attention to $d$-MKCP instances solved by the  theorem.
Given a set of items $I$, 
a pair $(\II,f)$  is {\em valid} if $\II\subseteq 2^I$, $f:2^I\rightarrow \mathbb{R}_{\geq 0}$ is submodular and one of the following holds:
\begin{enumerate}
	\item 
	$\II=2^I$ (i.e., $\II$ does not represent a constraint).
	\item The function $f$ is  monotone and  $\II$ defines the independent sets of a matroid.
	\item The function $f$ is modular and $\II$ defines
	the independent sets of a matroid,
	the intersection of the independent sets of two matroids, or a matching.
\end{enumerate}
If follows that Lemma~\ref{lem:restricted_approx} only deals with $d$-MKCP instances in which $(f,\II)$, the function and additional constraint, are valid. 

We use the following lemma to provide a unified probabilistic analysis which does not depend on the sampling method used in Step~\ref{restricted:sampling} of Algorithm~\ref{alg:restricted}.

\begin{lemma}
	\label{lem:sampling}
	Let $(\II ,f)$ be a valid pair,  $\bx \in P(\II)$ (recall $P(\II)$ is the convex hull of $\II$) and $\delta>0$. Also,
	let $R$ be a random set sampled by $\bx$, $\delta$, and $\II$. Then,
	\begin{enumerate}
		\item For any $i\in I$, $\Pr(i\in R) = (1-\delta)^2 \bx_i$.
		\item For any $\ba\in [0,1]^I$,  $\eps>0$  and $\zeta \geq (1-\delta)^2 \ba \cdot \bx$,
		$$\Pr\left(\sum_{i\in R} \ba_i \geq (1+\eps) \zeta \right)\leq \exp\left( -\frac{\zeta\cdot \delta\cdot \eps^2}{20}\right).$$
		\item Let $\eps>0$, $f_{\max}= \max_{i\in I} f(\{i\})-f(\emptyset)$ and $\zeta \leq (1-\delta)^2 F(\bx)$, where $F$ is the multilinear extension of $f$. Also, assume $f$ is monotone.  
		Then,
		$$\Pr\left(f(R) \leq (1-\eps)\zeta \right)\leq \exp\left(- \frac{\zeta\cdot  \delta\cdot \eps^2}{20 \cdot f_{\max}}\right).$$
		\item If $\II=2^I$ then the events $(i\in R)_{i\in I}$ are independent.
		\item The sampling of $R$  can be implemented in polynomial time.
	\end{enumerate}
\end{lemma}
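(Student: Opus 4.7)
The plan is to split into the three cases determined by $\II$ and to verify each of the five properties for the corresponding sampler. Recall that: when $\II=2^I$ we set $R$ to be a set of independent Bernoulli trials with parameter $(1-\delta)^2\bx_i$ for $i\in I$; when $\II$ is a matroid we apply randomized swap rounding to $(1-\delta)^2 \bx$ as in~\cite{CVZ10}; and when $\II$ is a matroid intersection or a matching we use the dependent rounding of~\cite{CVZ11} on $(1-\delta)^2\bx$. All three procedures assume $(1-\delta)^2\bx\in P(\II)$, which follows from $\bx\in P(\II)$ and downward closure of the convex hull polytope $P(\II)$ in each of these three cases.

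Properties 1, 4, and 5 are immediate in each case. Property 1 holds since all three rounding schemes preserve marginals: independent Bernoulli sampling by definition, and the samplers of \cite{CVZ10,CVZ11} are designed to preserve marginals exactly. Property 4 is immediate for $\II=2^I$ since $R$ is built from independent Bernoulli trials. Property 5 follows from the polynomial running time of each sampler (trivial in the independent case; polynomial bounds are proved in \cite{CVZ10,CVZ11}).

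For Property 2 (linear concentration), I would argue as follows. Let $X=\sum_{i\in R}\ba_i$ and $\mu=\E[X]=(1-\delta)^2\ba\cdot\bx\leq\zeta$. When $\II=2^I$ the $X_i=\ba_i\cdot\one[i\in R]$ are independent, so the multiplicative Chernoff bound yields $\Pr[X\geq(1+\eps)\mu]\leq\exp(-\mu\eps^2/3)$. For matroids, the key property of randomized swap rounding established in \cite{CVZ10} is that the coordinate-wise indicator variables of $R$ are negatively correlated in a sufficiently strong sense to admit Chernoff--Hoeffding bounds, as shown in \cite{RKPS06,CVZ10}. For matroid intersection and matching the analogous bound is established in \cite{CVZ11}. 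Since $\zeta\geq\mu$ and $0<\delta\leq 1$, the desired inequality $\exp(-\zeta\delta\eps^2/20)\geq\exp(-\mu\eps^2/3)$ is a straightforward consequence, so the target bound is weaker than the standard Chernoff bound and therefore holds in all three cases.

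For Property 3, the hypothesis is that $f$ is monotone; by validity of $(\II,f)$ this restricts us to $\II=2^I$, $\II$ a matroid, or (as a trivial consequence of Property 2 applied to the modular profits) to $\II$ a matroid intersection or matching with $f$ modular. In the independent Bernoulli and matroid cases the concentration bound for monotone submodular functions under the coordinate distributions induced by these samplers is exactly the content of the concentration inequality of Vondr\'ak~\cite{Vo10}; the bound there has the form $\Pr[f(R)\leq(1-\eps)\E[f(R)]]\leq\exp(-\E[f(R)]\eps^2/(O(1)\cdot f_{\max}))$, and going from $\E[f(R)]\geq\zeta$ to the target bound is the same slack argument as in Property 2. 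The main subtlety is ensuring that the hypotheses of the concentration results in \cite{Vo10,RKPS06} are satisfied by the respective samplers; this is exactly why we impose the validity of $(\II,f)$ (in particular, why non-monotone $f$ is only allowed with $\II=2^I$ and why $f$ is restricted to be modular for matroid intersection/matching, where submodular concentration under the relevant dependent rounding is not known). The constants $20$ and $\tfrac{1}{20}$ in the statement are simply loose absorbers of the constants arising in \cite{CVZ10,CVZ11,Vo10,RKPS06}.
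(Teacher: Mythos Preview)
Your approach matches the paper's: the paper does not give a detailed proof of this lemma either, stating only that it ``follows directly from \cite{CVZ10}, \cite{CVZ11}, the concentration bound of \cite{Vo10} and the Chernoff bounds of \cite{RKPS06}.'' Your case split and attribution of each property to the relevant reference is exactly what is intended.

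There is, however, a genuine slip in your argument for Property~2. You claim that since $\zeta\geq\mu$ and $0<\delta\leq 1$, the inequality $\exp(-\zeta\delta\eps^2/20)\geq\exp(-\mu\eps^2/3)$ is ``a straightforward consequence.'' It is not: this would require $\zeta\delta/20 \leq \mu/3$, but from $\zeta\geq\mu$ you only get $\zeta\delta/20\geq\mu\delta/20$, an inequality in the wrong direction (and $\zeta$ can be arbitrarily larger than $\mu$). The fix is to invoke the form of Chernoff that already allows a threshold above the mean, which is exactly Theorem~3.1 of \cite{RKPS06}: for any $\eta\geq\E[X]$ one has $\Pr[X>(1+\eps)\eta]\leq\exp(-\eta\eps^2/3)$. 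Applied with $\eta=\zeta$ this gives $\exp(-\zeta\eps^2/3)$, and comparing with the target $\exp(-\zeta\delta\eps^2/20)$ now only requires $\delta\leq 20/3$, which holds in the intended regime $\delta\in(0,1)$. The same remark applies to your ``same slack argument'' for Property~3.

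A second point worth flagging: for the matroid intersection and matching cases, the rounding of \cite{CVZ11} does \emph{not} proceed via negative correlation, and its concentration bound genuinely depends on the slack created by scaling $\bx$ down by a $(1-\delta)$ factor before rounding. This, rather than a post-hoc weakening, is the real reason a factor $\delta$ appears in the exponent of Property~2; the independent and single-matroid cases are then simply loosened to match. Your last paragraph gestures at this (``loose absorbers of the constants''), but the role of $\delta$ is structural in the \cite{CVZ11} case, not merely cosmetic.
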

Lemma \ref{lem:sampling} follows directly from \cite{CVZ10}, \cite{CVZ11}, the concentration bound of \cite{Vo10}  and the Chernoff bounds of \cite{RKPS06}. The requirement for $(\II,f)$ to be valid stems from the limitation of the concentration bound in \cite{Vo10}, \cite{CVZ10} and \cite{CVZ11} respectively.

Before we proceed to the proof of Lemma~\ref{lem:restricted_approx} we present its main structure. 
We use Lemma \ref{lem:instance_soundness} to show that there is  $S'\subseteq I$ such that $\one_{S'}\in P$ and $f(S')\geq \left(1- \frac{N^2\cdot d}{\xi \gamma}\right)\OPT \geq (1	-\eps^2)\OPT$, where the last inequality follows from a proper selection of the parameters. Thus, it holds that $F(\bx)\geq c\cdot(1-\eps^2)^2\OPT$. 
We use standard concentration bounds to show that 
with high probability (w.h.p)
$J\cap I^t_j$ satisfies the conditions of Fractional Grouping (Lemma~\ref{lem:grouping}); thus, w.h.p. the packing in Step~\ref{restricted:packing} succeeds.
In the monotone (and modular) case, concentration bounds are used to show that $f(J)$ is sufficiently large w.h.p. In the non-monotone case, the expected value of the returned solution is bounded using the  FKG inequality (as in \cite{CVZ14}).

\begin{proof}[Proof of Lemma \ref{lem:restricted_approx}]
	
Let $d\in \mathbb{N}$, $\eps>0$ and $M>0$. W.l.o.g assume $\eps<0.1$. Define $\delta =\eps^2$ and $\aeps=\frac{\delta}{4}$ (as in Algorithm \ref{alg:restricted}).  Let $\rho>0$ be a constant such that the algorithm of \cite{KK82} returns a bin packing using $\OPT+\rho \log^2\OPT$ bins for every bin packing instance.  By the monotone convergence theorem it holds that 
$$\lim_{N\rightarrow \infty}\sum_{j=1}^{\infty} N^2 \left( 16 \cdot \delta^{-2 } + 2 \right)\cdot\exp  \left(    -\frac{\delta^4}{640}   \cdot N^j \right)=0.$$
Thus there is  $N>M$ such that $ 
-\frac{\aeps}{2}N' + 4 \cdot 4^{\aeps^{-2}}   + \rho \log^2 N' \leq 0 $ for any $N'\geq N$, $N> 32 \cdot \delta^{-2}$  
and 
\begin{equation}
	\label{eq:N} \sum_{j=1}^{\infty} N^2 \left( 16 \cdot \delta^{-2 } + 2 \right)\cdot\exp  \left(    -\frac{\delta^4}{640}   \cdot N^j \right)< \frac{\eps^2}{4\cdot d}.
\end{equation}

Select $0<\gamma<\frac{\delta}{4}$ such that  
\begin{equation}
	\label{eq:gamma_select}
	N^2 \cdot \exp\left(-\frac{\delta^3}{160} \gamma^{-1} \right)<\frac{\eps^2}{4 \cdot d}.
	\end{equation}
Also, select $\xi\in \mathbb{N}$ such that 
$\exp\left(-\frac{0.1\cdot (1-\eps^2)^4 \eps^6 }{20} \cdot \xi \right)\leq \frac{\eps^2}{2}$ and $\frac{N^2\cdot d}{\gamma \cdot \xi}\leq \eps^2$. 

Let $\mathcal{R}$  be an $(N,\xi)$-restricted   $d$-MKCP instance defined by  $\left(I,\left(w_t,B_t,W_t\right)_{t=1}^d,\II, f\right)$, such that  $(f,\II)$ is valid, and $(K^t_j)_{j=0}^{\ell_t}$ the $N$-leveled partition of $B_t$ for $1\leq t\leq d$. Let $\OPT$ be value of the optimal solution for $\mathcal{R}$. By Lemma \ref{lem:instance_soundness} there is $S'\subseteq I$ such that $\one_{S'}\in P$ and $f(S')\geq \left(1- \frac{N^2\cdot d}{\gamma \cdot \xi}\right)\OPT \geq (1-\eps^2) \OPT$, where $P$ is the $\gamma$-instance polytope of $\mathcal{R}$.

To execute Step \ref{restricted:opt} of Algorithm \ref{alg:reduction},
the algorithms of \cite{BF19,CCPV07} can be used to obtain $\bx\in P$ such that $F(\bx)\geq c\cdot (1-\eps^2) f(S') $ ($c$ is defined in the statement of the lemma). The algorithms
require an optimization oracle which given $\bc \in \mathbb{R}^I$, the oracle returns $\bbeta\in P$ such that $\bc \cdot \bbeta$ is maximized.  However, the algorithms can be easily
adapted to use the FPTAS given in Lemma \ref{lem:instance_fptas} instead of the optimization oracle, while preserving the polynomial 
running time and the approximation guarantees. In case $f$ 
is modular  Lemma \ref{lem:instance_fptas} can be used directly
to find $\bx \in P$ s.t $F(\bx)\geq (1-\eps^2) f(S')$ as the multilinear extension $F$ in linear when $f$ is linear  (see Lemma~\ref{lem:multilinear_of_linear}). Thus, we always have that 
\begin{equation}
	\label{eq:Fx_lb}
	F(\bx)\geq c\cdot(1-\eps^2) f(S')\geq c\cdot(1-\eps^2)^2 \OPT
\end{equation}

As in the algorithm, we use $\aeps =\frac{\delta}{4}$. For any  $1\leq t\leq d$ and $0\leq j \leq \ell_t$  let $G^{t,j}_1,\ldots, G^{t,j}_{\tau_{t,j}}$ be the $\aeps$-grouping of $(1-\delta)\by^{t,j}$ with respect to the block $K^t_j$ (Lemma  \ref{lem:grouping}).
Note that $I^t_j$ and $\by^{t,j}$ are defined in Steps \ref{restricted:defs} and \ref{restricted:association} of the algorithm.

We utilize the following definition.
\begin{defn}
For $1\leq t\leq d$ and $0\leq j\leq \ell_t$, a subset $Q\subseteq \supp(\bx)$ is {\em $(t,j)$-compliant} the following holds.
\begin{enumerate}
	\item If $j<N^2$ (i.e, $j$ for which $|K^t_j|=1$) then $w_t(Q\cap I^t_j)\leq W^*_{K^t_j}$. 
	\item If $j\geq N^2$ then 
	\begin{equation*}
	\begin{array}{c}\displaystyle \forall 1\leq k \leq \tau_{t,j}:~~~ |Q\cap G^{t,j}_k \cap I^t_j|\leq \aeps |K^t_j|  \textnormal{ ~~~~and } \\ 
		\\
		\displaystyle w_t(Q\cap I^t_j \cap L_{K^t_j,\aeps}) \leq \sum_{i\in L_{K^t_j, \aeps} } (1-\delta)\by^{t,j}_i \cdot w_t(i) + \frac{\aeps}{4} W^*_{K^t_j}\cdot |K^t_j| 
	\end{array}
	\end{equation*}
\end{enumerate}
Furthermore, we say $Q$ is {\em compliant} if it is $(t,j)$-compliant for every $1\leq t\leq d$ and $0\leq j\leq \ell_t$. 
\end{defn}

We proceed with the following claims.
\begin{claim}
	\label{claim:psi_suff}
	For every $1\leq t\leq d$ and 
	$0\leq j \leq \ell_t$,
	if $R$ is $(t,j)$-compliant then  Algorithm \ref{alg:restricted} succeeds in packing $J\cap I^t_j$ into $K^t_j$ in Step \ref{restricted:packing}.
\end{claim}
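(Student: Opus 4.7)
The approach is a case split on $|K^t_j|$, which by the $N$-leveled structure corresponds to whether $j<N^2$ or $j\geq N^2$. In the easy case $|K^t_j|=1$, Step~\ref{restricted:packing} simply assigns $J\cap I^t_j$ to the unique bin of $K^t_j$, which succeeds iff $w_t(J\cap I^t_j)\leq W^*_{K^t_j}$; since $J=\eta_f(R)\subseteq R$ and $(t,j)$-compliance already gives $w_t(R\cap I^t_j)\leq W^*_{K^t_j}$, the bound passes from $R$ to $J$ for free.

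The substantive case is $|K^t_j|>1$, where $j\geq N^2$ and hence $|K^t_j|\geq N$. Here the plan is to apply Lemma~\ref{lem:grouping} to $S:=J\cap I^t_j$ with the vector $\by:=(1-\delta)\by^{t,j}$. Since $\by^{t,j}\in P_{K^t_j}$ by Step~\ref{restricted:defs}, we have $\by\in(1-\delta)P_{K^t_j}$, and by construction $G^{t,j}_1,\ldots,G^{t,j}_{\tau_{t,j}}$ is precisely the $\aeps$-grouping of $\by$. The two quantitative clauses of $(t,j)$-compliance pass from $R$ to $S\subseteq R$ verbatim, giving $|S\cap G^{t,j}_k|\leq \aeps|K^t_j|$ for every $k$ and $w_t(S\cap L_{K^t_j,\aeps})\leq \sum_{i\in L_{K^t_j,\aeps}}\by_i w_t(i)+\frac{\aeps}{4}W^*_{K^t_j}\cdot|K^t_j|$, which are exactly the hypotheses of Lemma~\ref{lem:grouping} with $\lambda=\frac{\aeps}{4}|K^t_j|$. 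One also needs $S\subseteq H_{K^t_j,\aeps}\cup L_{K^t_j,\aeps}$, which follows because block association gives $I^t_j\subseteq\supp(\by^{t,j})$ and $P_{K^t_j}$ zeroes out coordinates of items heavier than $W^*_{K^t_j}$.

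Substituting $\delta=4\aeps$ into the conclusion of Lemma~\ref{lem:grouping} yields a packing of $S$ into at most $(1-\tfrac{\aeps}{2})|K^t_j|+4\cdot 4^{\aeps^{-2}}$ bins of capacity $W^*_{K^t_j}$. Consequently the bin packing algorithm of~\cite{KK82} invoked in Step~\ref{restricted:packing} outputs a packing with at most $(1-\tfrac{\aeps}{2})|K^t_j|+4\cdot 4^{\aeps^{-2}}+\rho\log^2|K^t_j|$ bins. The parameter $N$ was chosen at the start of the proof of Lemma~\ref{lem:restricted_approx} precisely so that this quantity does not exceed $|K^t_j|$ whenever $|K^t_j|\geq N$, which closes the argument.

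The only step that is not pure bookkeeping is verifying that the compliance definition matches the hypotheses of Lemma~\ref{lem:grouping}; this is immediate once one uses $J\subseteq R$. All the real quantitative work has been front-loaded into Lemma~\ref{lem:grouping} itself and into the calibration of $N,\delta,\aeps$ performed earlier in the proof of Lemma~\ref{lem:restricted_approx}, so the present claim reduces to a straightforward verification of those hypotheses together with a clean application of the leveled structure.
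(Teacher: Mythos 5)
Your proposal is correct and follows essentially the same route as the paper: the same case split on $|K^t_j|$, the same application of Lemma~\ref{lem:grouping} with $\by=(1-\delta)\by^{t,j}$, $\delta=4\aeps$ and $\lambda=\frac{\aeps}{4}|K^t_j|$, and the same appeal to the calibration of $N$ against the $\OPT+\rho\log^2\OPT$ guarantee of the bin packing algorithm. The only cosmetic difference is that the paper applies the lemma to $R\cap I^t_j$ and then passes to the subset $J\cap I^t_j$, whereas you apply it directly to $J\cap I^t_j$ after noting the hypotheses are inherited by subsets; both are equally valid.
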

\begin{proof}
	If $ j<N^2$ then $|K^t_j| =1$. Since $R$ is  $(t,j)$-compliant  the items of $R\cap I^t_j$ fit into the single bin in $K^t_j$. As $J\subseteq R$ it follows that the items of $J\cap I^t_j$ fit into  the single bin in $K^t_j$ as well. 
	
	If $N^2\leq j $ it holds that $|K^t_j|\geq N $.
	By Lemma \ref{lem:grouping}, since $R$ is $(t,j)$-compliant, it follows that $R\cap I^t_j$ can be packed into at most
	$$ (1-\aeps)|K^t_j| + 4 \cdot 4^{\aeps^{-2}} + 2 \cdot \frac{\aeps}{4} |K^t_j|=\left(1-\frac{\aeps}{2}\right)|K^t_j| + 4 \cdot 4^{\aeps^{-2}} \leq |K^t_j|$$ 
	bins of capacity $W^*_{K^t_j}$. Since $J\cap I^t_j\subseteq R\cap I^t_j$ it holds that the items of $J\cap I^t_j$ can also be packed into the above number of bins.  Therefore, the bin packing algorithm of \cite{KK82} packs the items into  at most
	$$ \left(1-\frac{\aeps}{2}\right)|K^t_j| + 4 \cdot 4^{\aeps^{-2}}  + \rho \log^2 |K^t_j| \leq |K^t_j|$$
bins of capacity $W^*_{K^t_j}$.\
\end{proof}

For every $1\leq t\leq d$ and $0\leq j\leq \ell_t$ we define an event $\Psi_{t,j} = \{\textnormal{$R$ is $(t,j)$-compliant}\}$. 
\begin{claim}
	\label{claim:psi_prob}
	
	$$
	\Pr\left( \exists 1\leq t\leq d \textnormal{ and } 1\leq j \leq \ell_t :~\neg \Psi_{t,j}\right)<\frac{\eps^2}{2}$$
\end{claim}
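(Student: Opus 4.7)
The plan is to bound $\Pr(\exists (t,j) : \neg \Psi_{t,j})$ by a union bound over pairs $(t,j)$ and, within each pair, over the at most $\aeps^{-2}+2$ linear inequalities defining $(t,j)$-compliance; each individual inequality will be handled by the Chernoff-type bound of item~2 of Lemma~\ref{lem:sampling}, after rescaling the coefficients so that $\ba\in [0,1]^{I}$. Throughout I will use that the block association is applied to $(1-\delta)(\bx,\by^{t,0},\ldots,\by^{t,\ell_t})$, so the inequalities in Lemma~\ref{lem:association} come with an extra factor $(1-\delta)$ on $\bx$ and $\by^{t,j}$, while the marginal probabilities are $(1-\delta)^2\bx_i$.

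For single-bin blocks ($|K^t_j|=1$, $j<N^2$), the last constraint in \eqref{eq:ex_gamma_partition} forces $\by^{t,j}_i=0$ for $i\notin L_{K^t_j,\gamma}$; together with $I^t_j\subseteq \supp(\by^{t,j})$ from property~3 of Lemma~\ref{lem:association}, this gives $I^t_j\subseteq L_{K^t_j,\gamma}$. Setting $\ba_i=w_t(i)/(\gamma W^*_{K^t_j})\cdot \one[i\in I^t_j]\in [0,1]$ and invoking property~2 of Lemma~\ref{lem:association} together with $\sum_i \by^{t,j}_i w_t(i)\leq W^*_{K^t_j}$ (immediate from $\by^{t,j}\in P_{K^t_j}$ for a singleton block) and $w_t(i^*)\leq \gamma W^*_{K^t_j}$ yields $(1-\delta)^2 \ba\cdot \bx\leq (1-\delta)^2(\gamma^{-1}+1)$. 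Choosing $\zeta=(1-\delta)^2(\gamma^{-1}+1)$ and solving $(1+\eps')\zeta=\gamma^{-1}$ gives, via $\gamma<\delta/4$, a slack $\eps'\geq \delta$, so Lemma~\ref{lem:sampling} gives $\Pr(\neg \Psi_{t,j})\leq \exp\!\left(-\tfrac{\delta^3}{160}\gamma^{-1}\right)$.

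For multi-bin blocks ($|K^t_j|\geq N$, $j\geq N^2$), $(t,j)$-compliance is the conjunction of $\tau_{t,j}\leq \aeps^{-2}+1$ group constraints and one weight constraint. For the $k$-th group constraint I take $\ba_i=\one[i\in G^{t,j}_k\cap I^t_j]$; property~1 of Lemma~\ref{lem:association} gives $(1-\delta)^2 \ba\cdot\bx\leq (1-\delta)(\aeps |K^t_j|+2)$, and setting $\zeta$ equal to this bound and solving $(1+\eps')\zeta=\aeps|K^t_j|$ produces, using $N\geq 32\delta^{-2}$, a multiplicative slack $\eps'\geq \delta/2$; Lemma~\ref{lem:sampling} then gives $\Pr\leq \exp(-\delta^4 |K^t_j|/640)$. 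For the weight constraint I take $\ba_i=w_t(i)/(\aeps W^*_{K^t_j})\cdot \one[i\in I^t_j\cap L_{K^t_j,\aeps}]\in [0,1]$; combining the light-item inequality in property~1 of Lemma~\ref{lem:association} with $w_t(i^*)\leq \aeps W^*_{K^t_j}$ bounds the expectation, and the additive slack $(\aeps/4)|K^t_j|$ in the target, together with the bound $\E[\sum \ba_i]\leq O(|K^t_j|/\delta)$ following from $\sum_i \by^{t,j}_i w_t(i)\leq |K^t_j| W^*_{K^t_j}$, supplies a multiplicative slack of order $\delta$ to Lemma~\ref{lem:sampling}, yielding again a bound dominated by $\exp(-\delta^4 |K^t_j|/640)$.

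The union bound then assembles: each multi-bin block contributes at most $\aeps^{-2}+2\leq 16\delta^{-2}+2$ inequalities, and by the $N$-leveled structure there are at most $N^2$ blocks of each size $N^\ell$ and $d$ MKCs, giving
\begin{equation*}
\Pr(\exists (t,j): \neg \Psi_{t,j})\leq d\cdot N^2 \exp\!\left(-\tfrac{\delta^3}{160}\gamma^{-1}\right)+d\sum_{\ell=1}^{\infty}N^2 (16\delta^{-2}+2)\exp\!\left(-\tfrac{\delta^4}{640}N^\ell\right),
\end{equation*}
where the two terms are each strictly less than $\eps^2/4$ by the choices \eqref{eq:gamma_select} and \eqref{eq:N} of $\gamma$ and $N$, so the total is $<\eps^2/2$. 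The main obstacle will be the multi-bin weight-constraint step, because $V=\sum_{i\in L_{K^t_j,\aeps}}(1-\delta)\by^{t,j}_i w_t(i)$ ranges over the full interval $[0,(1-\delta)|K^t_j|W^*_{K^t_j}]$; verifying that the multiplicative slack fed to Lemma~\ref{lem:sampling} is of order $\delta$ for every value of $V$ calls for splitting the analysis into the regimes $V\leq \aeps|K^t_j|W^*_{K^t_j}$ and $V>\aeps|K^t_j|W^*_{K^t_j}$ and tracking the Chernoff constants with care.
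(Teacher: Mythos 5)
Your proposal is correct and follows the paper's proof essentially verbatim: the same union bound over blocks and over the $\tau_{t,j}+1\leq \aeps^{-2}+2$ constraints per multi-bin block, the same rescalings feeding into item~2 of Lemma~\ref{lem:sampling}, the same use of $I^t_j\subseteq\supp(\by^{t,j})\subseteq L_{K^t_j,\gamma}$ for singleton blocks, and the identical final assembly via \eqref{eq:gamma_select} and \eqref{eq:N}. The one "obstacle" you flag — the light-item weight constraint — requires no regime split in the paper: since the expectation is bounded by $(1-\delta)\zeta$ where $\zeta$ is the target itself, one gets a uniform multiplicative slack $\delta$ independent of $V$, and the lower bound $\zeta\geq \frac{\aeps}{4}|K^t_j|W^*_{K^t_j}$ alone supplies the $\exp(-\delta^4|K^t_j|/640)$ decay.
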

As the proof of the lemma is technical, we first sketch an overview of the proof. 
It holds that each of the events $\Psi_{t,j}$ consists of a few linear constraints over $R$, of the form $\sum_{i\in R} \ba_i < L$, for some $L>0$ and $\ba\in \mathbb{R}_{\geq 0}^I$. For each of these constraints, we can show that $\E[\sum_{i\in R}\ba_i]$ is strictly smaller than $L$. This property follows from the Block Association  and from the distribution by which items are sampled (i.e., $\Pr(i\in R) = (1-\delta)^2 \bx$).  As the instance is restricted and $\bx\in P$, it also follows that the ratio $\frac{\ba_i}{L}$ is always bounded, and thus the concentration bounds of Lemma~\ref{lem:sampling} can be used to show a constraint is violated only with an exponentially small probability. The union bound is then used to show the probability that one of these constraints is violated is bounded by $\frac{\eps^2}{2}$. 
\begin{proof}[Proof of Lemma~\ref{claim:psi_prob}]
	For $1\leq t\leq d $ and 
	 $0\leq j\leq \min\{N^2-1, \ell_t\}$, 
	by Lemma \ref{lem:association} it holds that there is $i^*\in I$ such that 
	$$\sum_{i\in I^t_j\setminus \{i^*\}} (1-\delta) \bx_i \cdot w_t(i) \leq \sum_{i\in I} (1-\delta)\by^{t,j}_i \cdot w_t(i)\leq (1-\delta) W^*_{K^t_j}.$$ Since $\Pr(i\in R )=(1-\delta)^2 \bx_i$ for any $i\in I$, it follows that
	$$\E\left[ w_t(R\cap I^t_j\setminus \{i^*\}) \right] 
	= \sum_{i\in I^t_j\setminus \{i^*\}} (1-\delta)^2 \bx \cdot w_t(i )\leq (1-\delta)^2 W^*_{K^t_j}.$$
	As $I^t_j \subseteq \supp\{\by^{t,j}\}\subseteq \{i\in I ~|~w_t(i)\leq \gamma \cdot W^*_{K^t_j}\}$ (due to Lemma \ref{lem:association} and Definition  \ref{def:extended_gamma_partition}), and $\gamma<\frac{\delta}{2}$ it follows that
	$\E\left[ w_t(R\cap I^t_j) \right]  \leq \left(1-\frac{\delta}{2}\right)\cdot W^*_{K_j}$. Hence,
	$\E\left[ \sum_{i\in R} {\one_{i\in I^t_j} \cdot w_t(i)}\frac{1}{\gamma\cdot W^*_{K^t_j}} \right]  \leq \left(1-\frac{\delta}{2}\right)\cdot \frac{1}{\gamma}$.
	\footnote{For $i\in I$ and $S\subseteq I$ we use $\one_{i\in S}=1$ if $i\in s$ and $\one_{i\in S}=0$ otherwise.} Following the same argument and by Lemma \ref{lem:sampling} it holds that
	\begin{equation}
		\label{eq:psi_singlton}
		\begin{aligned}
			\Pr&(\neg \Psi_{t,j}) = \Pr( w_t(R\cap I^t_j) >W^*_{K^t_j}) =
			\Pr\left(\sum_{i\in R} \frac{\one_{i\in I^t_j} \cdot w_t(i)}{\gamma\cdot W^*_{K^t_j}} >\gamma^{-1} \right)
			\\&
			\leq \Pr\left(\sum_{i\in R} \frac{\one_{i\in I^t_j} \cdot w_t(i)}{\gamma\cdot W^*_{K^t_j}} \geq \left( 1+\frac{\delta}{2}\right)\left(1-\frac{\delta}{2}\right)\cdot \frac{1}{\gamma} \right)
			\leq 
			\exp\left( -\frac{\delta^3}{80}
			\left(1-\frac{\delta}{2}\right)\cdot \gamma^{-1}
			\right)
			\\&
			\leq \exp\left( -\frac{\delta^3}{160}
			\cdot \gamma^{-1}
			\right).
		\end{aligned}
	\end{equation}
	
	For $1\leq t\leq d$ and $N^2\leq  j \leq \ell_t$, by Lemmas \ref{lem:association}  and \ref{lem:sampling} for any $1\leq k \leq \tau_{t,j}$ it holds that
	$$\E\left[ |R\cap I^t_j \cap G^{t,j}_k|  \right] \leq (1-\delta )\left(\aeps |K^t_j| +2 \right)\leq \left( 1-\frac{\delta}{2}\right)\aeps |K^t_j|,$$
	where the last transition holds since $|K^t_j|\geq N$ and thus $|K^t_j| >32\cdot\delta^{-2} =8\cdot \delta^{-1}  \cdot \aeps^{-1}$. 
	Hence, following Lemma \ref{lem:sampling} it holds that
	\begin{equation}
		\label{eq:partition_prob}
		\begin{aligned}
			\Pr&\left( |R\cap I^t_j \cap G^{t.j}_k|>\aeps |K| \right)
			\leq
			\Pr\left( \sum_{i\in R} \one_{i\in I^t_j\cap G^{t,j}_k} \geq \left(1+\frac{\delta}{2}\right) \left( 1-\frac{\delta}{2}\right)\aeps |K^t_j|\right)\\
			&\leq  \exp\left(-\frac{\delta^3}{80} \cdot\left(1- \frac{\delta}{2}\right) \aeps |K^t_j|\right) = \exp\left(-\frac{\delta^4}{640}\cdot   |K^t_j|\right).
		\end{aligned}
	\end{equation}
	Similarly, by Lemma \ref{lem:association} there is $i^*\in I$ such that $\sum_{i\in I^t_j \cap L_{K^t_j, \aeps} \setminus \{i^*\}} (1-\delta) \bx_i \cdot w_t(i) \leq \sum_{i\in L_{K^t_j, \aeps}} (1-\delta) \by^{t,j}_i \cdot w_t(i)$. Thus, $$\sum_{i\in I^t_j \cap L_{K^t_j, \aeps} } (1-\delta) \bx_i \cdot w_t(i) \leq \sum_{i\in L_{K^t_j, \aeps}} (1-\delta) \by^{t,j}_i \cdot w_t(i) +\aeps W^*_{K^t_j} \leq 
	\sum_{i\in L_{K^t_j, \aeps}} (1-\delta) \by^{t,j}_i \cdot w_t(i) +\frac{\aeps}{4} |K^t_j| W^*_{K^t_j},$$
	where the last transition requires $|K^t_j|\geq 4$ which  holds due to the parameter selection. Let $\zeta  = \sum_{i\in L_{K^t_j, \aeps}} (1-\delta) \by^{t,j}_i \cdot w_t(i) +\frac{\aeps}{4}\cdot |K^t_j| \cdot W^*_{K^t_j}$. Subsequently, by Lemma \ref{lem:sampling} we have $\E\left[ w_t(I_j\cap L_{K_j,\aeps} \cap R)\right] \leq (1-\delta)\zeta$ and 
	\begin{equation}
		\label{eq:slack_prob}
		\begin{aligned}
			\Pr&\left(  w_t(I^t_j \cap L_{K^t_j, \aeps} \cap R) > \zeta\right) \leq  \Pr \left( \sum_{i\in R} \one_{i\in I^t_j \cap L_{K^t_j, \aeps }} \cdot \frac{w_t(i)}{W^*_{K^t_j}} \geq (1+\delta)(1-\delta )\frac{\zeta}{W^*_{K^t_j}} \right)\\
			&\leq \exp  \left(    -\frac{\delta^3}{20} \cdot (1-\delta) \cdot  \frac{\zeta}{W^*_{K^t_j}} \right)\leq 
			\exp  \left(    -\frac{\delta^4}{640}   \cdot  |K^t_j| \right).
		\end{aligned}
	\end{equation}
	The last inequality used $\zeta \geq \frac{\aeps}{4} \cdot |K^t_j| \cdot W^*_{K^t_j}= \frac{\delta}{16} \cdot |K^t_j| \cdot W^*_{K^t_j}$.
	By \eqref{eq:partition_prob} and \eqref{eq:slack_prob}, it follows that 
	\begin{equation}
		\label{eq:psi_multi}
		\begin{aligned}
			\Pr(\neg \Psi_{t,j} ) &\leq (1+\tau_{t,j}) \exp  \left(    -\frac{\delta^4}{640}   \cdot  |K^t_j| \right) \leq 
			\left(16\cdot \delta^{-2}+2\right) \exp  \left(    -\frac{\delta^4}{640}   \cdot  |K^t_j| \right),
		\end{aligned}
	\end{equation}
	where the second inequality follows from $\tau_{t,j} \leq \aeps^{-2}+1 = 16\cdot \delta^{-2}+1$ (Lemma \ref{lem:grouping}). 
	
	By \eqref{eq:psi_singlton} and \eqref{eq:psi_multi}, it holds that 
	\begin{equation*}
		\begin{aligned}
		\Pr&(\exists 1\leq t\leq d \textnormal{ and }0\leq j\leq \ell_t:~\neg \Psi_{t,j}) \\ 
		&\leq 
		d\cdot N^2 \exp\left(-\frac{\delta^3}{160} \gamma^{-1} \right)
		+d\cdot \sum_{j=1}^{\infty} N^2 \left( 16 \cdot \delta^{-2 } + 2 \right)\cdot\exp  \left(    -\frac{\delta^4}{640}   \cdot N^j \right) <\frac{\eps^2}{2},
	\end{aligned} \end{equation*}
	where the last inequality is by \eqref{eq:gamma_select} and \eqref{eq:N}. 
\end{proof}

\begin{claim}
	\label{claim:monotone}
If $f$ is monotone (or modular) then the expected value of the returned solution is at least $c\cdot (1-\eps)\OPT$.
\end{claim}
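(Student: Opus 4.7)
The first observation is that, when $f$ is either monotone or modular and non-negative, the purging step becomes trivial: $J=\eta_f(R)=R$. For monotone $f$ this is immediate, as every marginal $f(J\cup\{i\})-f(J)$ is non-negative. For modular non-negative $f$, the marginal equals $f(\{i\})-f(\emptyset)$ which, by non-negativity of $f$ on all singletons and the convention $f(\emptyset)=0$ (or in any case by the linearity of $f$), is non-negative for every $i\in \supp(\bx)$. Hence we may analyze the expected value $\E[f(R)\cdot \one_{\Psi}]$, where $\Psi=\bigcap_{t,j}\Psi_{t,j}$ is the event that the packing in Step~\ref{restricted:packing} succeeds on every block.

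Next, I would apply the monotone concentration bound in Lemma~\ref{lem:sampling}. Set $\zeta=(1-\delta)^2 F(\bx)$ and note that, by the definition of $(N,\xi)$-restricted $d$-MKCP, $f_{\max}=\max_{i\in I}f(\{i\})-f(\emptyset)\leq \OPT/\xi$. Taking $\eps'=\eps^2$ in the bound yields
\[
\Pr\bigl(f(R) \leq (1-\eps^2)\zeta\bigr) \leq \exp\!\left(-\frac{\zeta\cdot \delta \cdot \eps^4}{20\cdot f_{\max}}\right) \leq \exp\!\left(-\frac{c\cdot (1-\eps^2)^4 \cdot \eps^6}{20}\cdot \xi \right).
\]
Since $c\geq 1-e^{-1}>0.1$ in both the monotone and the modular case, the choice of $\xi$ made at the start of the proof of Lemma~\ref{lem:restricted_approx} gives $\Pr(f(R)\leq (1-\eps^2)\zeta)\leq \eps^2/2$. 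For the modular case the same bound can alternatively be obtained directly from the standard Chernoff bound on $\sum_{i\in R} f(\{i\})$, since $F$ is linear.

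Finally, I would combine this with Claim~\ref{claim:psi_prob} via a union bound: let $A=\{f(R)\geq (1-\eps^2)\zeta\}$. Then $\Pr(A\cap \Psi)\geq 1-\eps^2$, and conditioned on $A\cap \Psi$ the algorithm returns $J=R$ together with a valid packing (by Claim~\ref{claim:psi_suff}), so its value is at least $(1-\eps^2)(1-\delta)^2 F(\bx)$. Using $F(\bx)\geq c\cdot (1-\eps^2)^2 \OPT$ from~\eqref{eq:Fx_lb} and the non-negativity of $f$,
\[
\E[\text{returned value}] \geq \Pr(A\cap \Psi)\cdot (1-\eps^2)(1-\delta)^2 F(\bx) \geq c\cdot (1-\eps^2)^6 \OPT \geq c\cdot (1-\eps)\OPT,
\]
where the last inequality uses $(1-\eps^2)^6\geq 1-6\eps^2\geq 1-\eps$ for $\eps<0.1$. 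The only subtle point here is to confirm that every multiplicative loss collected along the way (from $F(\bx)$ vs.\ $\OPT$, the $(1-\delta)^2$ sampling loss, the concentration slack $(1-\eps^2)$, and the failure probability $\eps^2$) is already absorbed by the choice of constants made at the outset of the proof, which it is.
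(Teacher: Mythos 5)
Your proposal is correct and follows essentially the same route as the paper's proof: observe $J=\eta_f(R)=R$, apply the monotone concentration bound of Lemma~\ref{lem:sampling} with $f_{\max}\leq \OPT/\xi$ to get failure probability at most $\eps^2/2$, take a union bound with Claim~\ref{claim:psi_prob}, invoke Claim~\ref{claim:psi_suff} for the packing, and collect the $(1-\eps^2)$ factors against \eqref{eq:Fx_lb}. The only (minor, welcome) addition is your explicit remark on why purging is trivial in the modular case, which the paper glosses over.
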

\begin{proof}
Since $f$ is monotone we have $J=\eta_f(R)=R$. 
	
Denote $f_{\max}=\max_{i\in I} f(\{i\})-f(\emptyset)$. Since $\mathcal{R}$ is $(N,\xi)$-restricted  $d$-MCKP instance it holds that $f_{\max}\leq \frac{\OPT}{\xi}$. Using \eqref{eq:Fx_lb} and  Lemma \ref{lem:sampling}, it holds that 
	\begin{equation}
		\begin{aligned}
			\label{eq:fR_concentration}
			\Pr&\left(f(R)< c\cdot (1-\eps^2)^5 \cdot \OPT \right)\leq \exp\left( -\frac{ (1-\eps^2)^4\cdot \eps^2 \cdot \eps^4\cdot c \cdot \OPT }{20 \cdot f_{\max}}\right)\\
			&\leq 
			\exp\left( -\frac{ (1-\eps^2)^4 \cdot \eps^6\cdot c }{20}\xi\right)\leq \frac{\eps^2}{2}.
		\end{aligned}
	\end{equation}
	
Let $\Phi$ be the event in which $\Psi_{t,j}$ occurs for every $1\leq t \leq d$ and $0\leq j \leq \ell_t$ , and $f(R)\geq c\cdot (1-\eps^2)^5 \cdot \OPT$.  By Claim \ref{claim:psi_prob}  and \eqref{eq:fR_concentration} it holds that $\Pr(\Phi)\geq (1-\eps^2)$. By Claim \ref{claim:psi_suff} if $\Phi$ occurs then the algorithm returns $J=R$ as the solution along with corresponding assignments to the MKCs (Step \ref{restricted:return} of the algorithm). Therefore, the expected value of the solution returned by the algorithm at least
\begin{equation*}
\Pr(\Phi)\cdot \E[f(R)~|~\Phi]\geq (1-\eps^2)\cdot c \cdot (1-\eps^2)^5 \OPT \geq c\cdot(1-\eps)\OPT
\end{equation*}
\end{proof}

Claim \ref{claim:monotone} completes the proof for the case that $f$ is monotone. We are left to handle the case in which $f$ is non-monotone. As we assume $(\II,f)$ is a valid pair, it follows that $\II=2^I$. 

In the following we show that our algorithm behaves similarly to a contention resolution scheme, and prove the approximation guarantee similarly to \cite{CVZ14}. It seems possible to use the theorems of \cite{CVZ14} directly to show the correctness, instead of adjusting their proofs to our setting. However, the technical overhead required to present our procedure in a way which  matches the definitions in \cite{CVZ14} appears to surpass the benefits of such an approach. The technical overhead stems from the fact that our rounding depends on the vectors $\by^{t,j}$ and not just on $\bx$, the rounding  does not strictly adhere to the definition of monotonicity in \cite{CVZ14}, and since the polytope used is only an approximate representation of the problem.

The following is a variant of Claim \ref{claim:psi_prob}, and is proved similarly.

\begin{claim}
	\label{claim:psi_prob_with_Q}
	Assume $\II=2^I$. Then 
	for any $Q\subseteq \supp(\bx)$, $|Q|\leq 1$ it holds that 
	$$
	\Pr\left( \exists 1\leq t\leq d \textnormal{ and } 1\leq j \leq \ell_t :~\neg \Psi_j~|~Q\subseteq R\right)<\frac{\eps^2}{2}$$
\end{claim}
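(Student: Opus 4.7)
The plan is to adapt the argument from Claim \ref{claim:psi_prob} to incorporate the conditioning on $Q \subseteq R$. The case $Q = \emptyset$ is immediate from Claim \ref{claim:psi_prob} itself, so I will focus on $Q = \{i^*\}$ for some $i^* \in \supp(\bx)$. The crucial property, coming from Lemma \ref{lem:sampling}(4), is that when $\II = 2^I$ the events $(i \in R)_{i \in I}$ are independent. Consequently, conditioning on $i^* \in R$ leaves the joint distribution of $(R \setminus \{i^*\})$ unchanged; equivalently, the conditional law of $R$ given $i^* \in R$ coincides with the law of $\{i^*\} \cup R'$, where $R'$ is an independent sample with marginals $(1-\delta)^2 \bx_i$ for all $i \neq i^*$ and $\bx_{i^*}$ replaced by $0$.

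With this reformulation, I would replay each of the three concentration calculations from the proof of Claim \ref{claim:psi_prob} (for singleton blocks, for group counts in multi-bin blocks, and for the light-item weight in multi-bin blocks). Each linear constraint defining $\Psi_{t,j}$ can be written as $\sum_{i \in R} \ba_i \leq L$ for suitable $\ba$ and $L$; under the conditioning it becomes $\sum_{i \in R'} \ba_i \leq L - \ba_{i^*}$, where the shift $\ba_{i^*}$ is deterministic. The only new work is to show that this perturbation is small relative to the slack between the expectation bound (from Lemma \ref{lem:association}) and the threshold $L$. For the count constraint one has $\ba_{i^*} \leq 1$ against $L = \aeps |K^t_j| \geq \aeps N > 8/\delta$; for the singleton-block weight constraint, $\ba_{i^*} \leq \gamma W^*_{K^t_j}$ with $\gamma < \delta/4$ against $L = W^*_{K^t_j}$; and for the light-item constraint in multi-bin blocks, $\ba_{i^*} \leq \aeps W^*_{K^t_j}$ against a threshold of order $(\aeps/4)|K^t_j|\cdot W^*_{K^t_j}$, so the ratio is $O(1/N)$.

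In every case the perturbation is a small constant fraction of the slack factor $(1-\delta/2)$ or $(1-\delta)$ between $\E[\sum_{i\in R} \ba_i]$ and $L$ that was used in \eqref{eq:psi_singlton}, \eqref{eq:partition_prob} and \eqref{eq:slack_prob}. Hence Lemma \ref{lem:sampling}(2), applied to $R'$ with $\bx_{i^*}$ replaced by $0$, yields the same exponential tail bound with a deviation factor that is at least a constant fraction (e.g.\ $\delta/4$) of the original. The exponents in the resulting bounds are within constant factors of the original exponents, and the margins built into \eqref{eq:N} and \eqref{eq:gamma_select} (via the choice of $N$ and $\gamma$) are already comfortable enough to absorb this constant-factor loss. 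A union bound over $1\le t\le d$ and $0\le j\le \ell_t$ then yields total failure probability strictly below $\eps^2/2$. The main technical obstacle I foresee is the bookkeeping of the modified deviation factors and the verification that the sums in \eqref{eq:N} and \eqref{eq:gamma_select} remain below $\eps^2/(4d)$ after these constants are absorbed; no new probabilistic idea is needed, since independence (exclusive to the $\II = 2^I$ case) is exactly what allows the conditioning to reduce to the unconditional analysis of Claim \ref{claim:psi_prob}.
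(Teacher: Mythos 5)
Your proposal is correct and follows essentially the same route as the paper: both exploit the independence of $(i\in R)_{i\in I}$ (available only when $\II=2^I$) to reduce the conditional event to an unconditional concentration bound for $R\setminus Q$ with the threshold shifted by the deterministic contribution of $Q$, and both verify case by case (singleton-block weight via $\gamma<\delta/4$, group counts via $|K^t_j|\geq N$, light-item weight via the $\frac{\aeps}{8}|K^t_j|W^*$ margin) that this shift is absorbed by the slack, after which the same union bound applies. The only cosmetic difference is that the paper tunes the constants so the final exponential bounds are literally unchanged, whereas you allow a constant-factor loss absorbed by the choices in \eqref{eq:N} and \eqref{eq:gamma_select}; both are valid.
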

\begin{proof}
	As $\II=2^I$, by Lemma \ref{lem:sampling}, it follows that the events $(i\in R)_{i\in I}$ are independent. 
	
	For $1\leq t\leq d $ and 
	$0\leq j\leq \min\{N^2-1, \ell_t\}$, 
	by Lemma \ref{lem:association} it holds that there is $i^*\in I$ such that 
	$$\sum_{i\in I^t_j\setminus \{i^*\}} (1-\delta) \bx_i \cdot w_t(i) \leq \sum_{i\in I} (1-\delta)\by^{t,j}_i \cdot w_t(i)\leq (1-\delta) W^*_{K^t_j}.$$ Since $\Pr(i\in R )=(1-\delta)^2 \bx_i$ for any $i\in I$, it follows that
	$$\E\left[ w_t(R\cap I^t_j\setminus \{i^*\}\setminus Q)\right] 
	= \sum_{i\in I^t_j\setminus \{i^*\}\setminus Q} (1-\delta)^2 \bx \cdot w_t(i )\leq
 (1-\delta)^2 W^*_{K^t_j}.$$
	As $I^t_j \subseteq \supp\{\by^{t,j}\}\subseteq \{i\in I ~|~w_t(i)\leq \gamma \cdot W^*_{K^t_j}\}$ (due to Lemma \ref{lem:association} and Definition  \ref{def:extended_gamma_partition}),  and $\gamma<\frac{\delta}{4}$ it follows that,
	$$\E\left[ w_t(R\cap I^t_j\setminus Q) \right]  \leq  \gamma \cdot  W^*_{K^t_j} + (1-\delta)^2 W^*_{K^t_j}\leq \left(1-\frac{3\cdot\delta}{4}\right)\cdot W^*_{K^t_j}.$$
 Hence,
	$\E\left[ \sum_{i\in R\setminus Q} {\one_{i\in I^t_j} \cdot w_t(i)}\frac{1}{\gamma\cdot W^*_{K^t_j}} \right]  \leq \left(1-\frac{3\cdot\delta}{4}\right)\cdot \frac{1}{\gamma}$.
Recall that as $\II=2^I$,  by Lemma~\ref{lem:sampling},  it holds that the sum $\sum_{i\in R\setminus Q} {\one_{i\in I^t_j} \cdot w_t(i)}\frac{1}{\gamma\cdot W^*_{K^t_j}}$  is independent of the event  $Q\subseteq R$. Thus,
	\begin{equation}
		\label{eq:psi_singleton_with_Q}
		\begin{aligned}
			\Pr&(\neg \Psi_{t,j}~|~Q\subseteq R) = \Pr( w_t(R\cap I^t_j\setminus Q) >W^*_{K^t_j}-w(I^t_j\cap Q) ~|~Q\subseteq R)\\&
			 \leq
			\Pr\left(\sum_{i\in R\setminus Q} \frac{\one_{i\in I^t_j} \cdot w_t(i)}{\gamma\cdot W^*_{K^t_j}} >\gamma^{-1}\left(1-\frac{ \delta}{4} \right)\right)
			\\&
			\leq \Pr\left(\sum_{i\in R\setminus Q} \frac{\one_{i\in I^t_j} \cdot w_t(i)}{\gamma\cdot W^*_{K^t_j}} \geq \left( 1+\frac{\delta}{2}\right)\left(1-\frac{3\cdot \delta}{4}\right)\cdot \frac{1}{\gamma}  \right)
			\\&
			\leq 
			\exp\left( -\frac{\delta^3}{80}
			\left(1-\frac{3\cdot \delta}{4}\right)\cdot \gamma^{-1}
			\right)
			\\&
			\leq \exp\left( -\frac{\delta^3}{160}
			\cdot \gamma^{-1}
			\right),
		\end{aligned}
	\end{equation}
where first equality follows from the definition of $\Psi_{t,j}$, the first inequality uses $|Q|\leq 1$ and $\gamma<\frac{\delta}{4}$ as well as the independence of the sum from $Q\subseteq R$, the third inequality is by Lemma~\ref{lem:sampling}.
	
	For $1\leq t\leq d$ and $N^2\leq  j \leq \ell_t$, by Lemmas \ref{lem:association}  and \ref{lem:sampling} for any $1\leq k \leq \tau_{t,j}$ it holds that
	$$\E\left[ |R\cap I^t_j \cap G^{t,j}_k\setminus Q|  \right] \leq (1-\delta )\left(\aeps |K^t_j| +2 \right)\leq \left( 1-\frac{3\cdot \delta}{4}\right)\aeps |K^t_j|,$$
	where the last transition holds since $|K^t_j|\geq N$ and thus $|K^t_j| >32\cdot\delta^{-2} =8\cdot \delta^{-1}  \cdot \aeps^{-1}$. 
	Hence, following an argument similar to that of \eqref{eq:psi_singleton_with_Q}, we have, 
	\begin{equation}
		\label{eq:partition_prob_with_Q}
		\begin{aligned}
			\Pr&\left( |R\cap I^t_j \cap G^{t,j}_k|>\aeps |K|~\middle|~Q\subseteq R \right)\\
			&= \Pr\left( |R\cap I^t_j \cap G^{t,j}_k\setminus Q|>\aeps |K| -|Q\cap I^t_j \cap G^{t,j}_k|~\middle|~Q\subseteq R \right)\\
			&\leq 
			\Pr\left( |R\cap I^t_j \cap G^{t,j}_k\setminus Q|>\aeps |K| \left(1-\frac{\delta}{4} \right)\right)
			\\
			&\leq
			\Pr\left( \sum_{i\in R\setminus Q} \one_{i\in I^t_j\cap G^{t,j}_k} \geq \left(1+\frac{\delta}{2}\right) \left( 1-\frac{3\cdot \delta}{4}\right)\aeps |K^t_j|\right)\\
			&\leq  \exp\left(-\frac{\delta^3}{80} \cdot\left(1- \frac{\delta}{2}\right) \aeps |K^t_j|\right) = \exp\left(-\frac{\delta^4}{640}\cdot   |K^t_j|\right).
		\end{aligned}
	\end{equation}
	Similarly, by Lemma \ref{lem:association} there is $i^*\in I$ such that $$\sum_{i\in I^t_j \cap L_{K^t_j, \aeps} \setminus \{i^*\}} (1-\delta) \bx_i \cdot w_t(i) \leq \sum_{i\in L_{K^t_j, \aeps}} (1-\delta) \by^{t,j}_i \cdot w_t(i).$$
Define  $\zeta  = \sum_{i\in L_{K^t_j, \aeps}} (1-\delta) \by^{t,j}_i \cdot w_t(i) +\frac{\aeps}{4}\cdot |K^t_j| \cdot W^*_{K^t_j}$. Thus, 
\begin{equation*}
	\begin{aligned}
	\sum_{i\in I^t_j \cap L_{K^t_j, \aeps} } &(1-\delta) \bx_i \cdot w_t(i) \leq \sum_{i\in L_{K^t_j, \aeps}} (1-\delta) \by^{t,j}_i \cdot w_t(i) +\aeps W^*_{K^t_j} 
	\\ 
	&\leq 
	\sum_{i\in L_{K^t_j, \aeps}} (1-\delta) \by^{t,j}_i \cdot w_t(i) +\frac{\aeps}{8} |K^t_j| W^*_{K^t_j}=\zeta - \frac{\aeps}{8} |K^t_j| W^*_{K^t_j},
	\end{aligned}
	\end{equation*}
	where the last inequality requires $|K^t_j|\geq 8$ which  holds due to the parameter selection. Subsequently, by Lemma \ref{lem:sampling} we have $\E\left[ w_t(I_j\cap L_{K_j,\aeps} \cap R\setminus Q)\right] \leq (1-\delta)\left(\zeta-\frac{\aeps}{8} |K^t_j| W^*_{K^t_j} \right)$ and 
	\begin{equation}
		\label{eq:slack_prob_with_Q}
		\begin{aligned}
			\Pr&\left(  w_t(I^t_j \cap L_{K^t_j, \aeps} \cap R) > \zeta~\middle|~ Q\subseteq R\right)\\
			&= 
			\Pr\left(  w_t(I^t_j \cap L_{K^t_j, \aeps} \cap R\setminus Q) > \zeta-w_t(I^t_j \cap L_{K^t_j, \aeps} \cap Q )~\middle|~ Q\subseteq R\right)\\
			&\leq 
			\Pr\left(  w_t(I^t_j \cap L_{K^t_j, \aeps} \cap R\setminus Q) > \zeta-\aeps W^*_{K^t_j}\right)
			 \\
			 &\leq 
			 \Pr\left(  w_t(I^t_j \cap L_{K^t_j, \aeps} \cap R\setminus Q) > \zeta-\frac{\aeps}{8}|K^t_j|\cdot W^*_{K^t_j}\right)
			 \\
			& \leq  \Pr \left( \sum_{i\in R\setminus Q} \one_{i\in I^t_j \cap L_{K^t_j, \aeps }} \cdot \frac{w_t(i)}{\aeps W^*_{K^t_j}} \geq (1+\delta)(1-\delta )\frac{\zeta-\frac{\aeps}{8} |K^t_j| \cdot W^*_{K^t_j}}{\aeps W^*_{K^t_j}} \right)\\
			&\leq \exp  \left(    -\frac{\delta^3}{20} \cdot (1-\delta) \cdot  \frac{\zeta-\frac{\aeps}{8} \cdot |K^t_j| \cdot W^*_{K^t_j} }{\aeps  W^*_{K^t_j}} \right)\leq 
			\exp  \left(    -\frac{\delta^4}{640}   \cdot  |K^t_j| \right).
		\end{aligned}
	\end{equation}
	The last inequality used $\zeta \geq \frac{\aeps}{4} \cdot |K^t_j| \cdot W^*_{K^t_j}$.
	By \eqref{eq:partition_prob_with_Q} and \eqref{eq:slack_prob_with_Q}, it follows that 
	\begin{equation}
		\label{eq:psi_multi_with_Q}
		\begin{aligned}
		\Pr(\neg \Psi_{t,j} ~|~Q\subseteq R) &\leq (1+\tau_{t,j}) \exp  \left(    -\frac{\delta^4}{640}   \cdot  |K^t_j| \right) \leq 
			\left(16\cdot \delta^{-2}+2\right) \exp  \left(    -\frac{\delta^4}{640}   \cdot  |K^t_j| \right),
		\end{aligned}
	\end{equation}
	where the second inequality follows from $\tau_{t,j} \leq \aeps^{-2}+1 = 16\cdot \delta^{-2}+1$ (Lemma \ref{lem:grouping}). 
	
	By \eqref{eq:psi_singleton_with_Q} and \eqref{eq:psi_multi_with_Q}, it holds that 
	\begin{equation*}
		\begin{aligned}
			\Pr&(\exists 1\leq t\leq d \textnormal{ and }0\leq j\leq \ell_t:~\neg \Psi_{t,j}~|~Q\subseteq R) \\ 
			&\leq 
			d\cdot N^2 \exp\left(-\frac{\delta^3}{160} \gamma^{-1} \right)
			+d\cdot \sum_{j=1}^{\infty} N^2 \left( 16 \cdot \delta^{-2 } + 2 \right)\cdot\exp  \left(    -\frac{\delta^4}{640}   \cdot N^j \right) <\frac{\eps^2}{2},
	\end{aligned} \end{equation*}
	where the last inequality is by \eqref{eq:gamma_select} and \eqref{eq:N}. 

\end{proof}

W.l.o.g we assume $I=\{1,2,\ldots, n\}$ and that the order in which $\eta_f$ iterates over the items is $1,2,\ldots, n$. Denote $[i]=\{1,2,\ldots, i\}$ for $i\in I$, $[0]=\emptyset$, and $$\forall Q\subseteq I:~~~~\kappa(Q)=\begin{cases} 1 & \textnormal{$Q$ is compliant} \\0 & \textnormal{otherwise}
\end{cases}.$$

\begin{claim}
	\label{claim:FKG}
	Assume $\II=2^I$. 
For any  $i\in I$ it holds that 
$$\E\left[ \kappa(R) \cdot \left( f(J\cap[i])-f(J\cap [i-1])\right)\right]\geq (1-\eps^2) \E\left[ f(R\cap[i])-f(R\cap [i-1])\right].$$
\end{claim}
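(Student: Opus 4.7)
The plan is to combine submodularity of $f$, the FKG inequality, and Claim~\ref{claim:psi_prob_with_Q} applied with $Q=\{i\}$. I would first observe that both sides of the inequality vanish on the event $\{i\notin R\}$: indeed, $R\cap[i]=R\cap[i-1]$ in that case, and since $\eta_f$ never adds an item outside $R$, also $J\cap[i]=J\cap[i-1]$. So both expectations equal $\Pr(i\in R)$ times the corresponding conditional expectation given $\{i\in R\}$. Writing $\tilde R:=R\setminus\{i\}$ and $\tilde J:=\eta_f(\tilde R\cap[i-1])=J\cap[i-1]$, which depends only on $\tilde R\cap[i-1]$, the purging rule gives the identity $f(J\cap[i])-f(J\cap[i-1])=(f(\tilde J\cup\{i\})-f(\tilde J))^+$ on $\{i\in R\}$. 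The non-monotone hypothesis forces $\II=2^I$, so by Lemma~\ref{lem:sampling} the coordinates of $R$ are independent; in particular $\tilde R$ has a product distribution on $\{0,1\}^{I\setminus\{i\}}$ that is unchanged by conditioning on $\{i\in R\}$.

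The main step is an FKG argument applied to $A(\tilde R):=\kappa(\tilde R\cup\{i\})$ and $C(\tilde R):=f((\tilde R\cap[i-1])\cup\{i\})-f(\tilde R\cap[i-1])$, viewed as functions on $\{0,1\}^{I\setminus\{i\}}$. Both are monotone non-increasing: each compliance constraint is of the form $\sum_{i'\in Q}\ba_{i'}\leq L$, so adding items to $\tilde R$ can only break compliance, making $A$ non-increasing; along $\tilde R\cap[i-1]$ the quantity $C$ is the marginal of $i$ on a growing base set, hence non-increasing by submodularity, and it does not depend on $\tilde R\cap\{i+1,\ldots,n\}$. FKG then yields $\E[A\,C]\geq\E[A]\cdot\E[C]$, while Claim~\ref{claim:psi_prob_with_Q} with $Q=\{i\}$ bounds $\E[A]=\Pr(\kappa(R)=1\mid i\in R)\geq 1-\eps^2/2\geq 1-\eps^2$. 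To relate this to the left-hand side of the claim, I would use the pointwise submodular bound $(f(\tilde J\cup\{i\})-f(\tilde J))^+\geq C(\tilde R)$, which follows from $\tilde J\subseteq\tilde R\cap[i-1]$ (the positive part only helps when $C$ is negative). Multiplying by $A\geq 0$ and taking expectations yields
\[
\E\bigl[A\cdot(f(\tilde J\cup\{i\})-f(\tilde J))^+\bigr]\ \geq\ \E[A\,C]\ \geq\ \E[A]\cdot\E[C].
\]

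To finish, I would split on the sign of $\E[C]$. If $\E[C]\geq 0$, the chain above combined with $\E[A]\geq 1-\eps^2$ gives $\E[A\cdot(f(\tilde J\cup\{i\})-f(\tilde J))^+]\geq(1-\eps^2)\E[C]$. If $\E[C]<0$, the left-hand side is non-negative while $(1-\eps^2)\E[C]<0$, so the bound holds trivially. Multiplying by $\Pr(i\in R)$ recovers the claim, since $\Pr(i\in R)\E[C]=\E[f(R\cap[i])-f(R\cap[i-1])]$ and $\Pr(i\in R)\E[A\cdot(f(\tilde J\cup\{i\})-f(\tilde J))^+]=\E[\kappa(R)(f(J\cap[i])-f(J\cap[i-1]))]$.

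The main subtlety I anticipate is the FKG step itself. A direct attempt applies FKG to $A$ and $B(\tilde R):=(f(\tilde J\cup\{i\})-f(\tilde J))^+$, but $B$ is not monotone in $\tilde R$ in general: the purged set $\tilde J$ need not grow with $\tilde R$, because adding an earlier item may reverse previous purging decisions (an explicit counterexample can already be built from a submodular function on three elements). The fix is to first use submodularity to replace $\tilde J$ with the larger set $\tilde R\cap[i-1]$, producing the manifestly monotone quantity $C$, and only then apply FKG; the pointwise inequality $B\geq C$ ensures that nothing is lost in this substitution.
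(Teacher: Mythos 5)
Your proof is correct and follows essentially the same route as the paper's: condition on $\{i\in R\}$, use the purging rule to write the marginal of $J$ at step $i$, pass from $\tilde J$ to $\tilde R\cap[i-1]$ by submodularity, apply FKG to the decreasing pair $(\kappa,\,\text{marginal})$ under the product measure, and invoke Claim~\ref{claim:psi_prob_with_Q} with $Q=\{i\}$. The only (harmless) divergence is that the paper carries the positive part $\max\{0,f_{X\cap[i-1]}(\{i\})\}$ through the FKG step so both functions stay non-negative, whereas you apply FKG to the possibly signed marginal $C$ (valid in the covariance form, by shifting) and dispose of the case $\E[C]<0$ separately; your explicit factoring-out of $\Pr(i\in R)$ at the start also makes the final bookkeeping a bit cleaner than the paper's.
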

\begin{proof}
Note that if $i\not\in \supp(\bx)$ the claim trivially holds, thus we can assume $i\in \supp(\bx)$ and $\Pr(i\in R)>0$. 
Define $Q=\{i\}$ and $\pi(X)=\Pr\left( R=X\cup\{i\}~|~Q\subseteq R\right)$ for any $X\subseteq I\setminus \{i\}$. It holds that, 
\begin{equation}
	\label{eq:towards_FKG}
	\begin{aligned}
	\E&\left[ \kappa(R) \cdot \left( f(J\cap[i])-f(J\cap [i-1])\right)\right]=
		\E\left[ \kappa(R) \cdot \one_{i\in J} \cdot f_{J\cap [i-1]}(\{i\})\right]\\
		&\geq\Pr\left(Q\subseteq R \right) \cdot\E\left[ \kappa(R) \cdot \one_{i\in J} \cdot f_{J\cap [i-1]}(\{i\})~\middle|~Q\subseteq R\right]\\
			&\geq\Pr\left(Q\subseteq R \right) \cdot\E\left[ \kappa(R) \cdot \one_{i\in J} \cdot \max\left\{ 0,~f_{J\cap [i-1]}(\{i\})\right\}~\middle|~Q\subseteq R\right]\\
			&=\Pr\left(Q\subseteq R \right)\cdot \E\left[ \kappa(R)  \cdot \max\left\{ 0,~f_{J\cap [i-1]}(\{i\})\right\}~\middle|~Q\subseteq R\right]\\
			&\geq \Pr\left(Q\subseteq R \right) \cdot\E\left[ \kappa(R)  \cdot \max\left\{ 0,~f_{R\cap [i-1]}(\{i\})\right\}~\middle|~Q\subseteq R\right]\\
			&=  \Pr\left(Q\subseteq R \right)  \sum_{X\subseteq I \setminus \{i\}} \Pr\left( R=X\cup\{i\}~|~Q\subseteq R\right)\cdot \kappa(X\cup\{i\}) \cdot 
			\max\left\{ 0,~f_{X\cap [i-1]}(\{i\})\right\} \\
			&=  \Pr\left(Q\subseteq R \right)  \sum_{X\subseteq I \setminus \{i\}} \pi(X)\cdot \kappa(X\cup\{i\}) \cdot 
			\max\left\{ 0,~f_{X\cap [i-1]}(\{i\})\right\},
	\end{aligned}
\end{equation}
where the first inequality holds since $i\in J\subset R$ implies $Q\subseteq R$, the second inequality follows from the definition of $\eta_f$, the second equality holds since if $i\not\in J$ and $Q\subseteq R$  then $\max\{ 0,f_{J\cap[i-1] }(\{i\})\} =0$, the third inequality follows from submodularity and $J\subseteq R$. 

In the following we use the FKG inequality. We refer the reader to the relevant chapter in \cite{AS04b} for the definition of lattice and log-supermodular functions.
Since the events $(i\in R)_{i\in I}$ are independent, it follows that for any $X_1, X_2 \subseteq I\setminus \{i\}$ we have,
$$\pi(X_1) \cdot \pi(X_2) = \pi(X_1\cup X_2)\cdot \pi(X_1\cap X_2).$$
Hence $\pi$ is {\em log supermodular} over the lattice of subsets of $I\setminus \{i\}$. Additionally, the functions  $\kappa(X\cup \{i\})$ and $	\max\left\{ 0,~f_{X\cap [i-1]}(\{i\})\right\} $ is decreasing as functions of $X$. Finally, it holds that $\sum_{X\subseteq I\setminus \{i\}} \pi(X)=1$. Thus, by the FKG inequality (Theorem 2.1 in \cite{AS04b}) it holds that,
\begin{equation}
	\label{eq:FKG_app}
	\begin{aligned}
  \Pr&\left(Q\subseteq R \right)  \sum_{X\subseteq I \setminus \{i\}} \pi(X)\cdot \kappa(X\cup\{i\}) \cdot 
\max\left\{ 0,~f_{X\cap [i-1]}(\{i\})\right\}  \\
&\geq 
\Pr\left(Q\subseteq R \right)  \cdot \left(
 \sum_{X\subseteq I\setminus\{i\} }\pi(X) \cdot \kappa(X\cup\{i\}) 
	\right) 
	\cdot \left(  
	\sum_{X\subseteq I\setminus\{i\} }\pi(X) \cdot 	\max\left\{ 0,~f_{X\cap [i-1]}(\{i\})\right\}
	\right) \\
&= 
\Pr\left(Q\subseteq R \right)\cdot  \E\left[\kappa(R)~\middle|~Q\subseteq R\right]
\cdot \E\left[	\max\left\{ 0,~f_{R\cap [i-1]}(\{i\})\right\}
~\middle|~Q\subseteq R\right]\\
&\geq \left(1-\eps^2\right) \cdot \E\left[\max\left\{ 0,~f_{R\cap [i-1]}(\{i\})\right\} \right] \\
&\geq  \left(1-\eps^2\right) \cdot\E\left[ f(R\cap [i])-f(R\cap [i-1]) \right],
\end{aligned}
\end{equation}
where the second inequality is due to Claim~\ref{claim:psi_prob_with_Q}. The statement of the 
claim is obtained by combining \eqref{eq:towards_FKG} and \eqref{eq:FKG_app}. 
\end{proof}

\begin{claim}
	\label{claim:non_mon}
	If $f$ is non-monotone then the expected value of the solution returned by the algorithm  is at least $c\cdot (1-\eps) \OPT$. 
\end{claim}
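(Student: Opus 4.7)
The plan is to lower bound the expected value of the solution returned by Algorithm \ref{alg:restricted} by $\E[\kappa(R)\cdot f(J)]$, then telescope $f(J)$ along the fixed ordering on $I$ and apply Claim \ref{claim:FKG} termwise. The output value is at least $\kappa(R)\cdot f(J)$ pointwise: when $\kappa(R)=1$ the packing in Step \ref{restricted:packing} succeeds by Claim \ref{claim:psi_suff}, so the algorithm returns $J$ with value $f(J)$; when $\kappa(R)=0$ the output value is at least $f(\emptyset)\geq 0=\kappa(R)\cdot f(J)$, since the purging procedure $\eta_f$ ensures $f(J)\geq f(\emptyset)$. I would then write $f(J)-f(\emptyset)=\sum_{i=1}^{n}\bigl(f(J\cap[i])-f(J\cap[i-1])\bigr)$, apply Claim \ref{claim:FKG} to each summand, and use $\Pr(\kappa(R)=1)\geq 1-\eps^2$ from Claim \ref{claim:psi_prob} together with $f(\emptyset)\geq 0$ to obtain
$$\E[\kappa(R)\cdot f(J)] \geq \Pr(\kappa(R)=1)\cdot f(\emptyset)+(1-\eps^2)\,\E[f(R)-f(\emptyset)] \geq (1-\eps^2)\,\E[f(R)],$$
where the last step uses that the coefficient of $f(\emptyset)$ equals $\Pr(\kappa(R)=1)-(1-\eps^2)\geq 0$.

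The remaining task is to relate $\E[f(R)]=F((1-\delta)^2\bx)$ to $F(\bx)$. Since $f$ is nonnegative submodular, the map $t\mapsto F(t\bx)$ is concave on $[0,1]$ (its second derivative $\bx^\top\nabla^2 F(t\bx)\,\bx$ is a sum of nonpositive off-diagonal terms weighted by $\bx\geq 0$), and $F(\mathbf{0})=f(\emptyset)\geq 0$, so by concavity $F((1-\delta)^2\bx)\geq (1-\delta)^2 F(\bx)$. Combining with the guarantee $F(\bx)\geq c(1-\eps^2)^2\OPT$ from \eqref{eq:Fx_lb} and recalling $\delta=\eps^2$, the expected value of the returned solution is at least $(1-\eps^2)\cdot(1-\eps^2)^2\cdot c(1-\eps^2)^2\OPT=c(1-\eps^2)^5\OPT$, which exceeds $c(1-\eps)\OPT$ since $\eps<0.1$ implies $(1-\eps^2)^5\geq 1-5\eps^2\geq 1-\eps$.

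The main conceptual hurdle, namely decoupling the dependence between compliance ($\kappa(R)$), the purged set $J$, and the marginal contributions of $f$, has already been resolved in Claim \ref{claim:FKG} via the FKG inequality (this is where the nonpositive correlation between the monotone event $\{\kappa(R)=1\}$ and the decreasing marginal function is exploited). What remains is routine: a telescoping identity, a concentration-free application of $\Pr(\kappa(R)=1)\geq 1-\eps^2$, and the standard concavity-along-rays inequality $F(t\bx)\geq tF(\bx)$ for the multilinear extension of a nonnegative submodular function.
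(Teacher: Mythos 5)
Your proposal is correct and follows essentially the same route as the paper's proof: lower-bound the output value pointwise by $\kappa(R)\cdot f(J)$, telescope along the fixed order, apply Claim \ref{claim:FKG} termwise together with $\Pr(\kappa(R)=1)\geq 1-\eps^2$ from Claim \ref{claim:psi_prob}, and finish via \eqref{eq:Fx_lb}. The only difference is that you spell out the final step $\E[f(R)]=F((1-\delta)^2\bx)\geq(1-\delta)^2F(\bx)$ via concavity along rays, which the paper leaves implicit.
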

\begin{proof}
As $f$ is non-monotone we have $\II=2^I$. 
By Claim \ref{claim:psi_suff} the expected value of the solution returned by the algorithm is at least $\E\left[ \kappa(R) \cdot f(J)\right]$. By Claim~\ref{claim:FKG} we have,
\begin{equation*}
	\begin{aligned}
	\E&\left[ \kappa(R) \cdot f(J)\right]=
		\E\left[ \kappa(R) \left(f(\emptyset  )+\sum_{i=1}^{n}\left( f\left(J\cap[i]\right) - f\left(J\cap[i-1]\right)\right) \right)\right]\\
		&= \E\left[ \kappa(R) \cdot f(\emptyset)  \right] + \sum_{i=1}^{n} \E\left[ \kappa(R) \left( f\left(J\cap[i]\right) - f\left(J\cap[i-1]\right) \right) \right]\\
		&\geq (1-\eps^2)\cdot f(\emptyset)     
		+ \left(1-\eps^2\right) \cdot \sum_{i=1}^{n} \E\left[  f\left(R\cap[i]\right) - f\left(R\cap[i-1]\right)  \right]\\
		&\geq 
		(1-\eps^2)\cdot\E \left[f(R)   \right] \geq (1-\eps) \cdot c \cdot \OPT,
	\end{aligned}
\end{equation*}
where the first inequality is by Claims \ref{claim:psi_prob} and \ref{claim:FKG}, and the last inequality is due to \eqref{eq:Fx_lb}.
\end{proof}
The statement of the lemma follows from Claims \ref{claim:monotone} and \ref{claim:non_mon}.
\end{proof}

\section{Properties of Submodular Functions}
\label{app:sub_prop}

\begin{lemma}
	\label{lem:marginal}
	Let $f:2^I \rightarrow \mathbb{R}_{\geq 0}$ be a submodular function and let $S=\{s_1,\ldots, s_{\ell}\}\subseteq I$, $|S|=\ell$, such that 
	$f(\{s_1,\ldots , s_r\})=\max_{r-1<k\leq \ell} f(\{s_1,\ldots ,s_{r-1}\}\cup \{s_k\})$ for every $1\leq r \leq \ell$.  Also, let $E\in \mathbb{N}$ and $S_E= \{s_r ~|~1\leq r\leq \min\{E, \ell\} \}$. Then for every $i \in S\setminus S_E$ it holds that $f(S_E \cup \{i\}) -f(S_E)\leq \frac{f(S_E)}{E}$. 
	\end{lemma}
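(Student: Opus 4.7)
The plan is to exploit the greedy nature of the ordering combined with submodularity. First, observe that the statement is vacuous when $E \geq \ell$, since then $S_E = S$ and $S \setminus S_E = \emptyset$, so we may assume $E < \ell$. Fix any $i \in S \setminus S_E$; the goal is to show that the marginal gain $f(S_E \cup \{i\}) - f(S_E)$ is at most $f(S_E)/E$.

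The key observation is that for every $1 \leq r \leq E$, the marginal gain $f(S_r) - f(S_{r-1})$ of the $r$-th greedy step is at least $f(S_E \cup \{i\}) - f(S_E)$, where I write $S_r = \{s_1,\ldots,s_r\}$ (and $S_0 = \emptyset$). This chains two facts. By the greedy selection rule in the hypothesis, $s_r$ achieves the maximum of $f(S_{r-1} \cup \{s_k\})$ over $k > r-1$, and since $i \in S \setminus S_E \subseteq S \setminus S_{r-1}$, we have $i = s_k$ for some such $k$. Hence
\[
f(S_r) - f(S_{r-1}) \;\geq\; f(S_{r-1} \cup \{i\}) - f(S_{r-1}).
\]
By submodularity applied with $S_{r-1} \subseteq S_E$ and element $i \notin S_E$,
\[
f(S_{r-1} \cup \{i\}) - f(S_{r-1}) \;\geq\; f(S_E \cup \{i\}) - f(S_E).
\]
Combining gives the claimed bound $f(S_r)-f(S_{r-1}) \geq f(S_E \cup \{i\}) - f(S_E)$.

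Summing this inequality over $r = 1, \ldots, E$ telescopes the left-hand side, yielding
\[
f(S_E) - f(\emptyset) \;\geq\; E \cdot \bigl(f(S_E \cup \{i\}) - f(S_E)\bigr).
\]
Since $f$ is non-negative, $f(\emptyset) \geq 0$, and rearranging gives $f(S_E \cup \{i\}) - f(S_E) \leq f(S_E)/E$, as required. No step poses a real obstacle here; the only subtlety worth noting is that submodularity must be applied to the pair $S_{r-1} \subseteq S_E$ with the \emph{same} external element $i$, which is why we fix $i \in S \setminus S_E$ at the outset rather than allowing it to vary with $r$.
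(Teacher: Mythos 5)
Your proof is correct and follows essentially the same argument as the paper: both bound the target marginal $f(S_E\cup\{i\})-f(S_E)$ by each greedy increment $f(S_r)-f(S_{r-1})$ via submodularity and the greedy selection rule, telescope the sum over $r=1,\ldots,E$, and finish with non-negativity of $f(\emptyset)$. The only difference is presentational (you chain the inequalities upward from the greedy step, the paper averages downward from the target marginal), so there is nothing to flag.
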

\begin{proof}	
	If $E\geq \ell$ then $S\setminus S_E =\emptyset$ and the statement trivially holds.
	Otherwise,  for every $i\in S \setminus S_E$ it holds that
	\begin{equation}
	\begin{aligned}
	f(S_E &\cup\{i\}) -f(S_E) = \frac{1}{E} \sum_{r=1}^{E}\left(
	f(S_E\cup\{i\}) -f(S_E)\right)\\
	&\leq \frac{1}{E} \sum_{r=1}^{E}\left(
	f(\{s_1,\ldots ,s_{r-1}\cup\{i\}) -f(\{s_1,\ldots ,s_{r-1}\})\right) 
	\\
		&\leq \frac{1}{E} \sum_{r=1}^{E}\left(
	f(\{s_1,\ldots ,s_{r-1}\cup\{s_r\}) -f(\{s_1,\ldots ,s_{r-1}\})\right) \\
	&\leq \frac{1}{E} \left(f(S_E)-f(\emptyset) \right) \\
			&\leq \frac{f(S_E)}{E}. 
	\end{aligned}
	\end{equation} 
	The first inequality follows from the submodularity of $f$. 
	The  second inequality follows from $f(\{s_1,\ldots , s_r\})=\max_{r-1<k\leq \ell} f(\{s_1,\ldots ,s_{r-1}\}\cup \{s_k\})$ for every $1\leq r \leq \ell$.
\end{proof}

\begin{lemma}
	\label{lem:submodular_cup}
Let $f:2^I \rightarrow \mathbb{R}$ be a set function and $R\subseteq I$. Define $g:2^I\rightarrow \mathbb{R}$ by $g(S)=f(S\cup R)$ for any $S\subseteq I$. Then,
\begin{enumerate}
	\item If $f$ is submodular then $g$ is submodular.
	\item If $f$ is monotone then $g$ is monotone.
	\item If $f$ is modular then $g$ is modular.
\end{enumerate}
\end{lemma}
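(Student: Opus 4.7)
The plan is to verify each of the three properties by direct computation using the definition of $g$ and the corresponding property of $f$. The key observation is that for any $S \subseteq I$ and any element $i \in I$, the set $(S \cup \{i\}) \cup R$ equals $S \cup R$ if $i \in R$, and equals $(S \cup R) \cup \{i\}$ with $i \notin S \cup R$ if $i \notin R$. This dichotomy will handle both the submodular and modular cases cleanly.

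For monotonicity, if $S \subseteq T \subseteq I$ then $S \cup R \subseteq T \cup R$, so by monotonicity of $f$ we get $g(S) = f(S \cup R) \leq f(T \cup R) = g(T)$.

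For submodularity, I would take $S \subseteq T \subseteq I$ and $i \in I \setminus T$, and show
\[
g(S \cup \{i\}) - g(S) \geq g(T \cup \{i\}) - g(T).
\]
If $i \in R$ then $S \cup \{i\} \cup R = S \cup R$ and $T \cup \{i\} \cup R = T \cup R$, so both marginals vanish and the inequality is trivial. Otherwise $i \notin R$, hence $i \notin T \cup R \supseteq S \cup R$, and applying the submodularity of $f$ to the nested pair $S \cup R \subseteq T \cup R$ with the element $i$ yields exactly the desired inequality. The modular case is proved identically, except that submodularity of $f$ is replaced with equality of marginals, and the ``$\geq$'' becomes ``$=$''.

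I expect no real obstacle here; this is a routine check and the only subtlety is remembering to split on whether $i \in R$ (so that $i \notin S \cup R$ is legitimate before invoking the submodular/modular inequality of $f$). The proof is short enough that it could also be written uniformly for parts~1 and~3 in a single display, branching on $i \in R$ vs.\ $i \notin R$, with the monotone case handled as a one-line observation.
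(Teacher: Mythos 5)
Your proof is correct. The only subtlety in the statement is the one you identify: when verifying the marginal-value form of submodularity for $g$, the element $i$ may already lie in $R$, in which case you cannot directly invoke $f(S\cup R\cup\{i\})-f(S\cup R)\geq f(T\cup R\cup\{i\})-f(T\cup R)$ as an instance of $f$'s submodularity (since $i\in T\cup R$). Your case split handles this cleanly: for $i\in R$ both marginals of $g$ vanish, and for $i\notin R$ the nested pair $S\cup R\subseteq T\cup R$ with $i\notin T\cup R$ is a legitimate instance. The paper instead works with the equivalent lattice characterization $f(A)+f(B)\geq f(A\cup B)+f(A\cap B)$ applied to $A=R\cup S$, $B=R\cup T$, using $(R\cup S)\cup(R\cup T)=R\cup(S\cup T)$ and $(R\cup S)\cap(R\cup T)=R\cup(S\cap T)$; this avoids any case analysis. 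For modularity the paper also argues slightly differently, deducing it from part~1 applied to both $f$ and $-f$ rather than redoing the computation with equalities, but your direct version is equally valid. Both routes are routine verifications; the paper's is marginally shorter, while yours stays closer to the marginal-value definition the paper states as primary.
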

\begin{proof}~
	
\begin{enumerate}
	\item Assume $f$ is submodular. Let $S, T \subseteq I$. Then,
	\begin{align*}g(S) +g(T) &= f(R\cup S) +f(R\cup T) \\
	&\geq f \left( (R\cup S )\cup (T \cup R)\right) + f \left( (R\cup S )\cap (T \cup R)\right) \\
	&= f \left( R\cup (S \cup T)\right) + f \left( R\cup (S \cap T)\right)\\
	&=g(S\cup T)+ g(S\cap T )
	\end{align*}
	Thus $g$ is submodular. 
	\item Assume $f$ is monotone and let $S\subseteq T \subseteq I$. Then $R\cup S \subseteq R\cup T$ and therefore,
	$$g(S) =f(R\cup S )\leq f(R\cup T) = g(T).$$
	Thus $g$ is monotone.
	\item If $f$ is modular then both $f$ and $-f$ are submodular. Thus, by the first property both $g$ and $-g$ are submodular and therefore $g$ is modular. 
\end{enumerate}
\end{proof}
\section{The Mulitilinear  Extension of a Modular Function}
\label{sec:multilinear_of_linear}
In this section we show several well known properties of the 
multilinear extension of a linear function.

\begin{lemma}
\label{lem:multilinear_of_linear}
Let $f:2^I \rightarrow \mathbb{R}$ such that $f(S)= a + \sum_{i\in S} \bar{p}_i$ where $a\in \mathbb{R}$ and $\bar{p}\in \mathbb{R}^I$. 
Then, $F$ the  multilinear extension of $f$  satisfies
\begin{equation*}
\forall \bx \in [0,1]^I:~~F(\bx)=a+ \bx\cdot \bar{p}
\end{equation*}
\end{lemma}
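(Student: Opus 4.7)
The plan is to unfold the definition of the multilinear extension and apply linearity of expectation. Recall from the related work section that $F(\bx)=\E_{S\sim \bx}[f(S)]$, where $S\sim \bx$ means $S\subseteq I$ is the random set obtained by independently including each $i\in I$ with probability $\bx_i$. In particular, $\Pr(i\in S)=\bx_i$ for every $i\in I$.

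First, I would rewrite $f(S)$ using indicator variables: since $f(S)=a+\sum_{i\in S}\bar{p}_i=a+\sum_{i\in I}\bar{p}_i\cdot \one_{i\in S}$, the function is a deterministic affine combination of the indicator random variables $\one_{i\in S}$. Taking expectations over $S\sim \bx$ and invoking linearity of expectation yields
\begin{equation*}
F(\bx)=\E_{S\sim\bx}\left[a+\sum_{i\in I}\bar{p}_i\cdot \one_{i\in S}\right]=a+\sum_{i\in I}\bar{p}_i\cdot \E_{S\sim\bx}\left[\one_{i\in S}\right]=a+\sum_{i\in I}\bar{p}_i\cdot \Pr(i\in S).
\end{equation*}
Substituting $\Pr(i\in S)=\bx_i$ then gives $F(\bx)=a+\sum_{i\in I}\bar{p}_i\cdot \bx_i=a+\bx\cdot \bar{p}$, as required.

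There is essentially no obstacle here; the argument is a direct computation that relies only on the definition of the multilinear extension, the independence of the inclusion events, and linearity of expectation. The statement merely makes explicit the well-known observation that when $f$ is modular its multilinear extension coincides with the natural linear extension, which is why Lemma~\ref{lem:instance_fptas} can be used directly (as an exact optimizer of $F$ over $P$ up to a $(1-\eps)$ factor) in the modular case of Lemma~\ref{lem:restricted_approx}.
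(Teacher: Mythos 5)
Your proof is correct and follows exactly the same route as the paper's: rewrite $f(S)=a+\sum_{i\in I}\bar{p}_i\cdot\one_{i\in S}$, apply linearity of expectation, and substitute $\E_{S\sim\bx}[\one_{i\in S}]=\Pr(i\in S)=\bx_i$. There is nothing to add or fix.
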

\begin{proof}
	We use the notation $$
	\forall S\subseteq I, ~i\in I:~~~~\one_{i\in S} =\begin{cases}1 & i\in S \\ 0 &i\not\in S\end{cases}.$$
	Let $ \bx \in [0,1]^I$, then 
\begin{equation*}
\begin{aligned}
F(\bx)&= \E_{S\sim \bx} \left[ f(S)\right]\\
&= \E_{S\sim \bx} \left[ a + \sum_{i\in S} \bar{p}_i \right]\\
&=a+\E_{S\sim \bx} \left[ \sum_{i\in I}  \one_{i\in S} \cdot  \bar{p}_i \right]\\
&=a+\sum_{i\in I}  \bar{p}_i \cdot \E_{S\sim \bx} \left[ \one_{i\in S}  \right]\\
&= a+\sum_{i\in I} \bx_i \cdot \bar{p}_i= a+ \bx\cdot \bar{p}
\end{aligned}
\end{equation*}
\end{proof}

\cout{
\section{Chernoff Bound}

In the analysis of the algorithm we use the following  Chernoff-like bounds.
\begin{lemma}[Theorem 3.1 in \cite{RKPS06}]
	\label{lem:chernoff}
	Let $X= \sum_{i=1}^{n} X_i \cdot \lambda_i$ where $(X_i)_{i=1}^{n}$ is a sequence of independent Bernoulli random variable and $\lambda_i \in [0,1]$ for $1\leq i \leq n$. Then
	for any $\eps \in (0,1)$ and $\eta\geq \E[X]$ it holds that
	$$\Pr\left(X > (1+\eps) \eta \right)< \exp\left(- \frac{\eps^2}{3} \eta\right)$$
\end{lemma}
}





\end{document}